\theoremstyle{plain}
\newtheorem{theorem}{Theorem}
\newtheorem{proposition}{Proposition}
\newtheorem{lemma}{Lemma}
\theoremstyle{definition}
\newtheorem{definition}{Definition}
\newtheorem{example}{Example}
\theoremstyle{remark}
\def\Re{\mbox{$\text{\rm Re}\;$}}
\def\Im{\mbox{$\text{\rm Im}\;$}}
\def\cls{\mbox{$\,\#\,$}}
\def\mr+{\mbox{$\text{mr}_{+}$}}
\begin{document}
	\IEEEoverridecommandlockouts
	
	\title{When Small Gain Meets Small Phase}
	\author{Di Zhao, Wei Chen, and Li Qiu%
		%\thanks{*This work was supported in part by ...}
		\thanks{
			Di Zhao is with the Department of Control Science \& Engineering and Shanghai Institute of Intelligent Science and Technology, Tongji University, Shanghai, China (e-mail: dzhao925@tongji.edu.cn)\\
			Wei Chen is with the Department of Mechanics and Engineering Science \& Beijing Innovation Center for Engineering Science and Advanced Technology, Peking University, Beijing, China (e-mail: w.chen@pku.edu.cn)\\
			Li Qiu is with the Department of Electronic and Computer Engineering, The Hong Kong University of Science and Technology, Clear Water Bay, Kowloon, Hong Kong S.A.R., China (e-mail: eeqiu@ust.hk)
		}
	}
	
	%\thanks{
	%$\sharp$: Department of Electronic and Computer Engineering,
	%Hong Kong University of Science and Technology, Clear Water Bay, Kowloon, Hong Kong. Email: \{kdingaa, eesling\}@ust.hk.}
	%}
	
	\maketitle
	\IEEEpeerreviewmaketitle
	%\IEEEproof
	%\title{The Cascaded Two-Port Netork Stablization}
	
	\begin{abstract}
		
		In this paper, we investigate the feedback stability of multiple-input multiple-output linear time-invariant systems with combined gain and phase information. To begin with, we explore the stability condition for a class of so-called easily controllable systems, which have small phase at low frequency ranges and low gain at high frequency. Next, we extend the stability condition via frequency-wise gain and phase combination, based on which a mixed small gain and phase condition with necessity, called a small vase theorem, is then obtained. Furthermore, the fusion of gain and phase information is investigated by a geometric approach based on the Davis-Wielandt shell. Finally, for the purpose of efficient computation and controller synthesis, we present a bounded \& sectored real lemma, which gives state-space characterization of combined gain and phase properties based on a triple of linear matrix inequalities.
		
	\end{abstract}
	\begin{IEEEkeywords}
		Small gain theorem, small phase theorem, mixed small gain and phase, Davis-Wielandt shell, bounded \& sectored real lemma.
	\end{IEEEkeywords}
	
	\section{Introduction}
	In the classical frequency domain analysis of single-input single-output (SISO) linear time-invariant (LTI) systems, the magnitude (gain) response
	and phase response go hand in hand. In the era of information and networks, more and more emphasis has been put on the research about multiple-input multiple-output (MIMO) systems. For such systems, the concept of magnitudes has been naturally extended and intensively studied by using the singular values of matrices, while the concept of phases is not paid enough attention to. There have been attempts to characterize phases of MIMO systems, including \cite{Owens1984NumericalR,JieChen1998gainphase,Tits1999RobustPhase,wei2021phaseLTI,chao2020nonlinear}. We refer readers to \cite{wei2021phaseLTI,chao2020nonlinear} for detailed literature related to developments of phases of systems. Recently, \cite{wei2021phaseLTI} has proposed a definition of system phase for sectorial MIMO LTI systems, based on the canonical angles of matrices \cite{Johnson1974209,WANG2020PhaseMath}.  According to  \cite{wei2021phaseLTI}, for a sectorial LTI system with $n$ inputs and $n$ outputs, $n$ frequency-dependent phases can be defined, as the counterpart to the $n$ magnitudes defined via its singular values. Moreover, a small phase theorem, as a counterpart to the well-known small gain theorem \cite{Zames1966SmallGain}, for the stability of feedback interconnected LTI systems is obtained, which states that the feedback system is stable if the sum of system phases in the loop  (or, simply loop-phase)  is less than $\pi$. 
	
	In many real-world problems, it is conservative or even unrealistic to carry out stability analysis with pure gain or phase information. For example, consider a matrix second-order system with collocated actuators and
	speed and position sensors \cite{balas1982,gardiner1992stabilizing}, which is described by the transfer function
	\begin{align}\label{eq:example}
		G(s)=(Hs+H)(Ms^2+Cs+K)^{-1}H^T,
	\end{align}
	where $M$ and $K$ are positive definite and $C$ is accretive. The phase of such a system, if existing, can be very large at the high frequency range and so be its gain at the low (or simply, its DC gain). In this case, neither of the small gain or small phase theorem is applicable to analysing the feedback stability between a pair of such systems. 
	On the other hand, it has been shown in \cite{Hara2007EasyControl} that an easily controllable system usually has small phase at low frequency range and small gain at high frequency range. Motivated by the above issues, in this paper, we propose to analyze the feedback stability of MIMO LTI systems with a mixed gain/phase approach. To start with, we first develop a mixed small gain and phase theorem for feedback stability, where the systems in a feedback loop satisfy a phase condition at low frequency ranges and a gain condition at high frequency. Such a result is then generalized by frequency-wise combination of gain and phase. Later on, we refine the result and obtain a stability condition with necessity, namely a small vase theorem. Beyond direct combination of gain and phase information in the frequency domain, we further develop stability conditions using a geometric approach and introduce concepts including constrained phases and constrained gains. Such a method is based on the Davis-Wielandt (DW) shell \cite{davis1968shell,lestas2012large}, which is a 3D generalization to the numerical range. Finally, we derive a bounded \& sectored real lemma, which gives state-space conditions for mixed gain-phase bounded systems in terms of linear matrix inequalities (LMIs). Technically, the above results are established on the basis of some well-known theories and tools, including the homotopy methods, the feedback stability conditions via integral quadratic constraints (IQCs) \cite{Megretski1997IQC}, and the generalized KYP lemma \cite{Hara2005GeneralizedKYP}. 
	
	There have been early studies on combining gain and phase information for analysis and control. The dissipativity theory \cite{Willems1972dissipativity} provides a systematic way to combine the small gain and passivity theories in a unified framework. Later on, a frequency-domain combination of small gain and passivity properties has been investigated in \cite{brian2007mix}, giving rise to several feedback stability criteria for LTI systems. The multi-variable gain-phase and sensitivity relations have been established in \cite{JieChen1998gainphase}, as an extension of the classical Bode integral relations to multi-variable systems. Our proposed study differs from the previous research mainly in that we utilize take advantages of the phase notion that has generalized the system passivity (e.g., see \cite{brian2007mix}) and is distinguished from the other definition of phases (e.g., see \cite{JieChen1998gainphase}). 
	
	The rest of the paper is organized as follows. In Section~\ref{sec:pre}, we introduce the basic notation on matrices and systems, and preliminary results on the feedback stability conditions involving gains and phases. Later in Section~\ref{sec:mixed thm}, we develop several stability conditions by directly combining frequency-wise gain and phase information. Furthermore, we take advantages of DW-Shells to establish stability conditions in Section~\ref{sec:DW shell}. Finally in Section~\ref{sec:bounded_sectored_lemma}, we develop a bounded \& sectored real lemma for the state-space characterization of mixed phase and gain properties.
	
	\section{Preliminary}\label{sec:pre}
	
	\subsection{Phases of Matrices}\label{subsec:matrix phase}
	Let $\mathbb{F} = \mathbb{R}$ or $\mathbb{C}$ be the real or complex field,  and ${\mathbb F}^n$ be the linear space of $n$-tuples of $\mathbb{F}$ over the field $\mathbb{F}$. The identity matrix in $\mathbb{C}^{n\times n}$ is denoted as $I_n$. The Euclidean norm of a vector $x\in{\mathbb F}^n$ is denoted by $\|x\|$. The phase of a complex number $c$ is denoted by $\angle c\in(-\pi,\pi]$, the real part is by $\Re\,c$, the imaginary part is by $\Im\,c $, and the conjugate is by $\bar{c}$. The singular values of a matrix $A\in{\mathbb C}^{n\times n}$ are denoted by $\sigma_k(A)$, $k=1,2,\dots  ,n$, satisfying
	$$\bar{\sigma}(A)=\sigma_1(A)\geq\sigma_2(A)\geq\cdots\geq\sigma_n(A) = \underline{\sigma}(A),$$
	and the spectral norm of $A$ is by $\|A\|:=\bar{\sigma}(A)$. The transpose and conjugate transpose of $X\in\mathbb{C}^{m\times n}$ is denoted as $X^T$ and $X^*$, respectively. The Kronecker product of $X\in\mathbb{C}^{m\times n}$ and $Y\in\mathbb{C}^{p\times q}$ is denoted as $X\otimes Y$. 
	
	In what follows, we introduce the concept of matrix phases, as the counterpart to the singular values. A detailed development towards the matrix phase is referred to \cite{WANG2020PhaseMath}. 
	A matrix $A\in{\mathbb C}^{n\times n}$ is said to be \textbf{sectorial} if its numerical range
	$${\mathcal W}(A):=\{x^*Ax:~x\in{\mathbb C}^n,~\|x\|=1\}$$ does not contain $0$. It is shown in \cite{Horn1959Unitary} that a sectorial matrix is congruent, unique up to a permutation, a diagonal unitary matrix, i.e., there exists a nonsingular matrix $T$ and a diagonal unitary matrix $D$ such that $A=T^*DT$. Such a factorization is called a sectorial decomposition. We define the phases of $A$, denoted by
	$$\bar{\phi}(A):=\phi_1(A)\geq \phi_2(A)\geq \cdots \geq \phi_n(A)=:\underline{\phi}(A),$$
	to be the phases of the eigenvalues (i.e., diagonal elements) of $D$ so that $\bar{\phi}(A)-\underline{\phi}(A)<\pi$ and $$\phi_c(A):=\dfrac{\bar{\phi}(A)+\underline{\phi}(A)}{2},$$ called the phase center of $A$, lies in $(-\pi,\pi]$. Denote by $\Psi(A):=[\underline{\phi}(A),\bar{\phi}(A)]$ the phase sector of $A$ that characterizes its phase spread on the complex plane. A sectorial matrix $A$ is called accretive if $\Psi(A)\subset(-\pi/2,\pi/2)$. 
	
	A matrix is said to be semi-sectorial if the interior of its numerical range ${\mathcal W}(A)$ does not contain 0. A semi-sectorial matrix admits the following generalized sectorial decomposition \cite{Furtado2001LAA}:
	$$A = T^*\begin{bmatrix}
		0_{n-r} & 0 & 0\\
		0 & D & 0\\
		0 & 0 & E
	\end{bmatrix} T,$$ 
	where 
	$$D = \text{diag}(e^{j\theta_1},\dots,e^{j\theta_m}),~E=\text{diag}\left(e^{j\theta_0}\begin{bmatrix}
		1 & 2\\ 0 & 1
	\end{bmatrix},\dots,e^{j\theta_0}\begin{bmatrix}
		1 & 2\\ 0 & 1
	\end{bmatrix}\right)\in\mathbb{C}^{(r-m)\times (r-m)},$$
	and $\bar{\phi}(A)= \theta_0+\pi/2 \geq \theta_1 \geq \cdots \geq \theta_m \geq \theta_0-\pi/2 = \underline{\phi}(A)$. 
	
	%\textcolor{blue}{Let $\alpha,\beta\in(-\pi,\pi]$ with $0<\beta-\alpha\leq\pi$. A matrix $A\in{\mathbb C}^{n\times n}$ is said to be sectorial in $[\alpha,\beta)$, denoted by $A\in{\mathcal C}[\alpha,\beta)$,  if $\angle{\mathcal W}(A)\subset[\alpha,\beta)$. It is said to be positive-real if $A\in{\mathcal C}(-\pi/2,\pi/2)$, to be positive-imaginary if $A\in{\mathcal C}[0,\pi)$, and to be negative-imaginary if $A\in{\mathcal C}(-\pi,0]$.}
	%
	%\textcolor{blue}{For a sectorial matrix $A\in{\mathcal C}[\alpha,\beta]$, its \textbf{canonical phases} \cite{Johnson1974209} can be defined as the half of the phases of the eigenvalues of $AA^{-*}$ based on an appropriate separation of the half plane, i.e.,
	%	$$\phi_i(A):=\frac{1}{2}[\angle\lambda_i(e^{-j(\pi+2\alpha)}AA^{-*})+\pi+2\alpha],~i=1,2,\dots,n,$$
	%	in a non-increasing order.} In this case, it holds that
	%$\bar{\phi}(A):=\phi_1(A)\leq \beta$, $\underline{\phi}(A):=\phi_n(A)\geq \alpha$, and $A$ admits a \textbf{sectorial decomposition} \cite{Johnson1974209} via
	%$$A=T^*DT,$$
	%where $T$ is nonsingular and $D=\text{diag}\{e^{j\phi_1(A)},e^{j\phi_2(A)},\dots,e^{j\phi_n(A)}\}$. Denote by $\bar{\phi}(A):=\max\{|\phi_1(A)|,|\phi_n(A)|\}$.

	\subsection{Phases of Systems}
	Denote by $\mathcal{RH}_\infty$ the set of all stable real-rational transfer functions. For a stable MIMO LTI system $G\in\mathcal{RH}_\infty^{n\times n}$, it is said to be (frequency-wise) sectorial if $G(j\omega)$ is sectorial for all $\omega\in[-\infty,\infty]$. For a sectorial system $G$, the vector of phases $\phi(G(j\omega))$, which is called the phase response of $G$, is element-wise continuous with $\omega\in[-\infty,\infty]$. 
	Similar to the role of the largest singular value $\bar{\sigma}(G(j\omega))$ in the analysis of feedback stability via gains, the phase sector $\Psi(G(j\omega))=[\underline{\phi}(G(j\omega)),\bar{\phi}(G(j\omega))]$ is essential in the analysis via phases, as is shown momentarily. Moreover, from the conjugate symmetric property of a real rational transfer matrix over imaginary axis, the phase information of such a system $G$ is completely contained in its half-frequency spectrum, namely, $\{{\phi}(G(j\omega)):~\omega\in[0,\infty]\}$. 
	
	{Denote a semi-circle on the complex plane with center $jq$ and radius $\epsilon\geq0$ by $SC_\epsilon(jq):=\{s\in\mathbb{C}:~|s-jq|=\epsilon,~\Re\,s >0\}$. Given $-\infty\leq\omega_{-\infty}<\omega_\infty\leq\infty$, a contour parameterized by an ordered set $j\Omega=\{j\omega_1,\dots,j\omega_m\}$ with $\omega_{-\infty}+\epsilon<\omega_1<\omega_2<\cdots<\omega_m<\omega_{\infty}-\epsilon$ and $\epsilon\geq0$ is defined as
		\begin{multline}\label{eq:contour}
			CT_\epsilon(j\Omega):=j[\omega_{-\infty},\omega_1-\epsilon]\cup SC_\epsilon(j\omega_1)
			\cup j[\omega_1+\epsilon,\omega_2-\epsilon]\cup SC_\epsilon(j\omega_2)\cup\\
			\cdots \cup j[\omega_{m-1}+\epsilon,\omega_m-\epsilon]\cup SC_\epsilon(j\omega_m)\cup j[\omega_m+\epsilon,\omega_\infty].
		\end{multline}
		In particular, the contour is denoted as $CT^\infty_\epsilon(j\Omega)$ when $\omega_{-\infty}=-\infty$ and $\omega_\infty=\infty$. }
	
	An $n\times n$ real rational proper system $G$ is said to be semi-stable if it may have poles on the imaginary axis but no poles in the open right half plane. Let $j\Omega_p$ be the set of poles and $j\omega_z$ be the set of zeros of $G$ on the imaginary axis. 
	Such a system $G$ is said to be frequency-wise semi-sectorial over $(\omega_a,\omega_b)$ if 
	\begin{itemize}
		\item [(a)]	$G(j\omega)$ is semi-sectorial for all $\omega\in(\omega_a,\omega_b)\setminus \Omega_p$; and
		\item [(b)] there exists an $\epsilon^*>0$ such that for all $\epsilon\in(0,\epsilon^*]$, $G(s)$ has a constant rank and is semi-sectorial along the contour $CT_\epsilon[(j\Omega_p\cup j\Omega_z) \cap j(\omega_a,\omega_b)]$.
	\end{itemize}
	%It is well known that the ${\mathcal H}_\infty$ norm of $G$ is defined  by 
	%$$\|G\|_\infty:=\sup_{\omega\in[-\infty,\infty]}\bar{\sigma}(G(j\omega)),$$
	%and similarly its $\Phi_\infty$ phase sector is defined by
	%$$\Phi_\infty(G):=\left[\inf_{\omega\in[-\infty,\infty]}\underline{\phi}(G(j\omega)),\sup_{\omega\in[-\infty,\infty]}\bar{\phi}(G(j\omega))\right].$$
	The range $(\omega_a,\omega_b)$ will be omitted if $G$ is frequency-wise semi-sectorial over $[-\infty,\infty]$. Throughout this study, assume also that the phase center $\phi_c(G(s))$ satisfies $\phi_c(G(0))=0$ (or $\phi_c(G(\epsilon))=0$ if $0$ is a pole) and $\phi_c(G(s))=0$ is continuously defined along any contour $s\in CT_\epsilon(j\Omega)$. 
	
	\subsection{Small Gain and Small Phase Theorems}\label{subsec:small_gainPhase}
	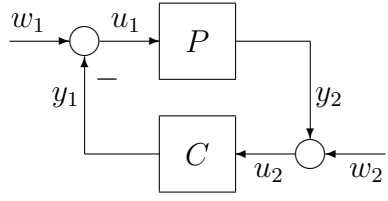
\begin{figure}[!t]
		\centering
		\setlength{\unitlength}{1mm}
		\begin{picture}(50,25)
			%\thicklines
			\put(0,20){\vector(1,0){8}} \put(10,20){\circle{4}}
			\put(12,20){\vector(1,0){8}} \put(20,15){\framebox(10,10){$P$}}
			\put(30,20){\line(1,0){10}} \put(40,20){\vector(0,-1){13}}
			\put(38,5){\vector(-1,0){8}} \put(40,5){\circle{4}}
			\put(50,5){\vector(-1,0){8}} \put(20,0){\framebox(10,10){$C$}}
			\put(20,5){\line(-1,0){10}} \put(10,5){\vector(0,1){13}}
			\put(5,10){\makebox(5,5){$y_1$}} \put(40,10){\makebox(5,5){$y_2$}}
			\put(0,20){\makebox(5,5){$w_1$}} \put(45,0){\makebox(5,5){$w_2$}}
			\put(13,20){\makebox(5,5){$u_1$}} \put(32,0){\makebox(5,5){$u_2$}}
			\put(10,10){\makebox(6,10){$-$}}
		\end{picture}
		\caption{The feedback system $P\,\#\,C$.} \label{fig:feedback}\vspace*{-10pt}
	\end{figure}
	Suppose $P$ and $C$ are $n\times n$ real rational transfer matrices. As shown in Fig.~\ref{fig:feedback}, the feedback interconnection of $P$ and $C$ is said to be stable if the Gang of Four transfer matrix
	$$P\,\#\,C:=\begin{bmatrix}I \\ P\end{bmatrix}(I+CP)^{-1}\begin{bmatrix}I & C\end{bmatrix}$$
	is stable, i.e., $P\,\#\,C\in\mathcal{RH}_\infty^{2n\times 2n}$. For notational simplicity, the feedback system is also denoted as $P\,\#\,C$. 
	
	The well-known small gain theorem \cite[Theorem~8.1]{Zhou1998Essential} can be stated in the following form. 
	\begin{lemma}\label{lem:smallGain}
		Let $P$ and $C\in\mathcal{RH}_\infty^{n\times n}$. Then the feedback system $P\,\#\,C$ is stable if
		$$\bar{\sigma}[P(j\omega)]\bar{\sigma}[C(j\omega)]<1,$$
		for all $\omega\in[-\infty,\infty]$. 
	\end{lemma}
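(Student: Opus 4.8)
The plan is to reduce the feedback stability of $P\,\#\,C$ to a Nyquist‑type determinant condition and then verify that condition by a homotopy argument driven by the pointwise small‑gain inequality. First I would observe that, since $P,C\in\mathcal{RH}_\infty^{n\times n}$, the product $CP$ is stable, and each block of $P\,\#\,C$ is obtained from $(I+CP)^{-1}$ by left/right multiplication with the stable matrices $I$, $P$, $C$; hence $P\,\#\,C\in\mathcal{RH}_\infty^{2n\times 2n}$ holds if and only if $(I+CP)^{-1}\in\mathcal{RH}_\infty^{n\times n}$. Well‑posedness at $s=\infty$ is immediate from $\bar{\sigma}[C(\infty)P(\infty)]\le\bar{\sigma}[C(\infty)]\bar{\sigma}[P(\infty)]<1$, so $I+C(\infty)P(\infty)$ is nonsingular. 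As $CP$ has no poles in the closed right half plane, $(I+CP)^{-1}\in\mathcal{RH}_\infty$ is then equivalent to $\det\bigl(I+C(s)P(s)\bigr)\neq 0$ for every $s$ in the closed right half plane together with $s=\infty$.

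Next I would establish this determinant condition by homotopy. For $\tau\in[0,1]$ set $d_\tau(s):=\det\bigl(I+\tau C(s)P(s)\bigr)$. On the imaginary axis and at infinity the eigenvalues of $\tau C(j\omega)P(j\omega)$ satisfy $|\lambda_i|\le\tau\,\bar{\sigma}[C(j\omega)P(j\omega)]\le\tau\,\bar{\sigma}[C(j\omega)]\bar{\sigma}[P(j\omega)]<1$, so $d_\tau(j\omega)=\prod_i\bigl(1+\lambda_i\bigr)$ lies in the open right half plane; in particular $d_\tau$ never vanishes on the extended imaginary axis. Since $\tau CP$ is stable for every $\tau$, the number of closed‑right‑half‑plane zeros of $d_\tau$ equals the winding number of $d_\tau$ about the origin along the Nyquist contour, which is integer‑valued and depends continuously on $\tau$, hence is constant. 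At $\tau=0$ we have $d_0\equiv 1$ with winding number $0$; therefore $d_1(s)=\det\bigl(I+C(s)P(s)\bigr)$ has no zero in the closed right half plane or at $s=\infty$, which is exactly the required condition.

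Equivalently, one can bypass the homotopy and invoke the generalized (multivariable) Nyquist criterion directly: the eigenvalue bound above shows the plot $\omega\mapsto\det\bigl(I+C(j\omega)P(j\omega)\bigr)$ stays in the open right half plane and hence encircles the origin zero times, while $CP\in\mathcal{RH}_\infty$ contributes no open‑right‑half‑plane poles; the Nyquist theorem then gives no closed‑right‑half‑plane zeros. (This also matches the IQC viewpoint used later in the paper, the relevant constraint being the constant one associated with the contractive pair $(\bar{\sigma}[P],\bar{\sigma}[C])$.)

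The main obstacle to flag is that the hypothesis only controls the gain on the imaginary axis, so one cannot simply apply the maximum modulus principle to obtain $\bar{\sigma}[C(s)P(s)]<1$ throughout the right half plane and then conclude by a contraction/Neumann‑series argument: the pointwise product $\bar{\sigma}[C(j\omega)]\bar{\sigma}[P(j\omega)]$ can be below $1$ at every frequency while $\|C\|_\infty\|P\|_\infty\ge 1$. This is precisely why the winding‑number/homotopy route is the correct one here; the remaining work is purely the bookkeeping of imaginary‑axis poles and zeros (of which there are none, since $P,C\in\mathcal{RH}_\infty$) and the behaviour at infinity, both of which are handled above.
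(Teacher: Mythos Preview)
The paper does not supply its own proof of this lemma; it simply records the small gain theorem as a known result and cites \cite[Theorem~8.1]{Zhou1998Essential}. So there is no ``paper's proof'' to compare against, and your homotopy/Nyquist argument is a perfectly reasonable way to derive the statement from scratch.

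That said, one step in your write-up is incorrect as stated. From $|\lambda_i|<1$ you conclude that each factor $1+\lambda_i$ lies in the open right half plane, which is true; but you then assert that the \emph{product} $d_\tau(j\omega)=\prod_i(1+\lambda_i)$ also lies in the open right half plane. This fails for $n\ge 2$: each factor can have argument arbitrarily close to $\pm\pi/2$ (take $\lambda_i$ near the unit circle with $1+\lambda_i$ close to the origin), so the arguments add and the product can land in the left half plane. For a concrete instance, $\lambda_1=\lambda_2=-0.9+0.4j$ have $|\lambda_i|<1$, yet $(1+\lambda_1)(1+\lambda_2)=(0.1+0.4j)^2=-0.15+0.08j$.

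Fortunately your homotopy route does not need this: all you require is $d_\tau(j\omega)\neq 0$, which follows immediately from $|\lambda_i|<1\Rightarrow 1+\lambda_i\neq 0$. With that correction the winding-number continuity argument goes through unchanged. Your ``equivalently'' paragraph, however, does rely on the erroneous open-right-half-plane claim to read off zero encirclements directly, so that shortcut should be dropped; the homotopy from $\tau=0$ is what actually pins down the winding number.
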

	
	A counterpart to the above result, from the perspective of phases of systems, has been obtained in \cite{wei2021phaseLTI}. 
	\begin{lemma}\label{lem:smallPhase}
		Let $P$ be {semi-stable frequency-wise semi-sectorial with $j\Omega_p$ being the set of poles on the imaginary axis} and $C\in\mathcal{RH}_\infty^{n\times n}$ be frequency-wise sectorial. Then $P\,\#\,C$ is stable if
		$$\bar{\phi}(P(j\omega))+\bar{\phi}(C(j\omega))<\pi~~\text{and}~~\underline{\phi}(P(j\omega))+\underline{\phi}(C(j\omega))>-\pi,$$
		for all $\omega\in[0,\infty]\setminus\Omega_p$. 
	\end{lemma}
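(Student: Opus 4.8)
The plan is to reduce stability of $P\,\#\,C$ to the generalized Nyquist criterion for $d(s):=\det(I+C(s)P(s))$ on an indented contour, and then to secure its two ingredients --- that $d$ has no zero on the contour and that its image winds zero times about the origin --- by a pointwise matrix-phase estimate and a homotopy. Since $C\in\mathcal{RH}_\infty^{n\times n}$ and $P$ is semi-stable, the loop gain $CP$ has no pole in the open right half plane (and $C$, being frequency-wise sectorial, no imaginary-axis zero), so $P\,\#\,C\in\mathcal{RH}_\infty^{2n\times2n}$ holds once $d(s)\neq0$ along the Nyquist $D$-contour $\Gamma_\epsilon$ --- the imaginary axis indented into the right half plane by the semicircles $SC_\epsilon(j\omega_k)$ around each imaginary pole $j\omega_k\in j\Omega_p$ of $P$ (exactly the semicircles in the definition of frequency-wise semi-sectoriality), closed at infinity --- and the image $d(\Gamma_\epsilon)$ has winding number $0$ about the origin.

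First I would isolate the underlying matrix fact: if $A$ is semi-sectorial, $B$ sectorial, $\bar\phi(A)+\bar\phi(B)<\pi$, and $\underline\phi(A)+\underline\phi(B)>-\pi$, then $I+BA$ is nonsingular. If $(I+BA)x=0$ with $x\neq0$, left-multiplying by $B^{-1}$ (sectorial matrices are nonsingular) gives $(B^{-1}+A)x=0$, so $x^*B^{-1}x=-x^*Ax$. The sectorial decomposition of $B$ makes $x^*B^{-1}x\neq0$ with $\arg(x^*B^{-1}x)\in-\Psi(B)$, hence $x^*Ax\neq0$; and the generalized sectorial decomposition of $A$ gives $\arg(x^*Ax)\in\Psi(A)$ (its $2\times2$ blocks $e^{j\theta_0}\!\left[\begin{smallmatrix}1&2\\0&1\end{smallmatrix}\right]$, whose numerical range is a disk through the origin inside the half-plane rotated by $\theta_0$, respect this). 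Then $x^*B^{-1}x=-x^*Ax$ forces $\beta_A+\beta_B\equiv-\pi\pmod{2\pi}$ for some $\beta_A\in\Psi(A)$, $\beta_B\in\Psi(B)$, contradicting the two strict inequalities, which keep $\beta_A+\beta_B$ strictly inside $(-\pi,\pi)$. Taking $A=P(s)$ and $B=C(s)$ then gives $d(s)\neq0$ everywhere on $\Gamma_\epsilon$: on the imaginary-axis portions this is the hypothesis (and its conjugate for $\omega<0$), and on the indentation semicircles it follows, for $\epsilon$ small, from clause (b) of frequency-wise semi-sectoriality (constant rank and semi-sectoriality of $P$ along the contour) and continuity of $C$ and of the phases.

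Next I would handle the winding number with the homotopy $C_\tau:=\tau C$, $\tau\in(0,1]$. From $C=T^*DT$ one has $\tau C=(\sqrt{\tau}\,T)^*D(\sqrt{\tau}\,T)$, the same diagonal unitary $D$, so $\tau C$ has the same phases as $C$; thus $(P,C_\tau)$ satisfies all hypotheses and, by the previous step, $d_\tau(s):=\det(I+C_\tau(s)P(s))$ is zero-free on $\Gamma_\epsilon$ for every $\tau\in(0,1]$. Since $(\tau,s)\mapsto d_\tau(s)$ is continuous and zero-free on $(0,1]\times\Gamma_\epsilon$, the winding number of $d_\tau$ about $0$ is constant in $\tau$; as $\tau\to0^+$, $d_\tau\to1$ uniformly on the compact contour, so the winding number is $0$, hence $0$ at $\tau=1$, and the generalized Nyquist criterion yields $P\,\#\,C\in\mathcal{RH}_\infty^{2n\times2n}$. (One may instead argue via integral quadratic constraints: the phase-sum conditions are precisely what lets one choose a continuous $\theta(\omega)$ making $e^{-j\theta}P$ and $e^{j\theta}C$ frequency-wise accretive, giving a bounded self-adjoint multiplier, with the IQC theorem's homotopy in place of the winding count.)

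The main obstacle is the imaginary-axis poles $j\Omega_p$ of $P$: the phase-sum inequalities are only imposed on $[0,\infty]\setminus\Omega_p$ and can degrade to equalities in the limit, while $P(s)$ blows up along the indentation semicircles, so both the non-vanishing of $d$ and the winding contribution of each semicircle require a careful limiting analysis as $\epsilon\to0$. This is exactly where clause (b) of frequency-wise semi-sectoriality and the normalization of the phase center (that $\phi_c(P(s))=0$ is continuously defined along $CT_\epsilon$) do the essential work; the Nyquist reduction, the matrix estimate, and the scalar homotopy are comparatively routine.
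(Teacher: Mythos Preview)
The paper does not prove this lemma; it is quoted as the small phase theorem from \cite{wei2021phaseLTI} and used as a black box. So there is no in-paper proof to compare against directly. That said, the paper's own proofs of the extensions Theorems~\ref{thm:mixture} and~\ref{thm:smallvase} (which contain this lemma as a special or limiting case) proceed quite differently from your sketch: they build an explicit frequency-dependent IQC multiplier $\Pi(s)$ of the rotated-passivity type, verify the two quadratic inequalities in \eqref{eq:iqc_thm1}, and then invoke the IQC stability theorems of Khong to exclude closed-loop poles on or to the right of the indented contour, with a separate determinant/accretivity argument (the display around \eqref{eq:pf_thm1_rotate}) to rule out poles exactly at $j\Omega_p$.

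Your route---Nyquist criterion on the indented contour, a direct matrix-phase argument for the nonsingularity of $I+BA$, and the scalar homotopy $C_\tau=\tau C$ to pin the winding number---is correct and more self-contained: it needs no external IQC machinery, and the observation that $\tau C$ shares the sectorial decomposition (hence the phases) of $C$ makes the homotopy transparent. The IQC/multiplier approach the paper uses, on the other hand, is what lets it glue phase-type and gain-type constraints together frequency by frequency in Theorems~\ref{thm:mixture}--\ref{thm:smallvase}, so it scales more naturally to the mixed setting. You also correctly flag the only genuinely delicate point---extending the strict phase-sum inequalities from $[0,\infty]\setminus\Omega_p$ onto the indentation semicircles---and correctly point to clause~(b) of frequency-wise semi-sectoriality and the continuity of $\phi_c$ along $CT_\epsilon$ as the ingredients; the paper's Theorem~\ref{thm:mixture} proof handles the same point at the same level of detail (the first sentence after \eqref{eq:contour} in that proof, together with the determinant argument after \eqref{eq:pf_thm1_rotate}). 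Your parenthetical remark about the IQC alternative is exactly the bridge to the paper's method.
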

	
	%\begin{lemma}\label{lem:smallPhase}
	%	Let $P$ and $C\in\mathcal{RH}_\infty^{n\times n}$ be frequency-wise sectorial. Then the feedback system $P\,\#\,C$ is stable if
	%	$$\bar{\phi}(P(j\omega))+\bar{\phi}(C(j\omega))<\pi~~\text{and}~~\underline{\phi}(P(j\omega))+\underline{\phi}(C(j\omega))>-\pi,$$
	%	for all $\omega\in[0,\infty]$. 
	%\end{lemma}
	%Let $\theta(\omega)\in[0,\pi)$ and $\gamma(\omega)\in(0,\infty)$ for $\omega\in{\mathbb R}_+$. Denote by $${\mathcal P}(\alpha,\beta):=\{G:~\bar{\phi}(G(j\omega)) \leq \theta(\omega),\omega\in{\mathbb R}_+\},$$
	%and
	%$${\mathcal G}(\gamma):=\{G:~\bar{\sigma}(G(j\omega))\leq \gamma(\omega),\omega\in{\mathbb R}_+\}.$$
	%For notational convenience, denote by $\phi_{n+1}(A)=0$.  Define the family of the ``Ky-Fan'' phase values (partial sum of the canonical phases from the larger side) by \cite{WCKQ2019arXiv}
	%$$\phi_m(A):=\sum_{k=1}^m \phi_k(A)=\max_{X\in{\mathbb C}^{n\times m}}\sum_{k=1}^m\angle\lambda_k(X^*AX),~m=1,2,\dots, n.$$
	%The \textbf{phase rank} of $A$ is defined as the number of its nonzero canonical phases, denoted by $\prank(A)\in[0,n]$.
	
	%A matrix $B\in{\mathbb C}^{n\times n}$ is said to be semi-sectorial if its numerical range is in a closed half plane (\textcolor{black}{Right?}).
	
	%---------------------------------------------
	\section{Feedback Stability with Mixed Gain/Phase Conditions}\label{sec:mixed thm}
	%---------------------------------------------
	In this section, we first develop a mixed small gain/phase theorem for LTI feedback systems that satisfy a small phase condition at low frequency and a small gain condition at high frequency. Later on, we extend the result to a stability condition with necessity --- a small vase theorem. 
	
	\subsection{Mixed Gain and Phase Study on Matrices}
	Singular values are widely accepted as the gains for matrices. On the other hand, recall that we have defined phases for sectorial matrices in Section~\ref{subsec:matrix phase}. A combination of the notion of gain and phase at the level of matrices gives us the following results on invertibility of matrices. 
	
	Let $\alpha,\beta\in(-\pi,\pi)$ and $\gamma\in(0,\infty)$. Denote by
	\begin{equation}\label{eq:sets}
		\begin{aligned}
			&{\mathcal P}(\alpha,\beta):=\{A\in{\mathbb C}^{n\times n}:~A~\text{is sectorial},~\underline{\phi}(A) > \alpha,~\bar{\phi}(A) < \beta\},\\
			&\bar{{\mathcal P}}(\alpha,\beta):=\{A\in{\mathbb C}^{n\times n}:~A~\text{is sectorial},~\underline{\phi}(A)\geq \alpha,~\bar{\phi}(A) \leq \beta\},\\
			&{\mathcal G}(\gamma):=\{A\in{\mathbb C}^{n\times n}:~\bar{\sigma}(A)< \gamma\},\\
			&\bar{{\mathcal G}}(\gamma):=\{A\in{\mathbb C}^{n\times n}:~\bar{\sigma}(A)\leq \gamma\}.
		\end{aligned}
	\end{equation}
	In particular, for a scalar $c\in\mathbb{C}$, if $c\in{\mathcal P}(\alpha,\beta)\cap{\mathcal G}(\gamma)$, such a $c$ belongs to a fan-shaped region on the complex plane as is shown in Fig.~\ref{fig:fan}. Similarly, if $c\in{\mathcal P}(\alpha,\beta)\cup{\mathcal G}(\gamma)$, it belongs to a vase-shaped region as in Fig.~\ref{fig:vase}. As such, we continue to use the names fan-shaped or vase-shaped regions/sets even for complex matrices and transfer matrices when they belong to the intersection or union of the above sets in \eqref{eq:sets}. 
	\begin{figure}[H]
		\begin{minipage}{0.48\textwidth}
			\centering
			\includegraphics[width=.7\linewidth]{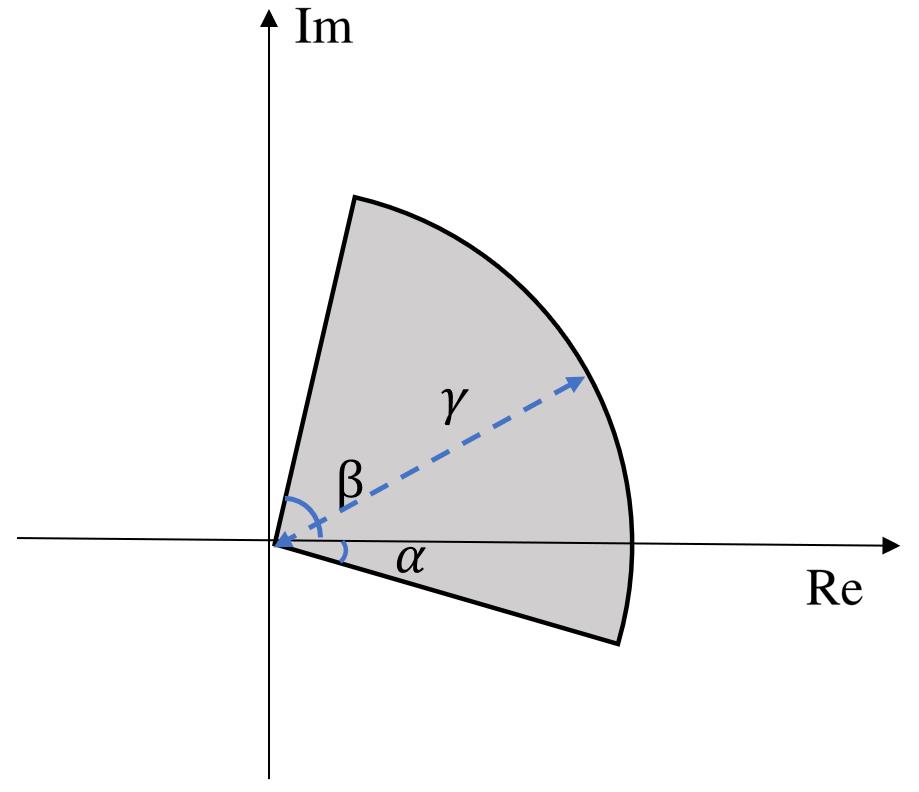}
			\caption{A fan-shaped region.}\label{fig:fan}
		\end{minipage}\hfill
		\begin{minipage}{0.48\textwidth}
			\centering
			\includegraphics[width=.7\linewidth]{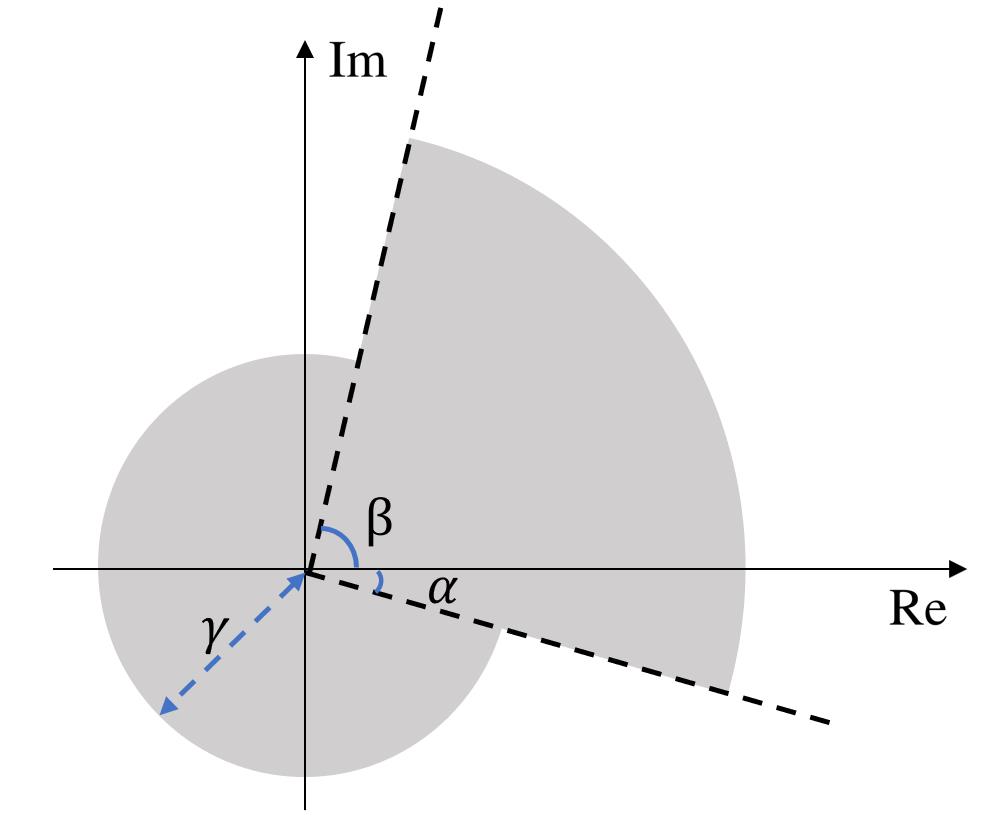}
			\caption{A vase-shaped region.}\label{fig:vase}
		\end{minipage}
	\end{figure}
	
	With the above notion, we have the following matrix invertibility results concerning the fan-shaped and vase-shaped sets, which were first investigated in \cite{WANG2020PhaseMath}. 
	\begin{lemma}\label{lem:mixGP_matrix}
		Let $A, B\in\mathbb{C}^{n\times n}$. Then $I+AB$ is invertible if there exist $\gamma\in(0,\infty)$, $\alpha,\beta\in(-\pi,\pi)$ and $\beta-\alpha\in(0,\pi]$ such that
		\begin{align}\label{eq:lem_matrix_1}
			A\in\bar{\mathcal P}(\alpha,\beta)\cup\bar{{\mathcal G}}(\gamma)~~\text{and}~~B\in{{\mathcal P}}(-\pi-\alpha,\pi-\beta)\cap{{\mathcal G}}(\gamma^{-1}).
		\end{align}
	\end{lemma}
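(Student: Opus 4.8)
The plan is to prove the contrapositive: if $I+AB$ is singular, then the hypotheses \eqref{eq:lem_matrix_1} cannot all hold. So suppose there is a unit vector $x$ with $(I+AB)x=0$, i.e., $ABx=-x$. Set $y:=Bx$. Then $Ay=-x$, and in particular $y\neq 0$, so $x=-Ay$ and $y=Bx=-BAy$. The idea is to extract a numerical-range/quadratic-form inequality by pairing $Ay=-x$ with the vectors $y$ and $x$: compute $y^*Ay = -y^*x = -(Bx)^*x = -x^*B^*x\,\overline{\phantom{x}}$, being careful with conjugates, and similarly evaluate $x^*Ay = -\|x\|^2 = -1$ together with $x^*(B x)$. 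Because $x = -Ay$, we have $y^*x = -y^*Ay$ and $\|x\|^2 = y^*A^*Ay$, so the single vector $y$ simultaneously controls a phase-type quantity $y^*Ay$ and a gain-type quantity $\|Ay\| = \|y^*A^*Ay\|^{1/2}$.

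Next I would split into the two cases coming from the union $A\in\bar{\mathcal P}(\alpha,\beta)\cup\bar{\mathcal G}(\gamma)$. In the gain case $A\in\bar{\mathcal G}(\gamma)$ and $B\in\mathcal G(\gamma^{-1})$: then $\|ABx\|\le\bar\sigma(A)\bar\sigma(B)\|x\| < \gamma\cdot\gamma^{-1}\cdot 1 = 1 = \|x\|$, contradicting $ABx=-x$. This is just the small-gain estimate. In the phase case $A\in\bar{\mathcal P}(\alpha,\beta)$ and $B\in\mathcal P(-\pi-\alpha,\pi-\beta)$: here I would use the key fact that for a sectorial matrix $M$ with phase sector $[\underline\phi(M),\bar\phi(M)]$, every nonzero $z$ satisfies $\angle(z^*Mz)\in[\underline\phi(M),\bar\phi(M)]$ (this is the defining property of matrix phases via the sectorial decomposition $M=T^*DT$, since $z^*Mz = (Tz)^*D(Tz)$ is a positive combination of the diagonal unimodular entries of $D$). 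Applying this to $A$ with $z=y$ and to $B$ with $z=x$: from $Ay=-x$ and $y=-Bx$ we get $y^*Ay = -y^*x = x^*Bx$ (using $y^*x = (Bx)^*x\cdot$... — I'd fix the precise conjugation, but the upshot is that $y^*Ay$ and $x^*Bx$ are related by a relation forcing $\angle(y^*Ay) + \angle(x^*Bx) \equiv \pi \pmod{2\pi}$, because $y^*Ay = -\,\overline{x^*Bx}$ or $= x^*Bx$ up to the sign coming from $ABx=-x$). Then the phase bounds give $\angle(y^*Ay)\in(\alpha,\beta)$... wait, $A\in\bar{\mathcal P}$ gives the closed sector $[\alpha,\beta]$, and $B\in\mathcal P$ gives the open sector $(-\pi-\alpha,\pi-\beta)$, so $\angle(y^*Ay)+\angle(x^*Bx) \in [\alpha,\beta] + (-\pi-\alpha,\pi-\beta) = (-\pi,\pi)$, which cannot equal $\pm\pi$; contradiction. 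The precise bookkeeping of whether the two phases sum to $\pi$ or $-\pi$ (and staying inside the principal branch $(-\pi,\pi]$, where the condition $\beta-\alpha\le\pi$ is what guarantees $[\alpha,\beta]$ is a genuine phase sector and the shifted sector is well-defined) is the delicate part.

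The main obstacle I anticipate is handling the conjugate-linearity carefully so that the phase-addition identity comes out exactly as $\angle(y^*Ay)+\angle(x^*Bx)=\pm\pi$, and confirming that the sign is such that the contradiction is with $(-\pi,\pi)$ rather than with an interval straddling $\pm\pi$; the role of $\beta-\alpha\in(0,\pi]$ is precisely to keep $[\alpha,\beta]$ a legitimate phase sector and to make $(-\pi-\alpha,\pi-\beta)$ a nonempty sector of width $<\pi$ as well, so both matrix-phase notions are defined. A secondary point is that $\mathcal P(-\pi-\alpha,\pi-\beta)$ is only required to be sectorial (not accretive), so I must invoke the sectorial decomposition rather than any positivity argument; and I should double-check that in the gain case no sectoriality of $A$ or $B$ is needed, which is why the hypothesis is a union. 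Modulo these care points, the argument is short: singularity of $I+AB$ produces a single witness vector, the union splits into a pure small-gain contradiction and a pure small-phase contradiction, and in each branch the hypothesis is directly violated.
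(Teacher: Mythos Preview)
The paper does not actually prove Lemma~\ref{lem:mixGP_matrix}; it is stated with attribution to \cite{WANG2020PhaseMath}. The closest in-paper argument is the proof of Theorem~\ref{thm:DW_matrix} (and, at the system level, Theorems~\ref{thm:mixture} and~\ref{thm:smallvase}), which proceeds by constructing a Hermitian multiplier $\Pi$ with
$\bigl[\begin{smallmatrix}A\\ I\end{smallmatrix}\bigr]^*\Pi\bigl[\begin{smallmatrix}A\\ I\end{smallmatrix}\bigr]\le 0$ and $\bigl[\begin{smallmatrix}I\\ -B\end{smallmatrix}\bigr]^*\Pi\bigl[\begin{smallmatrix}I\\ -B\end{smallmatrix}\bigr]>0$, and then appealing to an S\nobreakdash-procedure/IQC separation. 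Your route is genuinely different and more elementary: a single witness vector plus a case split on the union.

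Your plan is correct, and the conjugate bookkeeping you flag as the delicate point settles cleanly. With $(I+AB)x=0$, $\|x\|=1$, $y:=Bx\neq 0$, one has
\[
y^*Ay \;=\; y^*(-x)\;=\;-(Bx)^*x\;=\;-x^*B^*x\;=\;-\,\overline{x^*Bx},
\]
so if $x^*Bx=re^{i\psi}$ (nonzero since $B$ is sectorial) then $y^*Ay=re^{i(\pi-\psi)}$ and hence $\angle(y^*Ay)+\angle(x^*Bx)\equiv\pi\pmod{2\pi}$. In the phase branch $A\in\bar{\mathcal P}(\alpha,\beta)$, the sectorial decomposition gives $\angle(y^*Ay)\in[\underline\phi(A),\bar\phi(A)]\subset[\alpha,\beta]$, while $B\in\mathcal P(-\pi-\alpha,\pi-\beta)$ gives $\angle(x^*Bx)\in(-\pi-\alpha,\pi-\beta)$; the sum therefore lies strictly in $(-\pi,\pi)$ and cannot hit any odd multiple of $\pi$. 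In the gain branch, $\bar\sigma(A)\le\gamma$ and $\bar\sigma(B)<\gamma^{-1}$ give $\|ABx\|<\|x\|$, contradicting $ABx=-x$. The hypothesis $\beta-\alpha\in(0,\pi]$ is only used to make the phase sectors well-posed; it plays no further role in the contradiction.

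Compared with the multiplier approach the paper favors elsewhere, your argument is shorter and avoids the S\nobreakdash-procedure entirely, at the cost of being tailored to this particular ``closed union versus open intersection'' structure; the multiplier formulation is what the paper leverages to pass to the frequency-wise and DW-shell extensions.
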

	The above condition can be further strengthened, and then results in the following necessary and sufficient condition. 
	\begin{lemma}\label{lem:mixGP_matrix_nec}
		Let $A, B\in\mathbb{C}^{n\times n}$, $B$ be sectorial, $\gamma\in(0,\infty)$, $\alpha,\beta\in(-\pi,\pi)$ and $\beta-\alpha\in(0,\pi]$. Then $I+AB$ is invertible for all $A\in\bar{\mathcal P}(\alpha,\beta)\cup\bar{{\mathcal G}}(\gamma)$ if and only if
		\begin{align}\label{eq:lem_matrix_2}
			B\in{{\mathcal P}}(-\pi-\alpha,\pi-\beta)\cap{{\mathcal G}}(\gamma^{-1}).
		\end{align}
	\end{lemma}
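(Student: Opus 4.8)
I would split the proof into the two implications. The $(\Leftarrow)$ implication I would get for free from Lemma~\ref{lem:mixGP_matrix}: if \eqref{eq:lem_matrix_2} holds, then for every fixed $A\in\bar{\mathcal P}(\alpha,\beta)\cup\bar{\mathcal G}(\gamma)$ the hypotheses of Lemma~\ref{lem:mixGP_matrix} are satisfied with this very triple $(\alpha,\beta,\gamma)$ (recall $\beta-\alpha\in(0,\pi]$ is assumed), hence $I+AB$ is invertible. So all the work sits in the $(\Rightarrow)$ implication (necessity), which I would prove in contrapositive form: assuming $B$ is sectorial with $B\notin{\mathcal P}(-\pi-\alpha,\pi-\beta)\cap{\mathcal G}(\gamma^{-1})$, I would exhibit a single $A\in\bar{\mathcal P}(\alpha,\beta)\cup\bar{\mathcal G}(\gamma)$ making $I+AB$ singular, splitting according to which of the two constituent sets $B$ lies outside of.

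\emph{Gain case.} If $\bar\sigma(B)\geq\gamma^{-1}$, I would pick unit vectors $u,v$ forming a leading singular pair, i.e.\ $Bv=\bar\sigma(B)\,u$, and set $A=-\bar\sigma(B)^{-1}vu^{*}$. Then $\bar\sigma(A)=\bar\sigma(B)^{-1}\leq\gamma$, so $A\in\bar{\mathcal G}(\gamma)$, while $(I+AB)v=v+A\big(\bar\sigma(B)\,u\big)=v-v=0$, so $I+AB$ is singular. This case is essentially free.

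\emph{Phase case.} If instead $B$ is sectorial but $\underline\phi(B)\leq-\pi-\alpha$ or $\bar\phi(B)\geq\pi-\beta$, I would first pass to $\bar B$ when necessary — $\bar B$ is again sectorial with $\Psi(\bar B)=-\Psi(B)$, which interchanges the two alternatives, turns $(\alpha,\beta)$ into $(-\beta,-\alpha)$, and leaves $\bar{\mathcal G}(\gamma)$ untouched — so that I may assume $\bar\phi(B)\geq\pi-\beta$. I would then take a sectorial decomposition $B=T^{*}DT$ with $T$ nonsingular and $D=\mathrm{diag}(e^{j\theta_{1}},\dots,e^{j\theta_{n}})$ labelled so that $\theta_{1}=\bar\phi(B)$, put $\psi:=\pi-\bar\phi(B)$, and take $A:=e^{j\psi}(T^{*}T)^{-1}$. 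Since $(T^{*}T)^{-1}$ is positive definite, $A$ is sectorial with every phase equal to $\psi$; and $AB=e^{j\psi}T^{-1}DT$, so $\spec(AB)=e^{j\psi}\spec(D)\ni e^{j(\psi+\theta_{1})}=e^{j\pi}=-1$, so $I+AB$ is singular. What remains is to confirm $A\in\bar{\mathcal P}(\alpha,\beta)$, i.e.\ $\psi\in[\alpha,\beta]$: the bound $\psi=\pi-\bar\phi(B)\leq\beta$ is exactly the hypothesis $\bar\phi(B)\geq\pi-\beta$, while $\psi\geq\alpha$ comes from $\bar\phi(B)<\pi$ (strict sectoriality of $B$) together with the standing geometry $\alpha\leq0$ under which the fan $\bar{\mathcal P}(\alpha,\beta)$ straddles the positive real axis as in Fig.~\ref{fig:fan}.

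The one genuinely delicate step is this last verification — keeping the cancelling factor $A$ inside the \emph{closed} fan $\bar{\mathcal P}(\alpha,\beta)$ while it annihilates the extremal phase of $B$ — and it is precisely here that the geometry $\alpha\leq0\leq\beta$ and the strict bound $\bar\phi(B)<\pi$ are used, and where the boundary instance $\bar\phi(B)=\pi-\beta$ (which pins $\psi=\beta$) is absorbed by the closedness of $\bar{\mathcal P}$. Everything else is bookkeeping: the rank-one construction in the gain case, the conjugation reduction, and the spectral identity $\spec(AB)=e^{j\psi}\spec(D)$. Alternatively, the non-strict boundary situations could be reached from the open conditions of Lemma~\ref{lem:mixGP_matrix} by a continuity/homotopy argument, but the explicit $A$ above is cleaner and self-contained.
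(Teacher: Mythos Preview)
The paper does not actually prove Lemma~\ref{lem:mixGP_matrix_nec}; both it and Lemma~\ref{lem:mixGP_matrix} are imported from \cite{WANG2020PhaseMath}, so there is no in-paper argument to compare against. Your scheme is nonetheless the natural one and parallels how the paper handles the system-level analogue, Theorem~\ref{thm:smallvase_Nec}: sufficiency is inherited from Lemma~\ref{lem:mixGP_matrix}, and necessity is argued by contraposition, splitting into a gain violation (your rank-one $A=-\bar\sigma(B)^{-1}vu^{*}$ is the standard destabiliser) and a phase violation (your $A=e^{j\psi}(T^{*}T)^{-1}$ built from a sectorial decomposition of $B$). Both constructions are correct where they apply.

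The one genuine soft spot is the closing step $\psi\geq\alpha$. You derive it from ``$\bar\phi(B)<\pi$ by strict sectoriality'' together with ``the standing geometry $\alpha\leq0$'', but neither is guaranteed by the hypotheses as written: the convention $\phi_c(B)\in(-\pi,\pi]$ with spread $<\pi$ allows $\bar\phi(B)$ anywhere in $(-\pi,3\pi/2)$, and the lemma as stated permits $0<\alpha<\beta$. This is not merely a cosmetic gap in your write-up---without some such normalisation the \emph{statement} itself can fail. Take $n=1$, $\alpha=\tfrac12$, $\beta=1$, and $B=re^{j(\pi-0.3)}$ with $0<r<\gamma^{-1}$: then $\bar\phi(B)=\pi-0.3\geq\pi-\beta$, so $B\notin{\mathcal P}(-\pi-\alpha,\pi-\beta)$, yet the unique $A$ with $1+AB=0$ has $\angle A=0.3\notin[\alpha,\beta]$ and $|A|=r^{-1}>\gamma$, so no $A\in\bar{{\mathcal P}}(\alpha,\beta)\cup\bar{{\mathcal G}}(\gamma)$ makes $I+AB$ singular. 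Hence the assumption $\alpha\leq0\leq\beta$ that you read off Fig.~\ref{fig:fan} (and which does hold in the paper's own application, Theorem~\ref{thm:smallvase_Nec}, where $\alpha=-\pi/2+\angle h<0<\pi/2+\angle h=\beta$) should be stated explicitly; under it your argument closes cleanly.
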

	As we see in the above theorem, $A$ belongs to the union set of one set with bounded phase and the other with bounded gain. With the intuition on Fig.~\ref{fig:vase} for scalars, we come up with the name for the result --- a small vase theorem. Later on, we will further explore such results for LTI systems. 
	
	\subsection{Mixed Gain-Phase Stability with Cut-off Frequency}
	%Let $0<\omega_c<\infty$. Particularly, we take
	%\begin{align*}
	%\theta(\omega)=\left\{\begin{matrix}\theta(\omega), & |\omega|<\omega_c\\ \pi, & |\omega|\geq \omega_c \end{matrix}\right.~~\text{and}~~\gamma(\omega)=\left\{\begin{matrix}\infty, & |\omega|<\omega_c\\ \gamma(\omega), & |\omega|\geq \omega_c \end{matrix}\right. .
	%\end{align*}
	%Applying Theorem~\ref{thm:smallvase}, we obtain the following result. 
	As we have introduced in Section~\ref{subsec:small_gainPhase}, the well-established small gain and small phase theorems on the stability of feedback systems, in what follows we develop stability results via the combination of gains and phases. 
	
	It is discussed in \cite{Hara2007EasyControl} that an easily controllable system usually satisfies desirable phase/gain conditions, i.e., its transfer matrix has a low gain in high frequency range but a small phase shift in low range. Many real-world systems, such as mechanical systems, satisfy this condition, which motivates us to develop the following mixed gain/phase stability result for MIMO LTI systems. 
	\begin{theorem}[Mixed Gain-Phase with Cut-off Frequency]\label{thm:mixture}
		Let $\omega_c\in(0,\infty)$, $P$ be {semi-stable frequency-wise semi-sectorial over $(-\omega_c,\omega_c)$ with $j\Omega_p$ being the set of poles on the imaginary axis satisfying $\max_{\omega\in\Omega_p} |\omega|<\omega_c$,}
		and $C\in\mathcal{RH}_\infty^{n\times n}$ be frequency-wise sectorial.  Then $P\,\#\,C$ is stable if\\ i) for each $\omega\in[0,\omega_c)\setminus\Omega_p$, it holds $\bar{\phi}(P(j\omega))+\bar{\phi}(C(j\omega))<\pi$ and $\underline{\phi}(P(j\omega))+\underline{\phi}(C(j\omega))>-\pi$;\\
		ii) and for each $\omega\in[\omega_c,\infty]$,  it holds $\bar{\sigma}(P(j\omega))\bar{\sigma}(C(j\omega))<1. $
	\end{theorem}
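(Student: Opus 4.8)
The plan is to establish stability by a generalized Nyquist / argument-principle argument combined with a homotopy on the magnitude of $P$. Recall that $P\,\#\,C$ is stable precisely when the characteristic function $d(s):=\det\!\big(I+C(s)P(s)\big)$ has no zeros in the closed right half plane, once the imaginary axis is indented to the right by small semicircles $SC_\epsilon(j\omega_k)$ around the poles $j\Omega_p$ of $P$ (the only possible closed-RHP poles of the open loop $CP$, since $C\in\mathcal{RH}_\infty$ and $P$ is semi-stable), and the image of this indented Nyquist D-contour $\Gamma_\epsilon$ under $d$ does not encircle the origin. This is exactly the mechanism used in the proof of Lemma~\ref{lem:smallPhase}, where the delicate cancellation of the imaginary-axis poles of $P$ inside the Gang of Four is handled by the constant-rank-and-semi-sectorial-along-$CT_\epsilon$ requirement in the definition of frequency-wise semi-sectoriality, together with the standing convention $\phi_c(G)\equiv 0$ along the contour. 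I would invoke that machinery verbatim, so that the task reduces to two things: (a) $d(s)\neq 0$ everywhere on $\Gamma_\epsilon$, and (b) the winding number of $d$ about $0$ along $\Gamma_\epsilon$ is zero.

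For (a) I would split $\Gamma_\epsilon$ at $\pm j\omega_c$. On the low-frequency part, $s=j\omega$ with $|\omega|<\omega_c$, $\omega\notin\Omega_p$, and on the indentation arcs $SC_\epsilon(j\omega_k)$ with $\omega_k\in\Omega_p$ (all satisfying $|\omega_k|<\omega_c$ by hypothesis), set $\alpha=\underline{\phi}(C(j\omega))$ and $\beta=\bar{\phi}(C(j\omega))$, so $\beta-\alpha<\pi$ because $C$ is sectorial; condition (i) says exactly that $P(s)$ lies in the semi-sectorial analogue of $\bar{{\mathcal P}}(-\pi-\alpha,\pi-\beta)$, and the pure-phase matrix invertibility result underlying the small phase theorem (the $\gamma\to\infty$ specialization of Lemma~\ref{lem:mixGP_matrix}, extended to semi-sectorial $P$ as in \cite{wei2021phaseLTI,WANG2020PhaseMath}) yields $I+P(s)C(s)$, hence $I+C(s)P(s)$, nonsingular; on the indentation arcs this additionally uses part (b) of the semi-sectoriality definition and a limiting argument as $\epsilon\downarrow 0$. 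On the high-frequency part, $s=j\omega$ with $|\omega|\ge\omega_c$, no poles of $P$ occur, and condition (ii) gives $\bar{\sigma}\!\big(C(s)P(s)\big)\le\bar{\sigma}(C(s))\bar{\sigma}(P(s))<1$, so $I+C(s)P(s)$ is nonsingular; on the large closing semicircle $|s|=R$ we note $C(s)P(s)\to C(\infty)P(\infty)$ as $R\to\infty$ with $\bar{\sigma}\!\big(C(\infty)P(\infty)\big)<1$ (the $\omega=\infty$ instance of (ii)), so nonsingularity holds there for $R$ large.

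For (b) I would use the homotopy $P_\tau:=\tau P$, $\tau\in[0,1]$. For $\tau\in(0,1]$, $P_\tau$ is still semi-stable and frequency-wise semi-sectorial over $(-\omega_c,\omega_c)$ with the same imaginary-axis poles and zeros and the same phase response, so $(P_\tau,C)$ again satisfies (i) and (ii) (indeed $\bar{\sigma}(P_\tau)\bar{\sigma}(C)=\tau\,\bar{\sigma}(P)\bar{\sigma}(C)<1$), and the argument of the previous paragraph gives $\det\!\big(I+C(s)P_\tau(s)\big)\neq 0$ on $\Gamma_\epsilon$ for every such $\tau$; at $\tau=0$ the function is the constant $1$. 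Hence $s\mapsto\det\!\big(I+C(s)P_\tau(s)\big)$ is a continuous family of closed curves avoiding the origin, so its winding number about $0$ is independent of $\tau$ and equals its value at $\tau=0$, namely $0$. Since $CP$ has no poles in the open RHP, the generalized Nyquist criterion (as in Lemma~\ref{lem:smallPhase}) then gives that $P\,\#\,C$ is stable.

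The main obstacle is not the frequency splitting, which is immediate from the matrix invertibility lemmas, but the rigorous handling of the imaginary-axis poles of $P$: showing that the phase condition (i), postulated only on $[0,\omega_c)\setminus\Omega_p$, still forces $d(s)\neq 0$ on the indentation arcs $SC_\epsilon(j\omega_k)$ for $\epsilon$ small, and that the winding-number bookkeeping over the indented contour correctly translates ``no closed-RHP zeros of $d$ plus zero encirclements'' into stability of all four blocks of $P\,\#\,C$. This is precisely where the definition of frequency-wise semi-sectoriality through the contour $CT_\epsilon$ and the continuity convention on $\phi_c$ are needed, and I would borrow the corresponding step from the proof of Lemma~\ref{lem:smallPhase} rather than redo it. (Equivalently, the whole argument can be cast as an IQC argument with a frequency-dependent multiplier that interpolates between a passivity-type multiplier on $[0,\omega_c)$ and a small-gain multiplier on $[\omega_c,\infty]$, but the Nyquist--homotopy form is the most transparent.)
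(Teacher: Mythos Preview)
Your proposal is correct and, as you yourself note in the closing parenthetical, is essentially the Nyquist--homotopy incarnation of the paper's IQC argument. The paper makes the IQC route explicit: it builds a frequency-dependent multiplier $\Pi(s)$ that is a rotated passivity multiplier on the low-frequency contour $CT_\epsilon(j\Omega_p)$ and a small-gain multiplier on $[\omega_c,\infty]$, verifies
\[
\begin{bmatrix} I\\ \tau P(s)\end{bmatrix}^{*}\Pi(s)\begin{bmatrix} I\\ \tau P(s)\end{bmatrix}\ge 0,\qquad
\begin{bmatrix} -C(s)\\ I\end{bmatrix}^{*}\Pi(s)\begin{bmatrix} -C(s)\\ I\end{bmatrix}<0
\]
for all $\tau\in[0,1]$, and then invokes the IQC stability results of Khong to conclude that $P\,\#\,C$ has no pole on or to the right of $CT^\infty_\epsilon(j\Omega_p)$. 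Your argument unpacks exactly this: the two IQC inequalities are nothing but the frequency-wise matrix small-phase and small-gain invertibility statements you use, and the $\tau$-parameter in the IQC is your homotopy $P_\tau=\tau P$. What the IQC packaging buys the paper is a single citation instead of a hand-rolled winding-number argument; what your version buys is that no external IQC theorem is needed.

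The one place where the paper does more work than you indicate is the treatment of the points $j\Omega_p$ themselves. You defer this to ``borrowing from the proof of Lemma~\ref{lem:smallPhase}'', but the paper actually carries out the argument explicitly here: it supposes $\det(I+P(j\omega_0)C(j\omega_0))=0$ for some $\omega_0\in\Omega_p$, rewrites this as the vanishing of $\det\big(C^{-1}e^{j\theta}+C^{-*}e^{-j\theta}+Pe^{j\theta}+P^{*}e^{-j\theta}\big)$, and derives a contradiction from the uniform positivity of this Hermitian expression along $SC_\epsilon(j\omega_0)$ for small $\epsilon$. This step uses the sectoriality of $C$ (hence invertibility of $C(j\omega_0)$, ruling out pole--zero cancellation) and the rotated accretivity coming from the phase condition. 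Your outline is not wrong to defer it, but be aware that it is precisely this determinant-and-continuity argument, not a generic appeal to Lemma~\ref{lem:smallPhase}, that closes the gap.
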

	
	\begin{proof}
		Using a similar argument in \cite[Theorem~7]{wei2021phaseLTI}, we have that if $j\omega_0$ is a pole of $P$, it has to be simple. Recall the definition of contour $CT_\epsilon(j\Omega_p)$ in \eqref{eq:contour} with $\omega_{-\infty}=-\omega_c$ and $\omega_{\infty}=\omega_c$. 
		Noting the symmetry property of real-rational transfer matrices and the continuously defined phases of $P(s)$ and $C(s)$ along $CT_\epsilon(j\Omega_p)$, we obtain from condition~i) that for $s\in CT_\epsilon(j\Omega_p)$ with $\epsilon>0$ being sufficiently small, there exist continuous functions $\alpha(s),\beta(s)\in\mathbb{R}$ with $\alpha(\bar{s})=-\alpha(s)$, $\beta(\bar{s})=-\beta(s)$ and $\beta(s)-\alpha(s)\in(0,\pi)$ such that  $\underline{\phi}(P(s))>\alpha(s)$, $\underline{\phi}(C(s))>\pi-\alpha(s)$,  
		$\bar{\phi}(P(s))<\beta(s)$, $\bar{\phi}(C(s))<\pi-\beta(s)$; and obtain from condition~ii) that for $\omega\in[-\infty,-\omega_c]\cup[\omega_c,\infty]$ there exists $\gamma(j\omega)>0$ such that  $\bar{\sigma}(P(j\omega))<\gamma(j\omega)$ and $\bar{\sigma}(C(j\omega))<1/\gamma(j\omega)$. Let
		$$\delta(s):=\left\{\begin{matrix}1,~~s\in CT_\epsilon(j\Omega_p)\\
			0,~~\text{otherwise}\end{matrix}\right. ~~\text{and}~~\theta(s):=\left\{\begin{matrix}{\pi}/{2}-\beta(s),~~\beta(s)\geq-\alpha(s)\\
			{\pi}/{2}+\alpha(s),~~\beta(s)<-\alpha(s)\end{matrix}\right.,$$
		and construct the multiplier 
		\begin{align*}%\label{eq:Pi_thm1}
			\Pi(s):=\delta(s)\begin{bmatrix}0 & e^{j\theta(s)}I_n\\ e^{-j\theta(s)}I_n & 0\end{bmatrix} + (1-\delta(s))\begin{bmatrix}\gamma(s)^2 I_n & 0\\0 & - I_n\end{bmatrix},~\forall~s\in CT^\infty_\epsilon(j\Omega_p).
		\end{align*}
		Moreover, based on the given conditions on $P$ and $C$, one can verify that  for all $s\in CT^\infty_\epsilon(j\Omega_p)$, 
		\begin{align}\label{eq:iqc_thm1}
			\begin{bmatrix} I_n\\\tau P(s) \end{bmatrix}^*\Pi(s)\begin{bmatrix} I_n\\\tau P(s) \end{bmatrix}\geq 0,~\forall~\tau\in[0,1],~~\text{and}~~
			\begin{bmatrix}-C(s) \\ I_n\end{bmatrix}^*\Pi(s)\begin{bmatrix} -C(s) \\ I_n\end{bmatrix}<0. 
		\end{align}
		It follows from \cite[Theorem~4.4]{Khong2016TAC} and \cite[Proposition~3]{Khong2018robust} that $P\cls C$ has no pole on or right to the contour $CT^\infty_\epsilon(j\Omega_p)$ for all any $\epsilon>0$. To show the stability of $P\cls C$, it then suffices to show that $P\cls C$ has no pole in $j\Omega_p$. To this end, suppose to the contrapositive that  $\omega_0\in\Omega_p$ is a pole of $P\cls C$.  Moreover, noting that for every $\omega\in[-\infty,\infty]$, $C(j\omega)$ is sectorial and thus invertible, we then obtain that there is no unstable pole-zero cancellation in the product of $P$ and $C$. 
		Therefore, $\omega_0\in\Omega_p$ is a pole of $P\cls C$ if and only if $\det(I+P(j\omega_0)C(j\omega_0)) =  0$.
		Due to \eqref{eq:iqc_thm1}, there exists an $\eta>0$ such that for all $\epsilon\geq 0$ being suffciently small and $s\in CT_\epsilon(j\Omega_p)$, it holds 
		\begin{align}\label{eq:pf_thm1_rotate}
			C(s)^{-1}e^{j\theta(s)}+C(s)^{-*}e^{-j\theta(s)}>\eta I~~\text{and}~~P(s)e^{j\theta(s)}+P(s)^*e^{-j\theta(s)} \geq 0.\end{align}
		Note that \begin{align*}0 &= \det(C(j\omega_0))\det(I+P(j\omega_0)C(j\omega_0))\\
			& =  \det(C(j\omega_0)^{-1}+P(j\omega_0)) = \det(e^{j\theta(j\omega_0)}C(j\omega_0)^{-1}+e^{j\theta(j\omega_0)}P(j\omega_0)) \\
			&= \det(C(j\omega_0)^{-1}e^{j\theta(j\omega_0)}+C(j\omega_0)^{-*}e^{-j\theta(j\omega_0)}+P(j\omega_0)e^{j\theta(j\omega_0)}+P(j\omega_0)^*e^{-j\theta(j\omega_0)}),
		\end{align*}
		where the last equality follows by the fact that for matrix $C$ satisfying $C+C^* \geq 0$, $\det(C)=0$ implies that $\det(C+C^*)=0$. 
		By continuity of transfer matrices, we have for $s\in SC_\epsilon(j\omega_0)$, 
		$\det(C(s)^{-1}e^{j\theta(s)}+C(s)^{-*}e^{-j\theta(s)}+P(s)e^{j\theta(s)}+P(s)^*e^{-j\theta(s)}) $ can be made arbitrarily small by taking $\epsilon>0$ to be sufficiently small. This contradicts to that $C(s)^{-1}e^{j\theta(s)}+C(s)^{-*}e^{-j\theta(s)}+P(s)e^{j\theta(s)}+P(s)^*e^{-j\theta(s)}>\eta I$ according to \eqref{eq:pf_thm1_rotate}. Therefore, $P\cls C$ has no pole in $j\Omega_p$ and the feedback stability is thus completely proved. 
	\end{proof}
	\begin{figure}
		\centering
		\includegraphics[width=.5\linewidth]{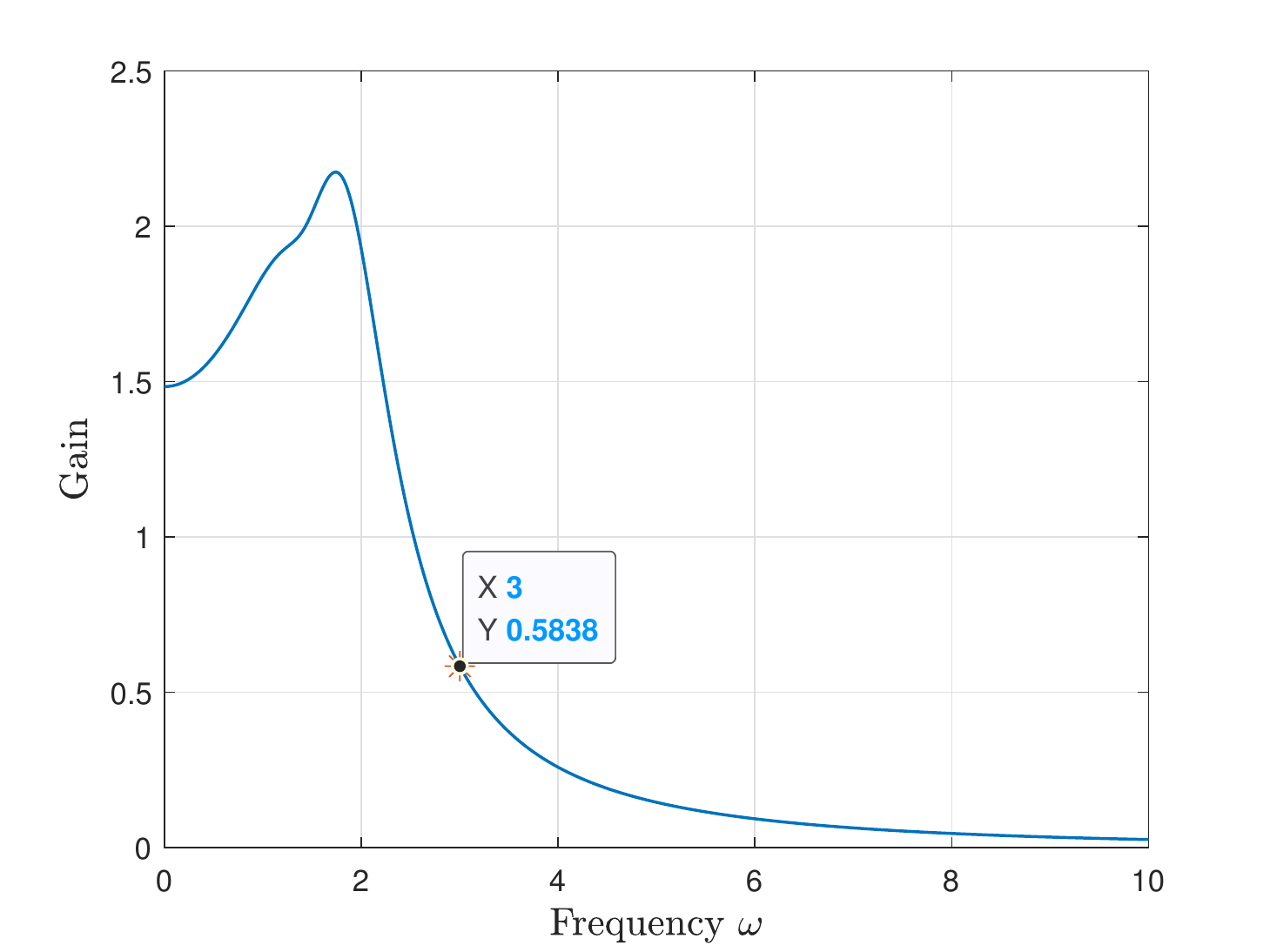}
		\caption{{The product gain $\bar{\sigma}(P(j\omega))\bar{\sigma}(C(j\omega))$ in terms of the frequency $\omega$. }}\label{fig:gain_product}
	\end{figure}
	\begin{figure}
		\centering
		\includegraphics[width=.5\linewidth]{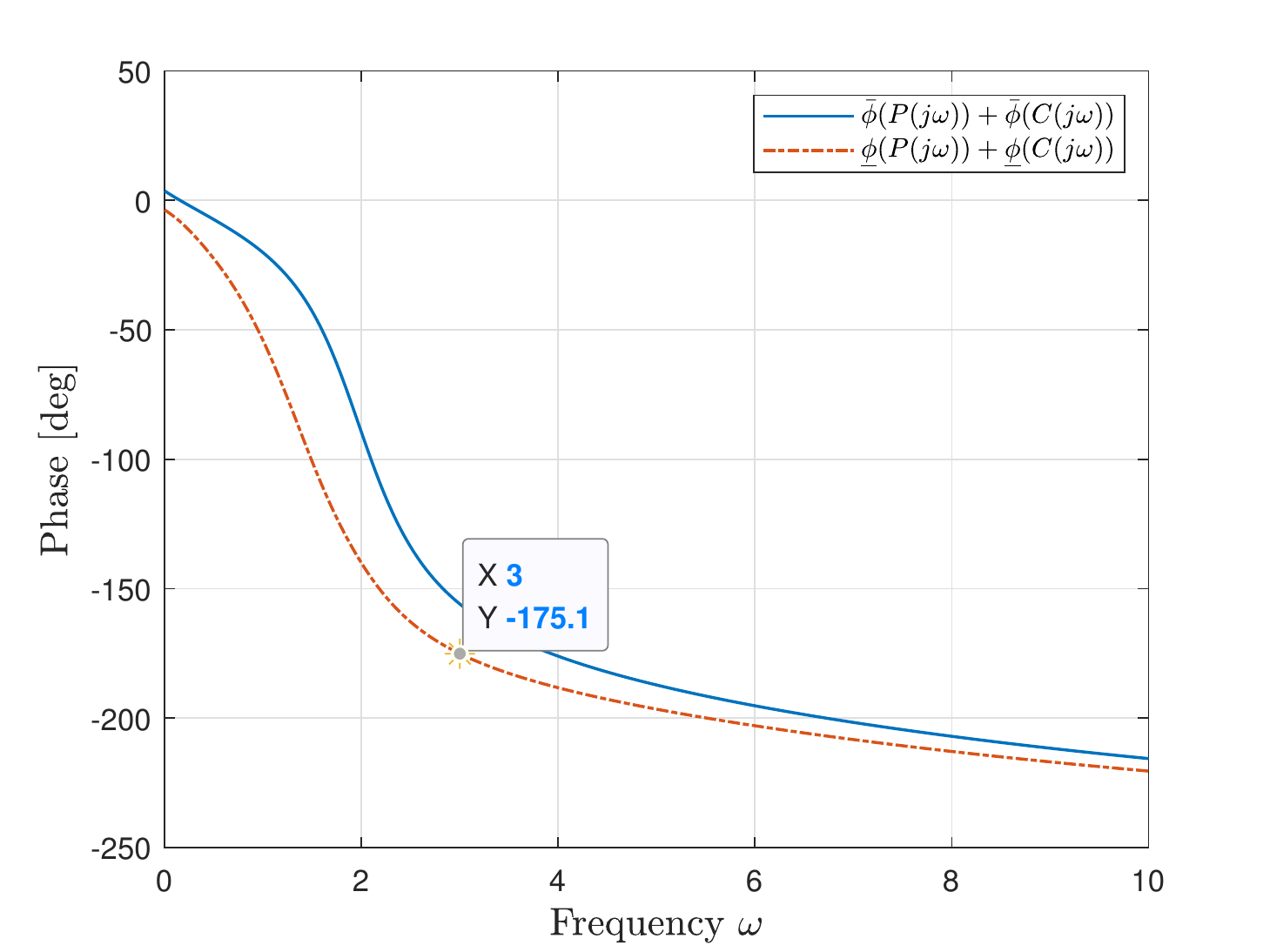}
		\caption{{The sums of phases:  $\bar{\phi}(P(j\omega))+\bar{\phi}(C(j\omega))$ and $\underline{\phi}(P(j\omega))+\underline{\phi}(C(j\omega))$ in terms of the frequency $\omega$.} }\label{fig:phase_sum}
	\end{figure}
	\begin{example}
		{Consider the following feedback control problem of a matrix second-order system, which have been intensively studied for engineering purposes \cite{Skelton1980AM,balas1982,gardiner1992stabilizing}. A matrix second-order system can be represented by the following transfer matrix:
			$$P(s) = (H_1s+H_2)(Ms^2+Cs+K)^{-1}B,$$
			where $M>0$ is the mass matrix, $C$ represents the sum of damping forces and gyroscopic efforts, $K\geq 0$ is the stiffness matrix, $B$ characterizes the input structure, and $H_1$ and $H_2$ characterizes the measured outputs. 
			Moreover, we observe that $P(0) = H_2K^{-1}B$ (if $K$ is nonsingular), $P(\infty)=0$, and  $\lim_{s\to\infty} sP(s) = H_1M^{-1}B$. We can easily discover that the gain of such a system become arbitrarily small at high frequency while its property at low frequency is roughly governed by the matrix $H_1M^{-1}B$ whose phases can be well estimated and even designed via a proper adjustment on the input matrix $B$ and output matrix $H_1$. For example, let the matrices take the following values 
			$$M=B=I_3,~C=\begin{bmatrix}
				3 & 0 & 0\\
				0 & 2 & 0\\
				0 & 1 & 2
			\end{bmatrix},~K = \begin{bmatrix}
				6 & 0 & 2\\
				0 & 7 & 0\\
				2 & 1 & 7
			\end{bmatrix},~H_1=\frac{1}{100}\begin{bmatrix}
				3 & 2 & 1\\
				1 & 3 & 0\\
				0 & 1 & 2
			\end{bmatrix},~H_2=\begin{bmatrix}
				70 & 0 & 2\\
				0 & 70 & 1\\
				0 & 2 & 60
			\end{bmatrix}.$$
			Such a matrix second-order system can be stabilized by a simple first-order diagonal controller $$C(s) = \frac{1}{s+10}I_3. $$
			To obtain the feedback stability, we can take advantages of the proposed mixed gain-phase stability result with a cut-off frequency in Theorem~\ref{thm:mixture}. As shown in Fig.~\ref{fig:gain_product}, the product of gains of $P(j\omega)$ and $C(j\omega)$ is large (greater than 1) in the low frequency, leading to the failure of a pure small gain theorem for the feedback stability. On the other hand, as shown in Fig.~\ref{fig:phase_sum}, the sum of the smallest phases of $P(j\omega)$ and $C(j\omega)$ becomes smaller than $-\pi$ (-180 degrees) at high frequency range, rendering a pure small phase statement to fail. Nevertheless, as revealed by Fig.~\ref{fig:gain_product} and \ref{fig:phase_sum},
			an application of Theorem~\ref{thm:mixture} directly shows the feedback stability of $P\cls C$ when the cut-off frequency is set to be $\omega_c=3$. }
	\end{example}
	
	%
	%Consider the following toy example demonstrating the efficacy of the above result as well as its advantages over the small gain and small phase theorems. Let 
	%$$P(s)=\frac{4}{(s+1)^2}~~~\text{and}~~~C(s)=\frac{1}{s+1}.$$
	%To apply Theorem~\ref{thm:mixture}, we take $\omega_c=1$, and verify that for $\omega\in[0,\omega_c)$
	%$$\bar{\phi}(P(j\omega))+\bar{\phi}(C(j\omega))<0<\pi~~\text{and}~~\underline{\phi}(P(j\omega))+\underline{\phi}(C(j\omega))>-\dfrac{3\pi}{4}>-\pi.$$
	%and for $\omega\in[\omega_c,\infty]$
	%$$\bar{\sigma}(P(j\omega))\bar{\sigma}(C(j\omega))\leq|P(j1)||C(j1)|=\frac{1}{\sqrt{2}}<1.$$
	%However, as $|P(j0)||C(j0)|=4>1$ and $\underline{\phi}(P(j\omega))+\underline{\phi}(C(j\omega)) \to -3\pi/2<-\pi$ when $\omega\to +\infty$, we cannot obtain the stability of this simple feedback system by direct applications of the small gain or small phase theorems in Lemmas~\ref{lem:smallGain} and \ref{lem:smallPhase}. 

	\subsection{A Frequency-wise Mixed Gain-Phase Stability Criterion}
	Obviously, the ways to combine the gain and phase properties in the analysis of feedback stability are not unique. In the following, we explore more general mixed gain and phase stability results by extending the matrix results in Lemmas~\ref{lem:mixGP_matrix} and \ref{lem:mixGP_matrix_nec} as well as the system result in Theorem~\ref{thm:mixture}.

	\begin{theorem}\label{thm:smallvase}
		Let $P$ be {semi-stable frequency-wise semi-sectorial with $j\Omega_p$ being the set of poles on the imaginary axis} and $C\in\mathcal{RH}_\infty$. Let $\gamma(\omega)\in(0,\infty)$ and $\alpha(\omega),\beta(\omega)\in\mathbb{R}$ with $\beta(\omega)-\alpha(\omega)\in(0,\pi]$, $\forall~\omega\in[0, \infty]$, be piece-wise continuous functions.  Then feedback system $P\,\#\,C$ is stable if \\ i) for each $\omega\in[0,\infty]\setminus\Omega_p$, it holds $P(j\omega)\in\bar{{\mathcal P}}(\alpha(\omega),\beta(\omega))\cup\bar{{\mathcal G}}(\gamma(\omega))$;\\
		ii) and for each $\omega\in[0,\infty]$,  it holds $C(j\omega)\in{{\mathcal P}}(-\pi-\alpha(\omega),\pi-\beta(\omega))\cap{{\mathcal G}}(\gamma(\omega)^{-1})$. 
		
		%	for all $\omega\in[0, \infty]$, 
		%	\begin{align}\label{eq:thm_stability}
		%	P(j\omega)\in\bar{{\mathcal P}}(\alpha(\omega),\beta(\omega))\cup\bar{{\mathcal G}}(\gamma(\omega))~~\text{and}~~C(j\omega)\in{{\mathcal P}}(-\pi-\alpha(\omega),\pi-\beta(\omega))\cap{{\mathcal G}}(\gamma(\omega)^{-1}).\end{align}
	\end{theorem}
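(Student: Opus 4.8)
The plan is to follow the integral-quadratic-constraint and homotopy argument already used for Theorem~\ref{thm:mixture}; the only genuinely new ingredient is that the multiplier is now chosen \emph{frequency by frequency} to be of phase type or of gain type according to which of the two sets in the union of condition~i) the matrix $P(j\omega)$ lies in, rather than according to a single cut-off frequency $\omega_c$. Throughout, $C$ is frequency-wise sectorial — a consequence of condition~ii) — hence invertible on $[-\infty,\infty]$, so there is no unstable pole--zero cancellation in the product $PC$ and, as in the proof of Theorem~\ref{thm:mixture} (following \cite{wei2021phaseLTI}), every imaginary-axis pole of $P$ is simple and one may work along the contour $CT^\infty_\epsilon(j\Omega_p)$ for all small $\epsilon>0$, on which $P(s)$ has constant rank, is semi-sectorial, and has continuously defined phases.

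The heart of the proof is the construction of $\Pi(s)$. On $[0,\infty]$ split the frequency axis into the measurable set $\mathcal F_{\mathrm g}$ on which the gain membership $P(j\omega)\in\bar{\mathcal G}(\gamma(\omega))$ is invoked and its complement $\mathcal F_{\mathrm{ph}}$ on which $P(j\omega)\in\bar{\mathcal P}(\alpha(\omega),\beta(\omega))$ is invoked; extend to $[-\infty,\infty]$ and to the contour by conjugate symmetry (taking $\theta(\bar s)=-\theta(s)$, $\gamma(\bar s)=\gamma(s)$) and by continuity, noting that near each pole the gain of $P$ is unbounded, so a punctured neighbourhood of every pole — hence the small semicircles around it — lies in $\mathcal F_{\mathrm{ph}}$. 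On $\mathcal F_{\mathrm g}$ set $\Pi(s)=\operatorname{diag}\!\big(\gamma(s)^2 I_n,\,-I_n\big)$. On $\mathcal F_{\mathrm{ph}}$, the hypotheses give $\bar\phi(P(s))+\bar\phi(C(s))<\pi$ and $\underline\phi(P(s))+\underline\phi(C(s))>-\pi$ — because $\bar\phi(P)\le\beta$, $\bar\phi(C)<\pi-\beta$ and $\underline\phi(P)\ge\alpha$, $\underline\phi(C)>-\pi-\alpha$ — so, exactly as in Lemma~\ref{lem:smallPhase} and Lemma~\ref{lem:mixGP_matrix}, the interval $[-\pi/2-\underline\phi(P(s)),\,\pi/2-\bar\phi(P(s))]\cap(\bar\phi(C(s))-\pi/2,\,\underline\phi(C(s))+\pi/2)$ is non-empty; pick $\theta(s)$ in it measurably and set $\Pi(s)=\tbt{0}{e^{j\theta(s)}I_n}{e^{-j\theta(s)}I_n}{0}$, so that $\Re\!\big(e^{j\theta(s)}P(s)\big)\ge 0$ and $\Re\!\big(e^{-j\theta(s)}C(s)\big)>0$; on the semicircles around the poles one instead chooses $\theta(s)$ exactly as in the proof of Theorem~\ref{thm:mixture}, using the generalized sectorial decomposition of $P(s)$.

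A direct computation on both pieces — using $\tau\le1$ together with $\bar\sigma(P)\le\gamma$, $\bar\sigma(C)<\gamma^{-1}$ on $\mathcal F_{\mathrm g}$, and the two rotations on $\mathcal F_{\mathrm{ph}}$ — then gives, for every $s$ on the contour,
\[
\begin{bmatrix} I_n\\ \tau P(s)\end{bmatrix}^{*}\Pi(s)\begin{bmatrix} I_n\\ \tau P(s)\end{bmatrix}\ge 0\ \ (\tau\in[0,1]),\qquad \begin{bmatrix} -C(s)\\ I_n\end{bmatrix}^{*}\Pi(s)\begin{bmatrix} -C(s)\\ I_n\end{bmatrix}<0 .
\]
By \cite[Theorem~4.4]{Khong2016TAC} and \cite[Proposition~3]{Khong2018robust}, $P\cls C$ has no pole on or to the right of $CT^\infty_\epsilon(j\Omega_p)$ for every small $\epsilon>0$, so it remains only to exclude poles of $P\cls C$ in $j\Omega_p$. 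For $\omega_0\in\Omega_p$ this amounts, since $C(j\omega_0)$ is invertible, to $\det\!\big(C(j\omega_0)^{-1}+P(j\omega_0)\big)=0$, equivalently $\det\!\big(e^{j\theta}C(j\omega_0)^{-1}+e^{j\theta}P(j\omega_0)\big)=0$; as the Hermitian part of $e^{j\theta}P(j\omega_0)$ is $\ge 0$, the fact that $\det M=0$ together with $M+M^{*}\ge0$ forces $\det(M+M^{*})=0$ makes $C(j\omega_0)^{-1}e^{j\theta}+C(j\omega_0)^{-*}e^{-j\theta}+P(j\omega_0)e^{j\theta}+P(j\omega_0)^{*}e^{-j\theta}$ singular, which contradicts the uniform strict bound $C(s)^{-1}e^{j\theta(s)}+C(s)^{-*}e^{-j\theta(s)}>\eta I$ on $SC_\epsilon(j\omega_0)$ coming from $\Re\!\big(e^{-j\theta(s)}C(s)\big)>0$ — this is verbatim the final step of the proof of Theorem~\ref{thm:mixture}. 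Hence $P\cls C$ is stable.

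The step I expect to be the real obstacle is making the frequency-wise switch of multiplier type fully rigorous: one must check that the resulting $\Pi$ is an admissible (bounded, measurable, conjugate-symmetric) IQC multiplier on the whole contour, and, crucially, that on the semicircles encircling the imaginary-axis poles it reduces to the phase-type multiplier — which it does because the gain of $P$ blows up there, forcing condition~i) to supply the phase bound in a punctured neighbourhood of each pole — so that the delicate local analysis around a simple imaginary-axis pole from \cite{wei2021phaseLTI} and Theorem~\ref{thm:mixture} carries over unchanged. By contrast, the accretivity bookkeeping on $\mathcal F_{\mathrm{ph}}$ (non-emptiness of the interval for $\theta(s)$, and weak/strict accretivity after rotation, including for merely semi-sectorial $P(s)$) and the positivity computations on $\mathcal F_{\mathrm g}$ are the routine matrix-level analogues of Lemmas~\ref{lem:mixGP_matrix} and \ref{lem:mixGP_matrix_nec}.
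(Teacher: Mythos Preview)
Your proposal is correct and follows essentially the same approach as the paper: build a frequency-wise switching multiplier $\Pi(s)$ (phase-type where $P(s)\in\bar{\mathcal P}(\alpha,\beta)$, gain-type otherwise), verify the two IQC inequalities along $CT^\infty_\epsilon(j\Omega_p)$, invoke \cite[Theorem~4.4]{Khong2016TAC} and \cite[Proposition~3]{Khong2018robust}, and then exclude poles in $j\Omega_p$ verbatim as in Theorem~\ref{thm:mixture}. The only cosmetic differences are that the paper fixes $\theta(s)$ by the explicit formula $\theta=\pi/2-\beta$ (or $\pi/2+\alpha$) rather than by your measurable selection from the admissible interval, and that your observation that the gain of $P$ blows up near each $j\omega_0\in j\Omega_p$---forcing the phase-type multiplier on the indentation semicircles---is left implicit in the paper.
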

	\begin{proof}
		By continuity of gains and phases of $P$ and $C$ along contours, for $\epsilon>0$ being sufficiently small, there exist piece-wise continuous functions $\tilde{\alpha}(s)$, $\tilde{\beta}(s)$ and $\tilde{\gamma}(s)$, $s\in CT_\epsilon^\infty(j\Omega_p)$, which satisfy that $P(s)\in\bar{{\mathcal P}}(\tilde{\alpha}(s),\tilde{\beta}(s))\cup\bar{{\mathcal G}}(\tilde{\gamma}(s))$ and $C(s)\in{{\mathcal P}}(-\pi-\tilde{\alpha}(s),\pi-\tilde{\beta}(s))\cap{{\mathcal G}}(\tilde{\gamma}(s)^{-1})$.
		Construct for each $s\in CT_\epsilon^\infty(j\Omega_p)$ that
		$$\delta(s):=\left\{\begin{array}{l}1,~~P(s)\in\bar{{\mathcal P}}(\alpha(s),\beta(s))\\
			0,~~\text{otherwise}~~~~~~~~~\end{array}\right.,~~\theta(s):=\left\{\begin{matrix}{\pi}/{2}-\beta(s),~~\beta(s)\geq-\alpha(s)\\
			{\pi}/{2}+\alpha(s),~~\beta(s)<-\alpha(s)\end{matrix}\right.,$$
		and
		\begin{align*}%\label{eq:Pi_thm1}
			\Pi(s):=\delta(s)\begin{bmatrix}0 & e^{j\theta(s)}I_n\\ e^{-j\theta(s)}I_n & 0\end{bmatrix} + (1-\delta(s))\begin{bmatrix}I_n & 0\\0 & -\gamma(s)^2 I_n\end{bmatrix}.
		\end{align*}
		Based on the given conditions on $P$ and $C$, one can verify that for all $s\in CT^\infty_\epsilon(j\Omega_p)$, 
		\begin{align}\label{eq:iqc_thm2}
			\begin{bmatrix} I_n\\\tau P(s) \end{bmatrix}^*\Pi(s)\begin{bmatrix} I_n\\\tau P(s) \end{bmatrix}\geq 0,~\forall~\tau\in[0,1],~~\text{and}~~
			\begin{bmatrix}-C(s) \\ I_n\end{bmatrix}^*\Pi(s)\begin{bmatrix} -C(s) \\ I_n\end{bmatrix}<0. 
		\end{align}
		It follows from \cite[Theorem~4.4]{Khong2016TAC} and \cite[Proposition~3]{Khong2018robust} that $P\cls C$ has no pole on or right to the contour $CT^\infty_\epsilon(j\Omega_p)$ for all sufficiently small $\epsilon>0$.
		The rest of the proof then follows by similar arguments as those in the proof for Theorem~\ref{thm:mixture}. 
	\end{proof}
	Theorem~\ref{thm:smallvase} reduces to Theorem~\ref{thm:mixture}  if we properly restrict the position of the imaginary-axis poles of $P$ and specifically take
	\begin{align*}
		(\alpha(\omega),\beta(\omega))=\hspace{-3pt}\left\{\begin{matrix}(\alpha(\omega),\beta(\omega)), & \hspace{-5pt}\omega\in[0,\omega_c)\\ \pi, & \hspace{-5pt}\omega \in[\omega_c,\infty] \end{matrix}\right.~\text{and}~\gamma(\omega)=\hspace{-3pt}\left\{\begin{matrix}\infty, & \hspace{-5pt}\omega\in[0,\omega_c)\\ \gamma(\omega), & \hspace{-5pt}\omega \in[\omega_c,\infty] \end{matrix}\right. \hspace{-1pt}.
	\end{align*}
	
	The following theorem develops a necessary and sufficient robust stability result when the plant $P$ is within a vase-shape uncertainty set with a $\pi$-phase spread. The matrix version of the theorem can be found in \cite{WANG2020PhaseMath}. 
	\begin{theorem}[A small Vase Theorem with Necessity]\label{thm:smallvase_Nec}
		Let $C\in\mathcal{RH}_\infty$ be frequency-wise sectorial, $g,g^{-1}\in\mathcal{RH}_\infty$ with $|g(j\omega)|\in(0,\infty)$, $\omega\in[-\infty,\infty]$, and $h,h^{-1}\in\mathcal{RH}_\infty$ with $\angle h(j\omega)\in(-\pi/2,\pi/2)$, $\omega\in[-\infty,\infty]$. Then $P\,\#\,C$ is stable for all $P\in\mathcal{RH}_\infty$ with $P(j\omega)\in\bar{{\mathcal P}}(-\pi/2+\angle h(j\omega), \pi/2+\angle h(j\omega))\cup\bar{{\mathcal G}}(|g(j\omega)|)$, $\omega\in[0, \infty]$ if and only if
		\vspace{-5pt}
		\begin{align}\label{eq:thm_stability_w_necessity}
			C(j\omega)\in{{\mathcal P}}\Big(-\pi/2-\angle h(j\omega),\pi/2-\angle h(j\omega)\Big)\cap{{\mathcal G}}\Big(|g(j\omega)|^{-1}\Big),~\omega\in[0, \infty].\end{align}
	\end{theorem}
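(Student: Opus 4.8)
The plan is to treat the two implications separately: sufficiency is an essentially immediate specialization of Theorem~\ref{thm:smallvase}, while necessity calls for a construction. For sufficiency, apply Theorem~\ref{thm:smallvase} to the data $\alpha(\omega)=-\pi/2+\angle h(j\omega)$, $\beta(\omega)=\pi/2+\angle h(j\omega)$ and $\gamma(\omega)=|g(j\omega)|$, which are continuous and avoid the degenerate values because $g,g^{-1},h,h^{-1}\in\mathcal{RH}_\infty$ and $\angle h(j\omega)\in(-\pi/2,\pi/2)$. One has $\alpha(\omega),\beta(\omega)\in(-\pi,\pi)$, $\beta(\omega)-\alpha(\omega)=\pi\in(0,\pi]$, and $-\pi-\alpha(\omega)=-\pi/2-\angle h(j\omega)$, $\pi-\beta(\omega)=\pi/2-\angle h(j\omega)$, $\gamma(\omega)^{-1}=|g(j\omega)|^{-1}$, so condition~i) of Theorem~\ref{thm:smallvase} is exactly the membership assumed of $P$ and condition~ii) is exactly \eqref{eq:thm_stability_w_necessity}. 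Since $P\in\mathcal{RH}_\infty$ has no imaginary-axis poles, Theorem~\ref{thm:smallvase} (whose proof needs only that, at each frequency, $P(j\omega)$ is sectorial or norm-bounded, both of which hold here) yields stability of $P\,\#\,C$.

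For necessity I argue by contraposition: suppose \eqref{eq:thm_stability_w_necessity} fails at some $\omega_0\in[0,\infty]$, and set $\alpha_0=-\pi/2+\angle h(j\omega_0)$, $\beta_0=\pi/2+\angle h(j\omega_0)$, $\gamma_0=|g(j\omega_0)|$, so $\beta_0-\alpha_0=\pi$ and $\alpha_0,\beta_0\in(-\pi,\pi)$. The failure says precisely that \eqref{eq:lem_matrix_2} is violated for $B=C(j\omega_0)$ with $\alpha=\alpha_0$, $\beta=\beta_0$, $\gamma=\gamma_0$; since $C(j\omega_0)$ is sectorial, Lemma~\ref{lem:mixGP_matrix_nec} then furnishes a constant matrix $A_0\in\bar{\mathcal P}(\alpha_0,\beta_0)\cup\bar{\mathcal G}(\gamma_0)$ such that $I+A_0C(j\omega_0)$ --- equivalently $I+C(j\omega_0)A_0$, the same determinant --- is singular. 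When $\omega_0\in\{0,\infty\}$ one has $\angle h(j\omega_0)=0$ and $C(j\omega_0)$ real, and then $A_0$ may be taken real, as the construction behind Lemma~\ref{lem:mixGP_matrix_nec} allows.

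It remains to lift $A_0$ to a destabilizing plant: a real-rational $P\in\mathcal{RH}_\infty$ with $P(j\omega_0)=A_0$ (hence $P(-j\omega_0)=\overline{A_0}$ by real-rationality) whose value lies in the vase set $\bar{\mathcal P}(-\pi/2+\angle h(j\omega),\pi/2+\angle h(j\omega))\cup\bar{\mathcal G}(|g(j\omega)|)$ at \emph{every} frequency. I split according to which part of the union contains $A_0$. If $A_0\in\bar{\mathcal G}(\gamma_0)$, factor $A_0=g(j\omega_0)\tilde A_0$ with $\bar\sigma(\tilde A_0)\le1$; a one-node bounded-real interpolation gives a real-rational $Q\in\mathcal{RH}_\infty$ with $Q(j\omega_0)=\tilde A_0$, $\|Q\|_\infty\le1$, and $P:=gQ$ satisfies $\bar\sigma(P(j\omega))=|g(j\omega)|\bar\sigma(Q(j\omega))\le|g(j\omega)|$ for all $\omega$. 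If $A_0\in\bar{\mathcal P}(\alpha_0,\beta_0)$, factor $A_0=h(j\omega_0)\tilde A_0$ with $\Psi(\tilde A_0)\subseteq[-\pi/2,\pi/2]$; a positive-real interpolation --- conveniently routed through the Cayley transform, which sends $\tilde A_0$ to a contraction --- gives a real-rational $R\in\mathcal{RH}_\infty$ with $R(j\omega_0)=\tilde A_0$ and $\Psi(R(j\omega))\subseteq[-\pi/2,\pi/2]$ for all $\omega$, and $P:=hR$ satisfies $\Psi(P(j\omega))\subseteq[-\pi/2+\angle h(j\omega),\pi/2+\angle h(j\omega)]$ for all $\omega$. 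Either way $P$ stays in the vase set throughout and $\det(I+C(j\omega_0)P(j\omega_0))=\det(I+C(j\omega_0)A_0)=0$. Since $C$ is sectorial, $C(j\omega_0)$ is invertible, so $P$ and $C$ have no unstable pole-zero cancellation; taking the rank-one $A_0$ the construction supplies, $I+C(j\omega_0)P(j\omega_0)$ has corank one, so $(I+CP)^{-1}$ genuinely has a pole at $s=j\omega_0$ and $P\,\#\,C\notin\mathcal{RH}_\infty^{2n\times2n}$, contradicting stability for every $P$ in the vase set.

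The step I expect to be the main obstacle is the lifting: one must interpolate a prescribed, generally complex, matrix value at the critical frequency while keeping a frequency-wise gain bound, respectively phase/sector constraint, in force everywhere else. The scalar factors $g$ and $h$ are what make this tractable, since absorbing them reduces the task to interpolating a contractive, respectively positive-real, constant matrix by a stable real-rational function; one must still be careful at $\omega_0\in\{0,\infty\}$, where realness of $A_0$ is needed (and available), and, should $A_0$ land on the boundary of the relevant gain or phase set, a small perturbation of the interpolation node is required to make the Cayley/bounded-real step give a function in $\mathcal{RH}_\infty$.
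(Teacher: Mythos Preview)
Your sufficiency argument is identical to the paper's: apply Theorem~\ref{thm:smallvase} with $\alpha(\omega)=-\pi/2+\angle h(j\omega)$, $\beta(\omega)=\pi/2+\angle h(j\omega)$, $\gamma(\omega)=|g(j\omega)|$.

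For necessity, both you and the paper argue by contraposition and split into gain and phase cases, but the packaging differs. The paper splits on which part of \eqref{eq:thm_stability_w_necessity} fails for $C(j\omega_0)$ and then simply cites the necessity halves of the small gain theorem \cite[Theorem~8.1]{Zhou1998Essential} and the small phase theorem \cite[Theorem~2]{wei2021phaseLTI} to produce a destabilizing $\tilde P$ lying entirely in $\bar{\mathcal G}(|g(j\omega)|)$, respectively $\bar{\mathcal P}(-\pi/2+\angle h(j\omega),\pi/2+\angle h(j\omega))$. You instead invoke Lemma~\ref{lem:mixGP_matrix_nec} to obtain the bad matrix $A_0$ first, split on whether $A_0$ sits in the gain ball or the phase sector, and then lift $A_0$ to a system by explicit bounded-real or positive-real interpolation after absorbing the scalar weights $g$ and $h$. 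This is effectively reproving, in outline, the cited necessity results; the role of $g$ and $h$ as weights that reduce the problem to a contractive/positive-real interpolation is precisely what makes those citations work in the frequency-dependent setting, so the two arguments are the same under the hood. Your route is more self-contained; the paper's is shorter because it delegates the interpolation step. The technical caveats you flag (realness of $A_0$ at $\omega_0\in\{0,\infty\}$, boundary cases for the Cayley step, ensuring the interpolant stays sectorial rather than merely semi-sectorial so that $P(j\omega)\in\bar{\mathcal P}$ as defined) are genuine but standard, and are exactly what the cited theorems handle; the rank-one remark is not needed, since $\det(I+C(j\omega_0)P(j\omega_0))=0$ with $P,C\in\mathcal{RH}_\infty$ already forces $P\,\#\,C\notin\mathcal{RH}_\infty^{2n\times 2n}$.
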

	\begin{proof}
		The sufficiency follows from Theorem~\ref{thm:smallvase} by letting $\alpha(\omega)=-\pi/2+\angle h(j\omega)$, $\beta(\omega)=\pi/2+\angle h(j\omega)$, and $\gamma(\omega)=|g(j\omega)|$.
		
		As for the necessity, we suppose to the contraposition that $C$ does not satisfy \eqref{eq:thm_stability_w_necessity}. In this case, there exists an $\omega_0\in[-\infty,\infty]$ such that (i) $\bar{\sigma}(C(j\omega_0)) \geq \gamma(\omega_0)^{-1}$ or (ii) $\bar{\phi}(C(j\omega_0)) \geq \pi/2+\angle h(j\omega)$ or (iii) $\underline{\phi}(C(j\omega_0)) \leq -\pi/2+\angle h(j\omega)$. 
		
		In case (i), we obtain from \cite[Theorem~8.1]{Zhou1998Essential} that there exists $\tilde{P}$ with $\tilde{P}(j\omega)\in\bar{{\mathcal G}}(\gamma(\omega))$, $\omega\in[-\infty, \infty]$, such that $\tilde{P}\,\#\,C$ is unstable. In cases (ii) or (iii), we obtain from \cite[Theorem~2]{wei2021phaseLTI} that there exists $\tilde{P}$ with $\tilde{P}(j\omega)\in\bar{{\mathcal P}}(-\pi/2+\angle h(j\omega), \pi/2+\angle h(j\omega))$, $\omega\in[-\infty, \infty]$, such that $\tilde{P}\,\#\,C$ is unstable. By contraposition, we show the necessity. 
	\end{proof}
	
	In contrast to the above theorem, one may wonder naturally if there is a parallel result when the uncertainty set is in a fan-shape region. To be precise, we hope to find a mixed gain and phase characterization for $C\in\mathcal{RH}_\infty$ so that  $P\,\#\,C$ is stable for all $P\in\mathcal{RH}_\infty$ with $P(j\omega)\in{{\mathcal P}}(-\pi/2+\angle h(j\omega), \pi/2+\angle h(j\omega))\cap{{\mathcal G}}(|g(j\omega)|)$, $\omega\in[-\infty, \infty]$. Taking the set for $C$ as $\{C\in\mathcal{RH}_\infty:~C(j\omega)\in\bar{{\mathcal P}}(-\pi/2+\angle h(j\omega), \pi/2+\angle h(j\omega))\cup\bar{{\mathcal G}}(|g(j\omega)|)$, $\omega\in[-\infty, \infty]$\} results in a sufficient robust stability condition for $P\cls C$ in light of Theorem~\ref{thm:smallvase}, while a necessary and sufficient condition is still undisclosed. Finding the necessary and sufficient condition is challenging, and we call it a ``half-disk'' problem, as a vivid illustration on how we characterize the set of uncertainty as in Fig.~\ref{fig:fan}. We believe it is still an open problem, even for the matrix case.

	%\footnote{A transfer function $g(s)\in\mathcal{RH}_\infty$ is said to be strongly positive real if $g(j\omega)+g(-j\omega)>0$ for all $\omega\in[-\infty,\infty]$.}. 
	\vspace{5pt}

	%--------------------------
	\subsection{Application: Robust Stabilization}\label{sec:simu_stab}
	{Given matrix $K\in\mathbb{R}^{n\times n}$, consider a set of matrices $\mathcal{S}_{\delta,\eta}(K) :=\{A\in\mathbb{R}^{n\times n}:~KA~\text{is accretive},~\underline{\sigma}(A)\geq \delta,~\bar{\sigma}(A)\leq \eta\}$ and a set of stable systems $\mathcal{B}_\gamma:=\{P\in\mathcal{RH}_\infty^{n\times n}:~\|P\|_\infty\leq \gamma\}$. 
		%Consider a finite set of systems given by 
		%$$G_i(s)=\frac{A_i}{s}+P_i(s),~i=1,2,\dots,m,$$
		%where $A_i\in{\mathbb R}^{n\times n}$ and $P_i\in\mathcal{RH}^{n\times n}_\infty$. 
		\begin{theorem}
			For $\gamma,\delta,\eta>0$ and $K\in\mathbb{R}^{n\times n}$ being nonsingular, the following set of systems $$\mathcal{M}:=\left\{\frac{A}{s}+P(s):~A\in\mathcal{S}_{\delta,\eta}(K),~P\in \mathcal{B}_\gamma\right\}$$ can be robustly stabilized by $\epsilon K$ with any $\epsilon>0$ being sufficiently small. 
		\end{theorem}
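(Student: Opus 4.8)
The plan is to reduce the claim to the mixed gain--phase criterion of Theorem~\ref{thm:mixture} by a constant change of coordinates. Fix $G(s)=\frac{A}{s}+P(s)\in\mathcal{M}$ with $A\in\mathcal{S}_{\delta,\eta}(K)$ and $P\in\mathcal{B}_\gamma$. Since $K$ is constant and nonsingular, a direct computation with the Gang-of-Four matrices yields
\[
G\cls(\epsilon K)=\begin{bmatrix}I&0\\0&K^{-1}\end{bmatrix}\bigl[(KG)\cls(\epsilon I)\bigr]\begin{bmatrix}I&0\\0&K\end{bmatrix},
\]
so $G\cls(\epsilon K)$ is stable if and only if $(KG)\cls(\epsilon I)$ is stable. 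Hence it suffices to stabilize the plant $\tilde G:=KG=\frac{KA}{s}+KP$ with the static controller $\epsilon I$. Observe that $\tilde G$ is a real-rational, proper, semi-stable system whose only imaginary-axis pole is a simple pole at the origin (so $\Omega_p=\{0\}$); that $KA$ is real, nonsingular and accretive with $\bar\sigma(KA)\le\|K\|\eta$; that $\|KP\|_\infty\le\|K\|\gamma$; and that $\epsilon I\in\mathcal{RH}_\infty^{n\times n}$ is frequency-wise sectorial with all phases equal to $0$.

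I would then apply Theorem~\ref{thm:mixture} to $\tilde G$ and $\epsilon I$ with a cut-off frequency $\omega_c>0$ taken small (and $\epsilon$ taken small afterwards). For the high-frequency gain condition~ii): for $\omega\ge\omega_c$,
\[
\bar\sigma(\tilde G(j\omega))\le\frac{\|K\|\eta}{\omega}+\|K\|\gamma\le\|K\|\Bigl(\frac{\eta}{\omega_c}+\gamma\Bigr)=:M,
\]
so $\bar\sigma(\tilde G(j\omega))\,\bar\sigma(\epsilon I)\le\epsilon M<1$ whenever $\epsilon<1/M$. For the low-frequency phase condition~i): since $\epsilon I$ has $\bar\phi=\underline\phi=0$, condition~i) only asks that $\tilde G(j\omega)$ be sectorial with phase sector inside $(-\pi,\pi)$ for $0<\omega<\omega_c$. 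Now $\tilde G(j\omega)=\omega^{-1}e^{-j\pi/2}\bigl(KA+j\omega\,KP(j\omega)\bigr)$, i.e.\ $\tilde G(j\omega)$ is a positive scaling and a $-\pi/2$ rotation of $KA+j\omega\,KP(j\omega)$, which is a norm-$O(\omega)$ perturbation of the accretive matrix $KA$; since $0\notin\mathcal{W}(KA)$, for $\omega_c$ small enough $KA+j\omega\,KP(j\omega)$ stays accretive for all $|\omega|<\omega_c$, whence $\tilde G(j\omega)$ is sectorial with phase sector in $(-\pi,0)$, giving $\bar\phi(\tilde G(j\omega))<\pi$ and $\underline\phi(\tilde G(j\omega))>-\pi$. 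The same estimate (now with $|s|=\epsilon_0$) shows that along the small semicircular indentation at $s=0$ the matrix $\tilde G(s)$ is sectorial and of constant rank $n$, so $\tilde G$ is indeed semi-stable frequency-wise semi-sectorial over $(-\omega_c,\omega_c)$ (the normalization $\phi_c(\tilde G(\epsilon_0))=0$ is automatic, $\tilde G(\epsilon_0)$ being a real accretive matrix). Theorem~\ref{thm:mixture} then gives stability of $\tilde G\cls(\epsilon I)$, hence of $G\cls(\epsilon K)$, for all $0<\epsilon\le\epsilon^*:=1/M$. One may equally invoke Theorem~\ref{thm:smallvase} with $\alpha(\omega)\equiv-\pi+\rho$, $\beta(\omega)\equiv0$, $\gamma(\omega)\equiv M$ for a small $\rho>0$, to the same effect.

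The one genuine difficulty is the integrator near DC: the cut-off $\omega_c$ must be small enough that the fixed-modulus but otherwise arbitrary perturbation $j\omega\,KP(j\omega)$ cannot rotate $KA$ out of the accretive cone before the small-gain estimate takes over at $\omega=\omega_c$ --- quantitatively, $\omega_c$ should be of order $\mathrm{dist}\bigl(0,\mathcal{W}(KA)\bigr)/(\|K\|\gamma)$, and this choice then fixes $M$ and hence the threshold $\epsilon^*$. Everything else --- semi-stability and properness of $\tilde G$, simplicity of the pole at $s=0$, sectoriality of $\epsilon I$, and the absence of unstable pole--zero cancellations (immediate, since $\tilde G(j\omega)$ is nonsingular as $\omega\to0$ and $\epsilon I$ is static and invertible) --- is verified directly.
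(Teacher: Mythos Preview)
Your proposal is correct and follows essentially the same route as the paper: pre-multiply the plant by $K$ so that the controller becomes $\epsilon I$, use accretivity of $KA$ together with the $O(\omega)$ perturbation $j\omega\,KP(j\omega)$ to secure the low-frequency phase condition, bound the gain by $\|K\|(\eta/\omega_c+\gamma)$ at high frequency, and invoke Theorem~\ref{thm:mixture}. Your write-up is in fact a bit more careful than the paper's in two places---you spell out the Gang-of-Four identity that justifies passing from $G\cls(\epsilon K)$ to $(KG)\cls(\epsilon I)$, and you correctly note that $\tilde G(j\omega)$ has phase sector in $(-\pi,0)$ rather than being accretive---so nothing needs to be changed.
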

		
		\begin{proof} 
			For any $C\in\mathbb{C}^{n\times n}$, denote by $\bar{\kappa}(C):=\max\{|x^*Cx|:~x\in\mathbb{C}^n,~|x|=1\}$ the numerical radius of $C$ and by $\underline{\kappa}(C):=\min\{|x^*Cx|:~x\in\mathbb{C}^n,~|x|=1\}$ the least distance from the numerical range of $C$ to the origin. It is clear that $\|C\| = \bar{\sigma}(C) \geq \bar{\kappa}(C) \geq \underline{\kappa}(C) \geq \underline{\sigma}(C)$. 
			
			Since $KA$ is accretive and $\underline{\sigma}(A)\geq \delta$, it holds for all $C$ with $\|C\|<\delta\underline{\sigma}(K)$ that 
			$$\underline{\kappa}(KA+C)\geq\underline{\sigma}(KA+C) > \underline{\sigma}(K) \underline{\sigma}(A)-\|C\| =  \underline{\sigma}(K)(\underline{\sigma}(A)-\delta)\geq 0,$$
			whereby $KA+C$ is sectorial. Since $KA+C$ is sectorial for all $C$ with $\|C\|<\delta\underline{\sigma}(K)$ and $KA$ is accretive, by continuity there exists a $\tilde{\delta}>0$ such that for all $\tilde{C}$ with $\|\tilde{C}\|<\tilde{\delta}$,  $KA+\tilde{C}$ is acrretive. 
			Together with $P\in \mathcal{B}_\gamma$, we have that
			$\frac{KA}{j\omega}+KP(j\omega)$ is accretive for all $\omega\in(0,\frac{\tilde{\delta}}{\gamma\bar{\sigma}(K)})$. Let $\omega_c\in(0,\frac{\tilde{\delta}}{\gamma\bar{\sigma}(K)})$. On the other hand, as $\bar{\sigma}(KA)\leq\bar{\sigma}(K)\bar{\sigma}(A)\leq \eta\bar{\sigma}(K)$, it holds that
			$$\left\|\frac{KA}{j\omega}+KP(j\omega)\right\|\leq \frac{\eta\bar{\sigma}(K)}{\omega_c}+\gamma\bar{\sigma}(K)=:c,~\forall~\omega\in[\omega_c,\infty].$$
			Therefore, invoking Theorem~\ref{thm:mixture} on feedback interconnection between $\frac{KA}{s}+KP(s)$ and $\epsilon I$ with $\epsilon\in(0,1/c)$ and cutoff frequency $\omega_c$, we obtain the feedback stability of $(\frac{KA}{s}+KP(s))\cls (\epsilon I)$ as well as that of $(\frac{A}{s}+P(s))\cls (\epsilon K)$. 
		\end{proof}
	}
	\section{Mixed Gain/Phase Stability Condition by Davis-Wielandt Shell}\label{sec:DW shell}
	In this section, we first try to recover the small phase and small gain theorems using the Davis-Wielandt (DW) shell. Afterwards, we explore a mixed gain/phase stability result based on the shell. 
	
	The Davis-Wielandt shell \cite{davis1968shell,lestas2012large}, defined by $$\mathcal{DW}(A):=\{(\text{Re}~x^*Ax,\text{Im}~x^*Ax,\|Ax\|^2):~x\in\mathbb{C}^n,\|x\|=1\},$$ 
	is a higher dimensional generalization of the numerical range. 
	\subsection{Vertically Projected DW-Shell and Constrained Phases}
	First of all, we extend the definition of sectorial matrices using the notion of DW-shell. For $A\in\mathbb{C}^{n\times n}$ and $r\in\mathbb{R}$, define the vertically projected (or, v-projected) DW-shell by
	$${\mathcal{W}_{\geq r}(A)}:=\left\{(p,q):~(p,q,h^2)\in\mathcal{DW}(A),~h\geq r\right\}.$$
	It is named in this way because $\mathcal{W}_{\geq r}(A)$ is essentially the projection of the components in the whole shell $\mathcal{DW}(A)$ above the hyperplane $z=r^2$ in $\mathbb{R}^3=\{(x,y,z)\}$ onto the hyperplane. It is noteworthy that $\mathcal{W}_{\geq r}(A) = \mathcal{W}(A)$ when $r\leq \underline{\sigma}(A)$, revealing the fact that DW-shell is a generalization of numerical range. 
	
	\begin{proposition}\label{prop:compact_conv}
		The v-projected DW-shell $\mathcal{W}_{\geq r}(A)$ is compact and convex.
	\end{proposition}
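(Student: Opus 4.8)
The plan is to establish compactness and convexity separately, leaning on the known structure of the Davis--Wielandt shell $\mathcal{DW}(A)$ itself. For \textbf{compactness}, I would first recall that $\mathcal{DW}(A)$ is a compact subset of $\mathbb{R}^3$, being the continuous image of the unit sphere $\{x\in\mathbb{C}^n:\|x\|=1\}$ under the map $x\mapsto(\mathrm{Re}\,x^*Ax,\ \mathrm{Im}\,x^*Ax,\ \|Ax\|^2)$. Then $\mathcal{W}_{\geq r}(A)$ is the image, under the projection $(p,q,h^2)\mapsto(p,q)$, of the set $\mathcal{DW}(A)\cap\{(p,q,z):z\geq r^2\}$. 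The latter is closed (intersection of a compact set with a closed halfspace) and bounded, hence compact; its image under the continuous coordinate projection is therefore compact. This handles compactness with no real obstacle.

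For \textbf{convexity}, the key fact to invoke is that $\mathcal{DW}(A)$ is itself a convex subset of $\mathbb{R}^3$ when $n\geq 3$ (and, more carefully, one should note the classical result on convexity of the Davis--Wielandt shell, which holds in general for $n\ge 2$ over $\mathbb{C}$ by the Toeplitz--Hausdorff-type argument extended to this setting). Given convexity of $\mathcal{DW}(A)$, I would argue as follows: take two points $(p_1,q_1),(p_2,q_2)\in\mathcal{W}_{\geq r}(A)$. By definition there exist heights $h_1,h_2\geq r$ with $(p_i,q_i,h_i^2)\in\mathcal{DW}(A)$. For $\lambda\in[0,1]$, convexity of $\mathcal{DW}(A)$ gives
$$\big(\lambda p_1+(1-\lambda)p_2,\ \lambda q_1+(1-\lambda)q_2,\ \lambda h_1^2+(1-\lambda)h_2^2\big)\in\mathcal{DW}(A).$$
Now $\lambda h_1^2+(1-\lambda)h_2^2\geq r^2$ since it is a convex combination of quantities each $\geq r^2$. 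Hence the convex combination of $(p_1,q_1)$ and $(p_2,q_2)$ lies in $\mathcal{W}_{\geq r}(A)$, establishing convexity. The point worth spelling out is precisely this monotonicity of the third coordinate under convex combination, which is what makes the "projection of an upper slice" inherit convexity.

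The \textbf{main obstacle} I anticipate is the appeal to convexity of $\mathcal{DW}(A)$ for small $n$. The Davis--Wielandt shell is known to be convex for all $n$ over the complex field (this parallels the Toeplitz--Hausdorff theorem for the numerical range), but if the paper only wants a self-contained argument, one should either cite the standard reference for convexity of the DW-shell or give a short reduction: for fixed $x_1,x_2$ one restricts $A$ to the (at most) two-dimensional subspace they span and verifies convexity there by a direct parametrization of the sphere $\{ \cos\xi\, x_1 + e^{i\psi}\sin\xi\, x_2\}$. Once that ingredient is in hand, the remainder is routine. I would therefore structure the write-up as: (1) state/cite compactness and convexity of $\mathcal{DW}(A)$; (2) deduce compactness of $\mathcal{W}_{\geq r}(A)$ as a continuous image of a compact set; (3) deduce convexity via the convex-combination-of-heights argument above.
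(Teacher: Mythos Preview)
Your compactness argument is correct and essentially matches the paper's approach (the paper phrases it as a Cauchy-sequence argument on the preimage set $X=\{x:\|x\|=1,\ \|Ax\|\ge r\}$, but this is the same idea as ``continuous image of a compact set'').

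For convexity, your argument is also the paper's argument when $n\ge 3$: intersect the convex set $\mathcal{DW}(A)$ with the closed halfspace $\{z\ge r^2\}$ and project. However, there is a genuine gap in your treatment of $n=2$. Your claim that ``the Davis--Wielandt shell is known to be convex for all $n\ge 2$ over the complex field'' is false. For $n=2$ the shell $\mathcal{DW}(A)$ is, in general, only the \emph{surface} of an ellipsoid in $\mathbb{R}^3$ (an ``affine sphere'' in the terminology of the reference the paper cites), not a solid body. A quick example: for $A=\begin{bmatrix}0&1\\0&0\end{bmatrix}$ one computes $\mathcal{DW}(A)=\{(p,q,u):p^2+q^2+(u-\tfrac12)^2=\tfrac14\}$, and the midpoint of the two shell points $(0,0,0)$ and $(0,0,1)$ is $(0,0,\tfrac12)\notin\mathcal{DW}(A)$. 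Consequently your convex-combination-of-heights step fails: the interpolated triple need not lie in $\mathcal{DW}(A)$ at all.

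The paper closes this gap by treating $n=2$ separately: since $\mathcal{DW}(A)$ is an ellipsoidal surface, the vertical projection of its upper cap $\{z\ge r^2\}$ onto the plane $z=r^2$ is a filled ellipse, hence convex. Your suggested fallback---restricting to the two-dimensional span of $x_1,x_2$ and parametrizing the sphere there---runs into exactly the same obstruction, since that restriction is again a $2\times 2$ problem with a non-convex shell. So you need a dedicated $n=2$ argument along the lines the paper gives, rather than an appeal to convexity of $\mathcal{DW}(A)$.
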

	\begin{proof}
		The case when $r>\bar{\sigma}(A)$ is trivial since the projected shell becomes an empty set. In what follows, we consider the case when $r\in[0,\bar{\sigma}(A)]$. Denote by $X:=\{x\in\mathbb{C}^n:~\|Ax\| \geq r,~\|x\|=1\}$, which is non-empty and compact. Then it can be verified that
		\begin{align}\label{eq:pf_prop1}
			\{(\Re x^*Ax,\Im x^*Ax):~x\in X\}=\mathcal{W}_{\geq r}(A).
		\end{align}
		
		Next we show that the projected DW-shell $\mathcal{W}_{\geq r}(A)$ is compact. Clearly it is bounded. Let $\{f_i=(\Re x_i^*Ax_i,\Im x_i^*Ax_i)\}_{i=1}^\infty\subset\mathcal{W}_{\geq r}(A)$ be an arbitrary Cauchy sequence. Note that each $x_i$ belongs to $X$, which is a compact set, whereby $x=\lim_{i\to\infty} x_i \in X$. Using \eqref{eq:pf_prop1}, we obtain that
		$$\lim_{i\to\infty}f_i=\lim_{i\to\infty}(\Re x_i^*Ax_i,\Im x_i^*Ax_i)
		\in \{(\Re x^*Ax,\Im x^*Ax):~x\in X\}=\mathcal{W}_{\geq r}(A),
		$$
		yielding that the projected DW-shell is compact.
		
		Lastly, we show its convexity. By \cite[Theorem~2]{binding1985}, we know $\mathcal{DW}(A)$ is convex when $n\neq 2$, whereby $\{(p,q,h^2)\in\mathcal{DW}(A):~h\geq r\}:=S$ is convex. Since $\mathcal{W}_{\geq r}(A)$ is the projection of $S$ onto the hyperplane $z=r^2$ in $\mathbb{R}^3=\{(x,y,z)\}$, it is convex as well. On the other hand, when $n=2$, we obtain from \cite[Theorem~2]{binding1985} that $\mathcal{DW}(A)$ is an affine sphere. Clearly, $\mathcal{W}_{\geq r}(A)$, as the projection of the upper part ($z\geq r^2$) of the affine sphere $\mathcal{DW}(A)$ onto the hyperplane $z=r^2$, should be convex.
	\end{proof}
	
	\begin{definition}[Constrained Sectorial Matrices]
		We say a matrix $A\in\mathbb{C}^{n\times n}$ is $r$-sectorial if $$(0,0)\notin \mathcal{W}_{\geq r}(A).$$
	\end{definition}
	In particular, when $r=0$, $0$-sectorial reduces to our usual definition for sectorial matrices. When $r>\bar{\sigma}(A)$, $\mathcal{W}_{\geq r}(A)$ is an empty set. 
	
	Since the projected DW-shell $\mathcal{W}_{\geq r}(A)$ is compact and convex, $(0,0)$ being outside the set implies that $\mathcal{W}_{\geq r}(A)$ is located in a certain open half-plane on $\mathbb{R}^2$. Consequently, a phase sector can be defined to measure the phase spread of $\mathcal{W}_{\geq r}(A)$, as detailed in the following.
	\begin{definition}[Constrained Phase Sector]
		For $A\in\mathbb{C}^{n\times n}$ and $r\in[\underline{\sigma}(A),\bar{\sigma}(A)]$, the constrained phase sector of an $r$-sectorial matrix $A$ is defined as $\Psi_r(A):=[\underline{\psi}_r(A),\bar{\psi}_r(A)]$ with
		\begin{align*}
			&\underline{\psi}_r(A):=\min_{z} \{\angle z:~(\Re z, \Im z)\in\mathcal{W}_{\geq r}(A)\},\\
			&\bar{\psi}_r(A):=\max_{z} \{\angle z:~(\Re z, \Im z)\in\mathcal{W}_{\geq r}(A)\}
		\end{align*}
		satisfying that
		$$\frac{\underline{\psi}_r(A)+\bar{\psi}_r(A)}{2}\in(-\pi,\pi]
		~~\text{and}~~
		\bar{\psi}_r(A)-\underline{\psi}_r(A)\in(0,\pi].$$
		For notational convenience, for $r >\bar{\sigma}(A)$, we let $\underline{\psi}_r(A)=+\infty$ and $\bar{\psi}_r(A) = -\infty$. 
	\end{definition}

	The constrained phases possess the following monotone property. 
	\begin{proposition}\label{prop:monotone_dw_phase}
		Let $r\in\mathbb{R}$, $\tau\in(0,1]$ and $A\in\mathbb{C}^{n\times n}$ be $r$-sectorial. Then the following statements are true. 
		{\begin{itemize}
				\item [(a)] $\underline{\psi}_r(A)$ is an increasing function of $r$ while $\bar{\psi}_r(A)$ is a decreasing one. 
				\item [(b)] 
				$\underline{\psi}_r(\tau A)$ is a decreasing function of $\tau$ while $\bar{\psi}_r(\tau A)$ is an increasing one. 
		\end{itemize}}
	\end{proposition}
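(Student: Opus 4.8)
\emph{Overview.} The plan is to prove part (a) directly from the fact that the family $\{\mathcal{W}_{\geq r}(A)\}_r$ is nested in $r$, and then to deduce part (b) from part (a) by an elementary scaling identity for the v-projected DW-shell, so that essentially all of the content is in (a).

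\emph{Part (a).} For $r_1\le r_2$ one has $\{x\in\mathbb{C}^n:\|x\|=1,\ \|Ax\|\ge r_2\}\subseteq\{x\in\mathbb{C}^n:\|x\|=1,\ \|Ax\|\ge r_1\}$, so by the characterization \eqref{eq:pf_prop1} used in the proof of Proposition~\ref{prop:compact_conv} we get $\mathcal{W}_{\geq r_2}(A)\subseteq\mathcal{W}_{\geq r_1}(A)$. In particular $r$-sectoriality is inherited upward in $r$; and if $\mathcal{W}_{\geq r_2}(A)=\emptyset$ the claim holds trivially by the convention $\underline{\psi}_{r}=+\infty,\ \bar{\psi}_{r}=-\infty$, so assume $\mathcal{W}_{\geq r_2}(A)\neq\emptyset$. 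By Proposition~\ref{prop:compact_conv} both $\mathcal{W}_{\geq r_1}(A)$ and $\mathcal{W}_{\geq r_2}(A)$ are compact, convex, and miss the origin, hence (by strict separation from $\{0\}$) each lies in an \emph{open} half-plane through the origin, so the arguments it attains form an interval of length less than $\pi$. Measuring arguments with a single branch of $\angle(\cdot)$ adapted to the larger set $\mathcal{W}_{\geq r_1}(A)$, the argument interval of $\mathcal{W}_{\geq r_2}(A)$ is then a subinterval of that of $\mathcal{W}_{\geq r_1}(A)$; taking endpoints yields $\underline{\psi}_{r_1}(A)\le\underline{\psi}_{r_2}(A)$ and $\bar{\psi}_{r_2}(A)\le\bar{\psi}_{r_1}(A)$, which is exactly the asserted monotonicity.

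\emph{Part (b).} I would first record, from $x^*(\tau A)x=\tau\,x^*Ax$ and $\|(\tau A)x\|^2=\tau^2\|Ax\|^2$, the scaling identity
\[\mathcal{DW}(\tau A)=\{(\tau p,\tau q,\tau^2 h^2):(p,q,h^2)\in\mathcal{DW}(A)\},\]
which, after intersecting with $\{h\ge r\}$ and projecting, gives for $\tau>0$
\[\mathcal{W}_{\geq r}(\tau A)=\tau\cdot\mathcal{W}_{\geq r/\tau}(A).\]
Since multiplication by a positive scalar leaves the argument of every nonzero point unchanged, this forces $\Psi_r(\tau A)=\Psi_{r/\tau}(A)$, i.e. $\underline{\psi}_r(\tau A)=\underline{\psi}_{r/\tau}(A)$ and $\bar{\psi}_r(\tau A)=\bar{\psi}_{r/\tau}(A)$ (the empty-set/$\pm\infty$ convention matches on both sides because $\bar{\sigma}(\tau A)=\tau\bar{\sigma}(A)$). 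The nesting from part (a) also guarantees that $A$ is $(r/\tau)$-sectorial for every $\tau\in(0,1]$, since $r/\tau\ge r$. Now $\tau\mapsto r/\tau$ is strictly decreasing on $(0,1]$, so composing with part (a) shows $\tau\mapsto\underline{\psi}_r(\tau A)$ is decreasing and $\tau\mapsto\bar{\psi}_r(\tau A)$ is increasing, which is (b).

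\emph{Main obstacle.} The set-theoretic nesting and the scaling identity are routine. The one delicate point is the branch bookkeeping in part (a): the sectors $\Psi_r(A)$ are normalized so that their centers lie in $(-\pi,\pi]$, and applying this normalization pointwise in $r$ can insert a spurious $2\pi$ jump precisely when the center of $\mathcal{W}_{\geq r}(A)$ migrates across $\pm\pi$ as $r$ varies. The clean remedy is to read the monotonicity with a branch of $\angle(\cdot)$ chosen continuously in $r$ --- equivalently, to track the two extreme supporting rays of $\mathcal{W}_{\geq r}(A)$ emanating from the origin continuously; for the systems of interest here the phase centers stay bounded away from $\pm\pi$, so no normalization jump occurs and the nesting argument applies verbatim. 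I expect this continuity/branch step to be the only part that requires more than bookkeeping.
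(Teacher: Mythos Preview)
Your proof is correct and follows the same route as the paper: part (a) via the nesting $\mathcal{W}_{\geq r_2}(A)\subseteq\mathcal{W}_{\geq r_1}(A)$ for $r_1\le r_2$, and part (b) via a scaling identity that reduces the question to (a). The paper's own argument for (b) actually contains a slip---it writes $\|\tau Ax\|\ge r\Leftrightarrow\|Ax\|\ge r/\sqrt{\tau}$, whereas your threshold $r/\tau$ is the correct one---though this does not affect the monotonicity conclusion, and your attention to the branch normalization in (a) is more careful than the paper's treatment.
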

	\begin{proof}
		The statement~(a) follows from the set inclusion:
		$$r\leq h~~\Rightarrow~~\mathcal{W}_{\geq r}(A)\supset\mathcal{W}_{\geq h}(A). $$
		
		As for statement~(b), for any $x\in\mathbb{C}^n$ with unit length, we have 
		$$\|\tau Ax\| \geq r ~~\Leftrightarrow~~\|Ax\| \geq r/\sqrt{\tau},$$
		whereby 
		$$\bar{\psi}_r(\tau A) = \bar{\psi}_{r/\sqrt{\tau}}(A)~~\text{and}~~\underline{\psi}_r(\tau A) = \underline{\psi}_{r/\sqrt{\tau}}(A).$$
		Noting that $r/\sqrt{\tau}$ is an decreasing function of $\tau$, we therefore show statement~(b) using statement (a) and the composition rules of monotone functions.  
	\end{proof}
	The constrained phase sector can be computed via a semi-definite programming (SDP) as revealed in the following proposition. 
	{\begin{proposition}\label{prop:compute_constrained_phase}
			Suppose that $A\in\mathbb{C}^{n\times n}$ is $r$-sectorial with $\Psi_r(A)\subset(-\pi/2,\pi/2)$. Let its Hermitian and skew-Hermitian parts be $A_h=(A+A^*)/2$ and $A_s=(A-A^*)/(2j)$, respectively.  Then $\underline{\psi}_r(A)=\arctan(g^\star)$ and $\bar{\psi}_r(A)=\arctan(h^\star)$, where $g^\star$ and $h^\star$ are respectively the optimal values obtained by solving the following SDP problems:
			\begin{align}
				&\label{eq:sdp1}\max_{g,\tau}~ g,~~\text{\rm s.t.}~~ gA_h-A_s+\tau(A^*A-r^2 I) \leq 0,~\tau\geq 0;~\text{and}\\
				&\label{eq:sdp2}\min_{h,\tau}~ h,~~\text{\rm s.t.}~~ A_s-hA_h+\tau(A^*A-r^2 I) \leq 0,~\tau\geq 0.
			\end{align}
		\end{proposition}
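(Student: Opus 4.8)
The plan is to convert each endpoint of the constrained phase sector into an extremal value of a ratio of two real quadratic forms over the constraint set, and then to recognize \eqref{eq:sdp1}--\eqref{eq:sdp2} as lossless S-procedure relaxations of those extremal problems. Throughout write $A=A_h+jA_s$ with $A_h,A_s$ Hermitian, so that $\Re(x^*Ax)=x^*A_hx$ and $\Im(x^*Ax)=x^*A_sx$ are real for every $x\in\mathbb{C}^n$.

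\emph{Step 1: reduction to a constrained ratio of quadratic forms.} By the proof of Proposition~\ref{prop:compact_conv}, in particular \eqref{eq:pf_prop1},
\[
\mathcal{W}_{\geq r}(A)=\bigl\{(x^*A_hx,\,x^*A_sx):~x\in X\bigr\},\qquad X:=\bigl\{x\in\mathbb{C}^n:~\|x\|=1,~x^*(A^*A-r^2I)x\geq 0\bigr\},
\]
with $X$ nonempty and compact. The hypothesis $\Psi_r(A)\subset(-\pi/2,\pi/2)$ forces $x^*A_hx>0$ for all $x\in X$, hence, by degree-two homogeneity, on the whole cone $\{x\neq 0:~x^*(A^*A-r^2I)x\geq 0\}$; there $\angle(x^*Ax)=\arctan\bigl(x^*A_sx/x^*A_hx\bigr)$. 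Since $\arctan$ is strictly increasing and $X$ is compact with $x^*A_hx$ bounded below by a positive constant on it, the quantities
\[
h^{\star}_{p}:=\max_{x\in X}\frac{x^*A_sx}{x^*A_hx},\qquad g^{\star}_{p}:=\min_{x\in X}\frac{x^*A_sx}{x^*A_hx}
\]
are attained and $\bar{\psi}_r(A)=\arctan(h^{\star}_{p})$, $\underline{\psi}_r(A)=\arctan(g^{\star}_{p})$.

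\emph{Step 2: homogenization and the S-procedure.} Because $x^*A_hx>0$ on the constraint cone, for real $h$ the bound $h\geq h^{\star}_{p}$ is equivalent to the implication
\[
x^*(A^*A-r^2I)x\geq 0\ \Longrightarrow\ x^*(A_s-hA_h)x\leq 0\qquad\text{for all }x\in\mathbb{C}^n
\]
(the normalization $\|x\|=1$ drops by homogeneity, and $x=0$ is trivial). Granting the strict-feasibility point of Step~3, the S-lemma for a single complex quadratic constraint is lossless, so this implication holds if and only if there is a $\tau\geq 0$ with $hA_h-A_s-\tau(A^*A-r^2I)\succeq 0$, i.e. if and only if $(h,\tau)$ is feasible for \eqref{eq:sdp2}. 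Hence the feasible values of $h$ in \eqref{eq:sdp2} form the set $[h^{\star}_{p},\infty)$, its minimum $h^{\star}$ equals $h^{\star}_{p}$, and therefore $\bar{\psi}_r(A)=\arctan(h^{\star})$. The symmetric argument, applied to $x^*(A^*A-r^2I)x\geq 0\Rightarrow x^*(A_s-gA_h)x\geq 0$, identifies the feasible values of $g$ in \eqref{eq:sdp1} as $(-\infty,g^{\star}_{p}]$, whence $g^{\star}=g^{\star}_{p}$ and $\underline{\psi}_r(A)=\arctan(g^{\star})$.

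\emph{Step 3: the strict-feasibility condition --- the point needing care.} For $r<\bar\sigma(A)$, a unit right singular vector $x_0$ of $A$ associated with $\bar\sigma(A)$ gives $x_0^*(A^*A-r^2I)x_0=\bar\sigma(A)^2-r^2>0$, so the Slater condition holds and Steps~1--2 finish the proof. The only delicate point is the boundary value $r=\bar\sigma(A)$, where the constraint set collapses to the top singular subspace $V$ and Slater fails; the identities are then recovered by a direct argument --- one inequality by restricting the LMI constraints of \eqref{eq:sdp1}--\eqref{eq:sdp2} to $V$, on which $A^*A-\bar\sigma(A)^2I$ vanishes, and the matching inequality by letting $r\uparrow\bar\sigma(A)$ and combining the monotonicity of $\bar{\psi}_r(A)$ and $\underline{\psi}_r(A)$ in $r$ from Proposition~\ref{prop:monotone_dw_phase}(a) with the monotonicity of the optimal values of \eqref{eq:sdp1}--\eqref{eq:sdp2} in $r$ (for $\tau\geq 0$ the matrix $\tau(A^*A-r^2I)$ decreases as $r$ increases, so the feasible sets grow). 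The main obstacle, such as it is, is precisely this: ensuring the S-procedure is lossless (the Slater condition), with the boundary case $r=\bar\sigma(A)$ requiring the extra bookkeeping above rather than the core reduction in Steps~1--2.
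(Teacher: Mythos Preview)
Your proof is correct and follows essentially the same route as the paper: both reduce the endpoints of $\Psi_r(A)$ to extremal slopes $\tan\bar\psi_r(A)$ and $\tan\underline\psi_r(A)$, rewrite the slope bound as a quadratic implication $x^*(A^*A-r^2I)x\geq 0\Rightarrow x^*(A_s-hA_h)x\leq 0$, and invoke the lossless S-procedure to obtain the LMI in \eqref{eq:sdp2}. You are in fact more careful than the paper, which cites the S-procedure without checking strict feasibility, whereas your Step~3 verifies the Slater point and handles the degenerate case $r=\bar\sigma(A)$ separately.
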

		It is noteworthy that when $\Psi_r(A)$ does not belong to $(-\pi/2,\pi/2)$, a proper rotation of $A$ in advance will make it happen. 
		\begin{proof}
			In what follows, we only prove that $\bar{\psi}_r(A)=\arctan(h^\star)$, while the other can be derived similarly. Since $\Psi_r(A)\subset(-\pi/2,\pi/2)$, $\tan\bar{\psi}_r(A)$ is the smallest gradient of line tangent to $\mathcal{W}_{r\geq 0}(A)$ from above and crossing the origin. As a result, $\tan\bar{\psi}_r(A)$ is the solution to the following problem:
			$$\min_{h}~h,~~\text{s.t.}~~\Im(x^*Ax)\leq h\Re(x^*Ax),~\forall~x\in\mathbb{C}^n~\text{with}~\|x\|=1,~\|Ax\|\geq r,$$
			which is equivalent to that
			\begin{align}\label{eq:pf_sdp}
				\min_{h}~h,~~\text{s.t.}~~x^*(A_s-hA_h)x\leq 0,~\forall~x\in\mathbb{C}^n~\text{with}~x^*(A^*A-r^2I)x\geq 0.
			\end{align}
			By the S-procedure theory \cite{Hara2005GeneralizedKYP}\cite{yakubovich1971}, the statement $x^*(A_s-hA_h)x\leq 0$ holds for all $x\in\mathbb{C}^n$ with $x^*(A^*A-r^2I)x\geq 0$ is equivalent to that there exists a $\tau\geq 0$ such that $A_s-hA_h+\tau(A^*A-r^2 I) \leq 0$, whereby the problem~\eqref{eq:pf_sdp} is further equivalent to \eqref{eq:sdp2}. To conclude we have that  $\tan\bar{\psi}_r(A)$ is the optimal value to \eqref{eq:sdp2}, which completes the proof. 
		\end{proof}
		\begin{figure}[H]
			\centering
			\includegraphics[width=.5\linewidth]{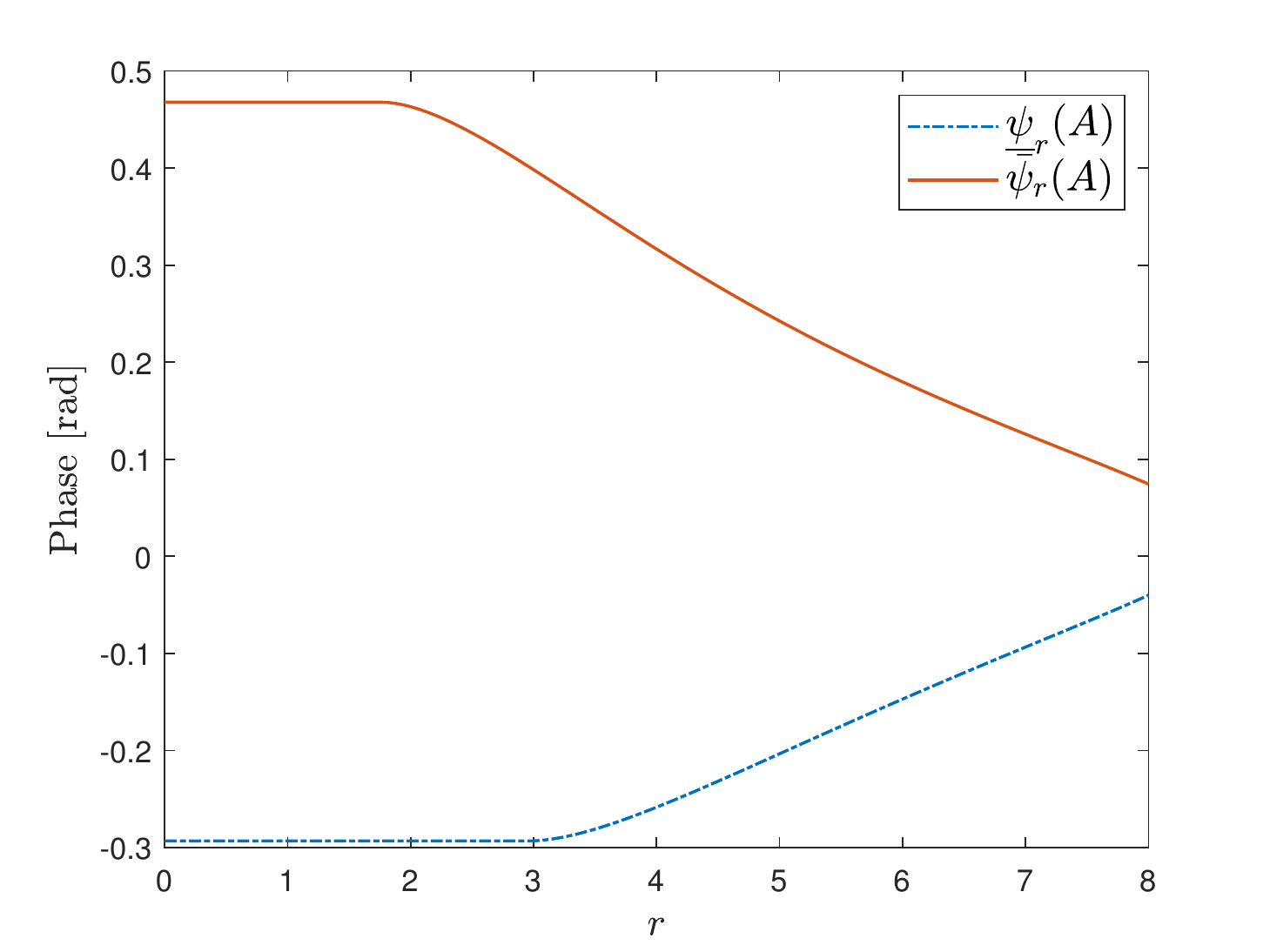}
			\caption{{$\underline{\psi}_r(A)$ and $\bar{\psi}_r(A)$ in terms of $r$.} }\label{fig:example_constrained_phase}
		\end{figure}
		Consider, for example, the following complex valued matrix
		$$A = \begin{bmatrix}
			5 & 2 & 1+j\\
			3 &6 &2\\
			0 &2 &2
		\end{bmatrix}. $$
		Its constrained phase sector $\Psi_r(A) = [\underline{\psi}_r(A),\bar{\psi}_r(A)]$ is computed using Proposition~\ref{prop:compute_constrained_phase} and drawn in terms of a continuous variation of $r$ from 0 to 8, as shown in Fig.~\ref{fig:example_constrained_phase}. 
	}
	
	The following result concerns the invertibility of matrix $I+AB$, which will help establish feedback stability for LTI systems. 
	\begin{theorem}\label{thm:DW_matrix}
		Let $A,B\in\mathbb{C}^{n\times n}$ and $B$ be sectorial. Then $I+AB$ is invertible if there exists an $r \geq 0$ such that $A$ is $r$-sectorial and
		\begin{align}\label{eq:thm_vp}
			&\bar{\psi}_r(A)+\bar{\phi}(B)<\pi,~\underline{\psi}_r(A)+\underline{\phi}(B)>-\pi,~\bar{\sigma}(B)r<1.
		\end{align}
	\end{theorem}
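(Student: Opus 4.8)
The plan is to prove the contrapositive at the level of a single vector: assume $I+AB$ is singular, extract a unit vector that is necessarily a ``large-gain'' direction of $A$ (one that feeds into the v-projected shell $\mathcal{W}_{\geq r}(A)$), and derive a phase identity that is incompatible with \eqref{eq:thm_vp}.

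First I would suppose there is a unit vector $x\in\mathbb{C}^n$ with $(I+AB)x=0$, i.e.\ $ABx=-x$. Since $B$ is sectorial it is nonsingular, so $y:=Bx\neq 0$; set $z:=y/\|y\|$. From $Ay=ABx=-x$ we get $\|Az\|=\|x\|/\|y\|=1/\|Bx\|\geq 1/\bar{\sigma}(B)>r$, the last step being exactly the gain hypothesis $\bar{\sigma}(B)r<1$ (note $\bar{\sigma}(B)>0$ as $B$ is invertible). Hence $(\Re z^*Az,\Im z^*Az,\|Az\|^2)\in\mathcal{DW}(A)$ with $\|Az\|\geq r$, so $(\Re z^*Az,\Im z^*Az)\in\mathcal{W}_{\geq r}(A)$. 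Because $A$ is $r$-sectorial this point is nonzero and, by the definition of the constrained phase sector, admits an argument representative $\angle(z^*Az)\in[\underline{\psi}_r(A),\bar{\psi}_r(A)]$.

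Next I would express $\angle(z^*Az)$ through $x^*Bx$. Using $Ay=-x$ and $y=Bx$,
\[
z^*Az=\frac{1}{\|y\|^2}\,y^*Ay=-\frac{1}{\|y\|^2}\,y^*x=-\frac{1}{\|y\|^2}\,\overline{x^*Bx}.
\]
Since $B$ is sectorial, $x^*Bx\neq0$ and $\angle(x^*Bx)\in[\underline{\phi}(B),\bar{\phi}(B)]$. The product $z^*Az\cdot x^*Bx=-\|y\|^{-2}|x^*Bx|^2$ is a negative real number, so $\angle(z^*Az)+\angle(x^*Bx)\equiv\pi\pmod{2\pi}$, independently of the branches chosen. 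On the other hand, with the branch choices made above, $\angle(z^*Az)+\angle(x^*Bx)\in[\underline{\psi}_r(A)+\underline{\phi}(B),\ \bar{\psi}_r(A)+\bar{\phi}(B)]$, and the first two inequalities in \eqref{eq:thm_vp} place this interval inside the open interval $(-\pi,\pi)$. Since $(-\pi,\pi)$ contains no real number congruent to $\pi$ modulo $2\pi$, we reach a contradiction, so $I+AB$ is invertible.

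The individual steps are essentially routine; the one place that needs care, and which I expect to be the main obstacle, is the argument bookkeeping: one must justify that a point of the compact convex origin-avoiding set $\mathcal{W}_{\geq r}(A)$ (respectively $\mathcal{W}(B)$) genuinely has an argument representative inside $\Psi_r(A)$ (respectively $\Psi(B)$), and then keep the identity $\angle(z^*Az)+\angle(x^*Bx)\equiv\pi$ read modulo $2\pi$ while the sector bounds are read in those fixed branches. The degenerate case $r>\bar{\sigma}(A)$ (where $\mathcal{W}_{\geq r}(A)=\emptyset$) deserves a remark: the extracted vector would force $\|Az\|>r>\bar{\sigma}(A)$, which is already absurd, so no singular $x$ exists; equivalently $\bar{\sigma}(A)\bar{\sigma}(B)<1$ and the small gain principle applies directly. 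As a consistency check, the statement collapses to Lemma~\ref{lem:mixGP_matrix} when $r=0$, since then $\mathcal{W}_{\geq 0}(A)=\mathcal{W}(A)$ and the gain constraint $\bar{\sigma}(B)\cdot 0<1$ becomes vacuous.
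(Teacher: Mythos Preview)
Your proof is correct and complete; the argument bookkeeping you flag is handled by the compactness/convexity of $\mathcal{W}_{\geq r}(A)$ and $\mathcal{W}(B)$ (both avoid the origin), so each admits a continuous argument branch on which the sector endpoints are attained, and your modular identity versus the strict open interval $(-\pi,\pi)$ gives the contradiction. The minor slip is only in the closing remark: at $r=0$ the result reduces to the pure matrix small phase theorem, not to Lemma~\ref{lem:mixGP_matrix}.

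Your route, however, is genuinely different from the paper's. The paper argues via multipliers and the S-procedure: it picks a rotation angle $\alpha$ so that $\Psi_r(A)+\alpha$ and $\Psi(B)-\alpha$ both lie in $(-\pi/2,\pi/2)$, encodes the constraint $\|Ax\|\geq r$ as a quadratic form $\Pi_g$, encodes the phase sector via $\Pi_p$, applies the lossless S-procedure to obtain a single combined multiplier $\Pi_p+\tau\Pi_g$ satisfied by $A$, checks that $B$ strictly satisfies the complementary inequality, and then invokes an IQC-type separation to conclude invertibility of $I+AB$. Your argument is more elementary and fully self-contained: a direct contrapositive that, from a null vector of $I+AB$, manufactures a unit direction $z$ forced into the large-gain region of $A$ (this is where $\bar\sigma(B)r<1$ enters), and then exploits the algebraic identity $z^*Az\cdot x^*Bx<0$ to pin the sum of arguments to $\pi\pmod{2\pi}$. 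What the paper's approach buys is uniformity with the rest of the development---the same multiplier machinery drives the frequency-domain theorems (Theorems~\ref{thm:mixture}, \ref{thm:smallvase}, \ref{thm:DW_pv_LTI}) and connects cleanly to the KYP/LMI characterizations in Section~\ref{sec:bounded_sectored_lemma}. What your approach buys is transparency: it isolates exactly why the three hypotheses in \eqref{eq:thm_vp} are needed, needs no external lemmas, and makes the small-gain degeneration $r>\bar\sigma(A)$ immediate.
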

	It is noteworthy that when $r=0$, the theorem reduces to the small phase theorem for matrices in \cite{WANG2020PhaseMath}.
	
	\begin{proof}[Proof of Theorem~\ref{thm:DW_matrix}]
		It follows from the first two inequalities in \eqref{eq:thm_vp} that there exists an $\alpha\in(-\pi,\pi)$ such that both $\Psi_r(A)+\alpha$ and $\Psi(B)-\alpha$ are contained in $(-\pi/2,\pi/2)$. Let
		$$
		\Pi_p=\begin{bmatrix}0 & e^{-j\alpha}I_n\\ e^{j\alpha}I_n & 0\end{bmatrix}
		~\text{and}~\Pi_g=\begin{bmatrix}I_n & 0\\  0 & -r^2I_n\end{bmatrix}.
		$$
		Note that
		$$\{x\in\mathbb{C}^n:~\|Ax\| \geq r,~\|x\|_2=1\}=\left\{x\in\mathbb{C}^n:~x^*\begin{bmatrix}A \\ I\end{bmatrix}^*\Pi_g\begin{bmatrix}A \\ I\end{bmatrix}x \geq 0\right\}:=X.$$
		By the definition of the constrained phases and the fact that $\Psi_r(A)+\alpha\subset(-\pi/2,\pi/2)$, we obtain that
		$$x^*\begin{bmatrix}A \\ I\end{bmatrix}^*\Pi_p\begin{bmatrix}A \\ I\end{bmatrix}x<0,$$
		for all $x\in X$. It then follows from the S-procedure theory \cite{Hara2005GeneralizedKYP}\cite{yakubovich1971} that there is a $\tau\geq 0$ such that 
		$$\begin{bmatrix}A \\ I\end{bmatrix}^*(\Pi_p+\tau\Pi_g)\begin{bmatrix}A \\ I\end{bmatrix} \leq 0.$$
		On the other hand, noting that $\Psi(B)-\alpha\subset(-\pi/2,\pi/2)$, we obtain that
		\begin{align}\label{eq:pf_DW_mixgp1}
			\begin{bmatrix}I \\ -B\end{bmatrix}^*\Pi_p\begin{bmatrix}I\\ -B\end{bmatrix}>0. \end{align}
		From $\bar{\sigma}(B)r<1$ and $\tau\geq 0$, we have 
		$$\begin{bmatrix}I \\ -B\end{bmatrix}^*\tau\Pi_g\begin{bmatrix}I\\ -B\end{bmatrix}\geq 0.$$
		Summing up the above inequalities yields that
		\begin{align}\label{eq:pf_DW_mixgp2}
			\begin{bmatrix}I \\ -B\end{bmatrix}^*(\Pi_p+\tau\Pi_g)\begin{bmatrix}I\\ -B\end{bmatrix}> 0.\end{align}
		Combining \eqref{eq:pf_DW_mixgp1} and \eqref{eq:pf_DW_mixgp2}, we obtain that $I+AB$ is invertible by the IQC stability condition \cite[Theorem~1]{Megretski1997IQC} adapted for matrices.
	\end{proof}
	
	\subsection{Horizontally Projected DW-Shell and Constrained Gains}
	A dual problem arises naturally that instead of fixing a height $r$ in the DW-shell, we fix a particular phase and study the gain properties of a matrix via its DW-shell. Such a case can be viewed as a DW-shell being projected horizontally onto some vertical lines in $\mathbb{R}^3$.
	
	For a sectorial matrix $A\in\mathbb{C}^{n\times n}$ and $\theta\in[0,\pi)\cap \Psi(A)$, define the constrained gain range by
	$${\mathcal{R}_{\geq \theta}(A)}:=\{h\geq 0:~z\in\mathbb{C},~\angle z\notin(-\theta,\theta),~(\Re z,\Im z,h^2)\in\mathcal{DW}(A)\}.$$
	Clearly, for any $h\in\mathcal{R}_{\geq \theta}(A)$, it holds that $h\in[\underline{\sigma}(A),\bar{\sigma}(A)]$, whereby this set $\mathcal{R}_{\geq \theta}(A)$ contains partial information about the gains of $A$ under phase constraints, or more precisely, when the phases are large. This is a parallel notion to the constrained phase sector introduced in the previous subsection, as we simply swap the roles between gain and phase. 
	Using similar arguments in Proposition~\ref{prop:compact_conv}, we obtain that $\mathcal{R}_{\geq \theta}(A)$ is a bounded set in $\mathbb{R}$. The maximum value of $\mathcal{R}_{\geq \theta}(A)$ is denoted as 
	$\gamma_\theta(A):=\max \{h\in\mathcal{R}_{\geq \theta}(A)\}$, which characterizes the largest gain information of $A$ under certain phase constraint. 
	
	{Similarly to the computation of the constrained phases using Proposition~\ref{prop:compute_constrained_phase}, the constrained gain can be obtained by solving a pair of SDP problems as well, shown in the following result. 
		\begin{proposition}
			Let $A\in\mathbb{C}^{n\times n}$ be accretive and $\theta\in[0,\pi/2)$. Let its Hermitian and skew-Hermitian parts be $A_h=(A+A^*)/2$ and $A_s=(A-A^*)/(2j)$, respectively. Then $\mathcal{R}_{\geq \theta}(A) = \max\{\sqrt{g^\star},\sqrt{h^\star}\}$, where $g^\star$ and $h^\star$ are respectively the optimal values obtained by solving the following SDP problems:
			\begin{align*}
				&\min_{g,\tau}~ g,~~\text{\rm s.t.}~~ A^*A-g I +\tau(-A_h\tan\theta+A_s) \leq 0,~\tau\geq 0,~g\geq 0;~\text{and}\\
				&\min_{h,\tau}~ h,~~\text{\rm s.t.}~~ A^*A-h I +\tau(-A_h\tan\theta-A_s) \leq  0,~\tau\geq 0,~h\geq 0.
			\end{align*}
		\end{proposition}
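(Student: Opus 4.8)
The plan is to follow the pattern of the proof of Proposition~\ref{prop:compute_constrained_phase}: first turn the defining condition of $\mathcal{R}_{\geq\theta}(A)$ into a pair of quadratically constrained quadratic maximizations, and then dualize each of them by the S-procedure. The first step uses accretivity. For a unit vector $x$ write $z=x^*Ax=x^*A_hx+j\,x^*A_sx$; since $A$ is accretive, $\Re z=x^*A_hx>0$, so for $\theta\in[0,\pi/2)$ the condition $\angle z\notin(-\theta,\theta)$ is equivalent to $\Im z\geq(\tan\theta)\Re z$ or $\Im z\leq-(\tan\theta)\Re z$, i.e. to $x^*(A_s-\tan\theta\,A_h)x\geq 0$ or $x^*(-A_s-\tan\theta\,A_h)x\geq 0$. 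Since $\|Ax\|^2=x^*A^*Ax$ and $(\Re z,\Im z,\|Ax\|^2)$ sweeps out $\mathcal{DW}(A)$ as $x$ runs over the unit sphere, the largest element $\gamma_\theta(A)$ of $\mathcal{R}_{\geq\theta}(A)$ satisfies $\gamma_\theta(A)^2=\max\{G,H\}$, where $G=\max\{x^*A^*Ax:\|x\|=1,\ x^*(A_s-\tan\theta A_h)x\geq0\}$ and $H$ is the analogous maximum with $A_s-\tan\theta A_h$ replaced by $-A_s-\tan\theta A_h$ (with the convention that a maximum over an empty feasible set is $0$; at least one of the two sets is nonempty whenever $\mathcal{R}_{\geq\theta}(A)\neq\emptyset$).

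The second step dualizes. Because the relevant quadratic forms are homogeneous, for Hermitian $Q$ the quantity $\max\{x^*A^*Ax:\|x\|=1,\ x^*Qx\geq0\}$ equals the least $g$ such that $x^*(A^*A-gI)x\leq0$ for every $x$ with $x^*Qx\geq0$, and by the lossless S-procedure \cite{Hara2005GeneralizedKYP,yakubovich1971} this is in turn equivalent to solvability of $A^*A-gI+\tau Q\leq0$, $\tau\geq0$. Taking $Q=A_s-\tan\theta A_h$ produces the first SDP with $G=g^\star$, and $Q=-A_s-\tan\theta A_h$ produces the second with $H=h^\star$; the auxiliary constraints $g\geq0$ and $h\geq0$ change nothing since $A^*A\geq0$ makes the objectives nonnegative. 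Combining, $\gamma_\theta(A)=\sqrt{\max\{g^\star,h^\star\}}=\max\{\sqrt{g^\star},\sqrt{h^\star}\}$, which is the claimed identity (the left-hand side of the statement, written $\mathcal{R}_{\geq\theta}(A)$, should be read as its maximal element $\gamma_\theta(A)$).

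The delicate point is losslessness of the S-procedure: Yakubovich's theorem needs a Slater point, an $x_0$ with $x_0^*Qx_0>0$, which for $Q=A_s-\tan\theta A_h$ amounts to a numerical-range point of $A$ whose phase strictly exceeds $\theta$, and dually for the other region. This holds whenever $\theta$ sits strictly inside $\Psi(A)$ on the corresponding side, which is implicit in the standing assumption $\theta\in\Psi(A)$ that makes $\mathcal{R}_{\geq\theta}(A)$ meaningful. The degenerate case in which $\theta$ equals an extreme phase of $A$ — so that one constraint region collapses onto the locus where that phase is attained and the S-procedure turns singular — should be handled by discarding that region and letting $\theta$ approach the extreme phase from inside, or by invoking the version of the S-lemma that stays exact in that singular situation. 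I would also record the routine facts that $G$ and $H$ are attained, by compactness of the closed subsets of the unit sphere cut out by the constraints (as in Proposition~\ref{prop:compact_conv}), and that under the Slater condition each SDP is solvable with optimal value equal to the corresponding maximum, so that $g^\star$ and $h^\star$ are genuine minima.
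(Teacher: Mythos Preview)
Your proposal is correct and follows essentially the same approach as the paper, which simply states that the proof mirrors that of Proposition~\ref{prop:compute_constrained_phase}. You have in fact spelled out the details more carefully than the paper does, including the observation that the left-hand side should be read as $\gamma_\theta(A)$ and the Slater-point caveat for losslessness of the S-procedure.
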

		\begin{proof}
			The proof follows similarly as that for Proposition~\ref{prop:compute_constrained_phase}. 
	\end{proof}}
	Based on the constrained gains, a parallel result to Theorem~\ref{thm:DW_matrix} can be obtained as follows.
	\begin{theorem}\label{thm:DW_cutVertically}
		Let $A,B\in\mathbb{C}^{n\times n}$ be sectorial. Then $I+AB$ is invertible if there exists a $\theta\in[0,\pi)$ such that
		$$
		\gamma_\theta(A)\bar{\sigma}(B)<1~\text{and}~\Psi(B)\subset(-\pi+\theta,\pi-\theta).
		$$
	\end{theorem}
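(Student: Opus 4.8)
The plan is to mirror the proof of Theorem~\ref{thm:DW_matrix}, but with the roles of gain and phase interchanged, using the horizontally projected DW-shell in place of the vertically projected one. First I would use the hypothesis $\Psi(B)\subset(-\pi+\theta,\pi-\theta)$ to produce a rotation angle $\alpha\in(-\pi,\pi)$ such that $\Psi(B)-\alpha\subset(-\pi/2,\pi/2)$ while simultaneously the complement phase region $(-\theta,\theta)$ shifts so that $\Psi(A)+\alpha$ avoids a half-plane in the appropriate sense; concretely, I expect the clean choice is to arrange $\Psi(B)-\alpha\subset(-(\pi/2-\theta'),\pi/2-\theta')$ and $\angle z + \alpha$ lies outside a symmetric sector for the relevant $z\in\mathcal{W}(A)$ witnessing large gain. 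The key structural fact to exploit is that $\gamma_\theta(A)\bar\sigma(B)<1$ is precisely the statement that whenever a unit vector $x$ has $\|Ax\|$ large enough that $\bar\sigma(B)\|Ax\|\ge 1$, the corresponding point $x^*Ax$ must have phase inside $(-\theta,\theta)$; this is the horizontal-projection analogue of the constrained-phase bound.

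Next I would set up the two multipliers in parallel with the previous proof. Take
$$\Pi_p=\begin{bmatrix}0 & e^{-j\alpha}I_n\\ e^{j\alpha}I_n & 0\end{bmatrix}\quad\text{and}\quad \Pi_g=\begin{bmatrix}A^*A & 0\\ 0 & -\bar\sigma(B)^{-2}I_n\end{bmatrix},$$
or more symmetrically a pair that encodes "phase outside $(-\theta,\theta)$" and "gain $\ge\gamma_\theta(A)$" respectively. The definition of $\gamma_\theta(A)$ guarantees that for every unit $x$, at least one of the two quadratic forms $x^*[\begin{smallmatrix}A\\I\end{smallmatrix}]^*\Pi_g[\begin{smallmatrix}A\\I\end{smallmatrix}]x$ and $x^*[\begin{smallmatrix}A\\I\end{smallmatrix}]^*\Pi_p[\begin{smallmatrix}A\\I\end{smallmatrix}]x$ has the favorable sign, which after an S-procedure argument (as in Proposition~\ref{prop:compute_constrained_phase} and the proof of Theorem~\ref{thm:DW_matrix}) yields a $\tau\ge0$ with $[\begin{smallmatrix}A\\I\end{smallmatrix}]^*(\Pi_p+\tau\Pi_g)[\begin{smallmatrix}A\\I\end{smallmatrix}]\le 0$. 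On the $B$ side, $\Psi(B)-\alpha\subset(-\pi/2,\pi/2)$ gives $[\begin{smallmatrix}I\\-B\end{smallmatrix}]^*\Pi_p[\begin{smallmatrix}I\\-B\end{smallmatrix}]>0$, and $\bar\sigma(B)$ times the relevant height bound gives $[\begin{smallmatrix}I\\-B\end{smallmatrix}]^*\tau\Pi_g[\begin{smallmatrix}I\\-B\end{smallmatrix}]\ge 0$, so summing produces a strict inequality of opposite sign to the $A$-side inequality. Then the matrix IQC invertibility criterion \cite[Theorem~1]{Megretski1997IQC} closes the argument, exactly as in the proof of Theorem~\ref{thm:DW_matrix}.

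The main obstacle I anticipate is getting the combinatorial geometry of the rotation $\alpha$ and the sector $\theta$ to line up correctly so that the two multipliers genuinely partition the unit sphere: one must check that if a unit $x$ fails the gain constraint (i.e.\ $\bar\sigma(B)\|Ax\|<1$, so $\Pi_g$-form is on the "good" side after scaling by $\bar\sigma(B)$), then nothing further is needed from it, whereas if it passes the gain threshold then $\gamma_\theta(A)\bar\sigma(B)<1$ forces $\|Ax\|>\gamma_\theta(A)$ — wait, that is the wrong direction, so in fact the correct reading is that $\|Ax\|\le\gamma_\theta(A)$ is automatic and the constraint $\gamma_\theta(A)\bar\sigma(B)<1$ makes the gain multiplier always favorable, pushing all the real work onto verifying that large-phase vectors (those with $\angle(x^*Ax)\notin(-\theta,\theta)$) are exactly the ones where $\|Ax\|\le\gamma_\theta(A)$ could bind — so the delicate point is precisely the bookkeeping of which inequality is strict and which is weak, since the IQC criterion needs one strict and one non-strict with matching/opposite signs. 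Once that sign accounting is pinned down, the S-procedure and the summation steps are routine copies of the Theorem~\ref{thm:DW_matrix} proof. A secondary check is the edge case $\theta=0$, where the statement should degenerate to the plain small gain theorem ($\gamma_0(A)=\bar\sigma(A)$, no phase restriction on $B$), which is a useful sanity test for the chosen normalization.
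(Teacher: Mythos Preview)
Your proposal is correct and takes essentially the same approach as the paper: the paper's own proof of this theorem is the single sentence ``The proof follows similarly as that for Theorem~\ref{thm:DW_matrix} by swapping the roles of gains and phases,'' which is precisely the strategy you outline. Your hesitation about the sign bookkeeping and the two-sided nature of the phase constraint $\angle z\notin(-\theta,\theta)$ is reasonable (and indeed Proposition~4 in the paper handles the two rays separately), but the paper regards these details as routine and does not spell them out either.
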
	
	In particular when $\theta=0$, the above result reduces to the matrix small gain theorem.
	\begin{proof}
		The proof follows similarly as that for Theorem~\ref{thm:DW_matrix} by swapping the roles of gains and phases.
	\end{proof}
	
	\subsection{Feedback Stability by DW-Shells}
	The following result on feedback stability of LTI systems is a direct consequence of the matrix result Theorem~\ref{thm:DW_matrix}. 
	\begin{theorem}\label{thm:DW_pv_LTI}
		Let $P$ be {semi-stable frequency-wise semi-sectorial} with $j\Omega_p$ being the set of poles on the imaginary axis and $C\in\mathcal{RH}_\infty$ be frequency-wise sectorial. Then $P\cls C$ is stable if for all $\omega\in[0,\infty]\setminus\Omega_p$, there exists an $r:=r(\omega) \geq 0$ such that $P(j\omega)$ is $r$-sectorial and
		\begin{align}\label{eq:thm_vp_LTI}
			&\bar{\psi}_r(P(j\omega))+\bar{\phi}(C(j\omega))<\pi,~\underline{\psi}_r(P(j\omega))+\underline{\phi}(C(j\omega))>-\pi,~\bar{\sigma}(C(j\omega))r(\omega)<1.
		\end{align}
	\end{theorem}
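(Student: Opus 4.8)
The plan is to reduce the feedback stability of $P\,\#\,C$ to the nonsingularity of $I+P(s)C(s)$ along an appropriate indented imaginary-axis contour, and then apply the matrix result Theorem~\ref{thm:DW_matrix} frequency-by-frequency together with an IQC/homotopy argument in the spirit of the proofs of Theorems~\ref{thm:mixture} and \ref{thm:smallvase}. First I would observe, exactly as in the proof of Theorem~\ref{thm:mixture}, that every imaginary-axis pole $j\omega_0$ of $P$ must be simple (by the argument of \cite[Theorem~7]{wei2021phaseLTI}), so that the contour $CT^\infty_\epsilon(j\Omega_p)$ in \eqref{eq:contour} is well defined for all sufficiently small $\epsilon>0$, and that since $C(j\omega)$ is sectorial (hence invertible) for every $\omega$, there is no unstable pole-zero cancellation in the product $PC$. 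Consequently $P\,\#\,C$ is stable provided (a) $I+P(s)C(s)$ is nonsingular along $CT^\infty_\epsilon(j\Omega_p)$ and right of it, and (b) $\det(I+P(j\omega_0)C(j\omega_0))\neq0$ at each pole $j\omega_0\in j\Omega_p$.

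Next I would build the multiplier. For each $s$ on the contour, condition \eqref{eq:thm_vp_LTI} guarantees (after the conjugate-symmetry extension to negative frequencies, giving $r(\bar s)=r(s)$) that $P(s)$ is $r(s)$-sectorial and there is a continuous $\alpha(s)\in(-\pi,\pi)$, with $\alpha(\bar s)=-\alpha(s)$, such that $\Psi_{r(s)}(P(s))+\alpha(s)\subset(-\pi/2,\pi/2)$ and $\Psi(C(s))-\alpha(s)\subset(-\pi/2,\pi/2)$; existence of such a separating $\alpha$ is precisely what the first two inequalities of \eqref{eq:thm_vp_LTI} provide, just as in the proof of Theorem~\ref{thm:DW_matrix}. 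Setting
\begin{align*}
\Pi(s):=\begin{bmatrix}0 & e^{-j\alpha(s)}I_n\\ e^{j\alpha(s)}I_n & 0\end{bmatrix}+\tau(s)\begin{bmatrix}I_n & 0\\ 0 & -r(s)^2 I_n\end{bmatrix},
\end{align*}
with $\tau(s)\geq0$ supplied pointwise by the S-procedure as in Theorem~\ref{thm:DW_matrix}, one verifies that for all $s\in CT^\infty_\epsilon(j\Omega_p)$ and all $\tau\in[0,1]$,
\begin{align}\label{eq:iqc_thm_DW}
\begin{bmatrix} I_n\\ \tau P(s)\end{bmatrix}^*\Pi(s)\begin{bmatrix} I_n\\ \tau P(s)\end{bmatrix}\geq 0\quad\text{and}\quad \begin{bmatrix}-C(s)\\ I_n\end{bmatrix}^*\Pi(s)\begin{bmatrix}-C(s)\\ I_n\end{bmatrix}<0;
\end{align}
the $P$-side uses the definition of the constrained phase sector plus $\bar\sigma(P(s))\geq r(s)\geq\|P(s)x\|$-style reasoning (homogeneity in $\tau$ handled via Proposition~\ref{prop:monotone_dw_phase}), and the $C$-side uses $\Psi(C(s))-\alpha(s)\subset(-\pi/2,\pi/2)$ together with $\bar\sigma(C(s))r(s)<1$. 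With \eqref{eq:iqc_thm_DW} in hand, \cite[Theorem~4.4]{Khong2016TAC} and \cite[Proposition~3]{Khong2018robust} give that $P\,\#\,C$ has no pole on or to the right of $CT^\infty_\epsilon(j\Omega_p)$ for all sufficiently small $\epsilon>0$, settling (a).

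For (b), I would argue by contradiction exactly as in Theorem~\ref{thm:mixture}: if $\det(I+P(j\omega_0)C(j\omega_0))=0$ for some $j\omega_0\in j\Omega_p$, then using $\det(C(j\omega_0))\neq0$ and the rotation by $e^{j\theta(s)}$ (with $\theta$ chosen from $\alpha$ as in the earlier proofs) one gets $\det\big(C(s)^{-1}e^{j\theta(s)}+C(s)^{-*}e^{-j\theta(s)}+P(s)e^{j\theta(s)}+P(s)^*e^{-j\theta(s)}\big)$ arbitrarily small for $s\in SC_\epsilon(j\omega_0)$, contradicting the strict positivity $\,>\eta I\,$ coming from \eqref{eq:iqc_thm_DW} restricted near $j\omega_0$. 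The main obstacle I anticipate is the careful frequency-wise patching: verifying that the separating angle $\alpha(s)$, the S-procedure multiplier $\tau(s)$, and the constrained-sectoriality radius $r(\omega)$ can be chosen measurably (piecewise-continuously) along the contour so that $\Pi(s)$ is an admissible IQC multiplier — in particular handling the points where $P(s)$ transitions between being genuinely $r$-sectorial with $r>0$ and the degenerate behaviour near the imaginary-axis poles — while keeping the strict inequalities in \eqref{eq:iqc_thm_DW} uniform. Once this bookkeeping is done, the rest is a direct transcription of Theorem~\ref{thm:DW_matrix} and the homotopy argument already used for Theorem~\ref{thm:mixture}.
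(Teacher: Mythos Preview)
Your approach is correct but takes a different route from the paper's. You build an explicit frequency-dependent IQC multiplier $\Pi(s)$ via a pointwise S-procedure (mirroring the proofs of Theorems~\ref{thm:mixture} and~\ref{thm:smallvase}) and then invoke the IQC machinery of \cite{Khong2016TAC,Khong2018robust}. The paper instead never constructs a global multiplier: it first uses Proposition~\ref{prop:monotone_dw_phase}(b) to show that the constrained-phase inequalities in \eqref{eq:thm_vp_LTI} persist when $P$ is replaced by $\tau P$ for each $\tau\in(0,1]$, then applies the matrix result Theorem~\ref{thm:DW_matrix} \emph{pointwise} to obtain invertibility of $I+\tau P(j\omega)C(j\omega)$ for every $\omega\in[0,\infty]\setminus\Omega_p$, extends to negative frequencies by conjugate symmetry, handles $\omega\in\Omega_p$ by the same determinant/continuity argument you borrow from Theorem~\ref{thm:mixture}, and finishes with a bare homotopy: $G_\tau:=(\tau P)\,\#\,C\in\mathcal{L}_\infty$ for all $\tau\in[0,1]$, $G_0$ is stable since $C$ is stable, and the poles of $G_\tau$ cannot cross the imaginary axis, hence $G_1=P\,\#\,C$ is stable. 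The paper's route is shorter and sidesteps precisely the regularity obstacle you flag (measurable/continuous selection of $\alpha(s)$, $\tau(s)$, $r(s)$ along the contour), because only pointwise invertibility is needed and no multiplier has to be assembled. Your route, on the other hand, makes the underlying IQC structure explicit; it does go through, and the worry about the $P$-side inequality holding for all $\tau\in[0,1]$ with a \emph{fixed} multiplier is resolved by noting that $x^*[I\ \ \tau P^*]\,\Pi\,[I\ \ \tau P^*]^*x$ is concave in $\tau$ and nonnegative at $\tau=0$ and $\tau=1$.
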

	\begin{proof}
		Using Proposition~\ref{prop:monotone_dw_phase}, we have
		\begin{align*}
			&\bar{\psi}_r(\tau P(j\omega))+\bar{\phi}( C(j\omega))<\pi~~\text{and}~~\underline{\psi}_r(\tau P(j\omega))+\underline{\phi}(C(j\omega))>-\pi,
		\end{align*}
		where $\tau\in(0,1]$. It then follows from Theorem~\ref{thm:DW_matrix} that $I+\tau P(j\omega)C(j\omega)$ is invertible all $\omega\in[0,\infty]\setminus\Omega_p$. By the conjugate symmetric property of real rational transfer matrices, namely $G(j\omega) = \overline{G(-j\omega)}$, we obtain that $I+\tau P(j\omega)C(j\omega)$ is invertible for all $\omega\in[-\infty,\infty]\setminus\Omega_p$. A similar argument on $\omega\in \Omega_p$ as that in the proof of Theorem~\ref{thm:mixture} yields that $(I+\tau PC)^{-1}\in \mathcal{L}_\infty$ as well as $G_\tau:=(\tau P)\cls C\in \mathcal{L}_\infty$ for all $\tau\in[0,1]$. When $\tau=0$, $G_0$ is stable since $C$ is stable. As $\tau$ continuously increases from 0 to 1, the poles of $G_\tau$ cannot cross the imaginary axis as $G_\tau\in\mathcal{L}_\infty$, whereby $G_1=P\cls C$ is stable.  
	\end{proof}
	
	The above result has the following parallel, which is mainly attributed to Theorem~\ref{thm:DW_cutVertically} based on the notion of constrained gains. 
	
	\begin{theorem}\label{thm:DW_ph_LTI}
		Let $P$, $C\in\mathcal{RH}_\infty$ and $C$ be frequency-wise sectorial. Then $P\cls C$ is stable if for all $\omega\in[0,\infty]$, there exists an $\theta:=\theta(\omega) \in[0,\pi)$ such that 
		\begin{align}\label{eq:thm_DWph_LTI}
			\gamma_\theta(P(j\omega))\bar{\sigma}(C(j\omega))<1~\text{and}~\Psi(C(j\omega))\subset(-\pi+\theta(\omega),\pi-\theta(\omega)).
		\end{align}
	\end{theorem}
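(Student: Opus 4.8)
The plan is to transcribe the homotopy argument from the proof of Theorem~\ref{thm:DW_pv_LTI}, using the gain-constrained matrix result Theorem~\ref{thm:DW_cutVertically} in place of the phase-constrained result Theorem~\ref{thm:DW_matrix}. For $\tau\in[0,1]$ put $G_\tau:=(\tau P)\cls C$. Since $P,C\in\mathcal{RH}_\infty$, the factors $\begin{bmatrix}I\\ \tau P\end{bmatrix}$ and $\begin{bmatrix}I & C\end{bmatrix}$ are stable and $P$ has no poles on the imaginary axis, so the only possible poles of $G_\tau$ in the closed right half plane are the right-half-plane zeros of $\det(I+\tau PC)$; at $\tau=0$ there are none because $G_0=\begin{bmatrix}I & C\\ 0 & 0\end{bmatrix}\in\mathcal{RH}_\infty$. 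Hence it suffices to show that $I+\tau P(j\omega)C(j\omega)$ is invertible for every $\omega\in[-\infty,\infty]$ and every $\tau\in[0,1]$, and then to run the continuity step exactly as in the last paragraph of the proof of Theorem~\ref{thm:DW_pv_LTI}.

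The frequency-wise invertibility is where Theorem~\ref{thm:DW_cutVertically} enters. Fix $\omega\in[0,\infty]$, $\tau\in(0,1]$ (the case $\tau=0$ is immediate) and set $\theta:=\theta(\omega)$. The ingredient replacing the monotonicity of the constrained phases in Proposition~\ref{prop:monotone_dw_phase} is the scaling law for the constrained gain: $\mathcal{DW}(\tau P(j\omega))$ is the image of $\mathcal{DW}(P(j\omega))$ under $(p,q,z)\mapsto(\tau p,\tau q,\tau^2 z)$, a map that fixes the argument of $p+jq$, so $\mathcal{R}_{\geq\theta}(\tau P(j\omega))=\tau\,\mathcal{R}_{\geq\theta}(P(j\omega))$ and $\gamma_\theta(\tau P(j\omega))=\tau\gamma_\theta(P(j\omega))\le\gamma_\theta(P(j\omega))$; moreover $\tau P(j\omega)$ is sectorial whenever $P(j\omega)$ is, and $\Psi(\tau P(j\omega))=\Psi(P(j\omega))$. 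Together with \eqref{eq:thm_DWph_LTI} this gives $\gamma_\theta(\tau P(j\omega))\bar{\sigma}(C(j\omega))\le\gamma_\theta(P(j\omega))\bar{\sigma}(C(j\omega))<1$ and $\Psi(C(j\omega))\subset(-\pi+\theta,\pi-\theta)$, i.e., precisely the hypotheses of Theorem~\ref{thm:DW_cutVertically} with $A=\tau P(j\omega)$ and $B=C(j\omega)$; hence $I+\tau P(j\omega)C(j\omega)$ is invertible. The conjugate-symmetry relation $G(j\omega)=\overline{G(-j\omega)}$ extends this to $\omega\in[-\infty,0)$, and the same estimate at $\omega=\infty$ (with $\theta(\infty)$) handles the point at infinity.

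To close, $[-\infty,\infty]$ is compact and $\omega\mapsto\det(I+\tau P(j\omega)C(j\omega))$ is continuous and nowhere zero, hence bounded away from $0$; therefore $(I+\tau PC)^{-1}\in\mathcal{L}_\infty$ and $G_\tau\in\mathcal{L}_\infty$ for every $\tau\in[0,1]$. Since $G_0$ is stable and the right-half-plane poles of $G_\tau$ vary continuously with $\tau$, they cannot enter the open right half plane without first meeting the imaginary axis, which $G_\tau\in\mathcal{L}_\infty$ prohibits; thus $G_\tau\in\mathcal{RH}_\infty$ for all $\tau$, in particular $P\cls C=G_1$ is stable.

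The only place I expect to have to be careful is the behaviour at degenerate choices of $\theta(\omega)$: when $\theta(\omega)=0$ the first inequality in \eqref{eq:thm_DWph_LTI} is just the small-gain bound $\bar{\sigma}(P(j\omega))\bar{\sigma}(C(j\omega))<1$, and when $\Psi(P(j\omega))\subset(-\theta(\omega),\theta(\omega))$ the constrained-gain set is empty and the conditions collapse to the small-phase inequalities; one must confirm that the conventions for $\gamma_\theta$ and the small-gain/small-phase specializations of Theorem~\ref{thm:DW_cutVertically} cover these cases, and, relatedly, that $P(j\omega)$ is genuinely sectorial wherever $\gamma_{\theta(\omega)}(P(j\omega))$ is invoked. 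Apart from this, the argument is a routine transcription of the proof of Theorem~\ref{thm:DW_pv_LTI}.
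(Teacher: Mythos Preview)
Your proposal is correct and follows essentially the same homotopy route as the paper: scale by $\tau\in[0,1]$, invoke Theorem~\ref{thm:DW_cutVertically} frequency-wise, extend by conjugate symmetry, and conclude by the pole-continuity argument from the proof of Theorem~\ref{thm:DW_pv_LTI}. The only cosmetic difference is that the paper scales $C$ (using $\bar{\sigma}(\tau C)=\tau\bar{\sigma}(C)$ and $\Psi(\tau C)=\Psi(C)$, citing Proposition~\ref{prop:monotone_dw_phase}) while you scale $P$ via the explicit DW-shell scaling $\gamma_\theta(\tau P)=\tau\gamma_\theta(P)$; since $\tau PC=P(\tau C)$ the resulting invertibility statement is identical, and your caveats about the degenerate cases $\theta(\omega)=0$ and the implicit sectoriality of $P(j\omega)$ are apt but do not affect the argument.
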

	\begin{proof}
		Using Proposition~\ref{prop:monotone_dw_phase}, we have
		\begin{align*}%\label{eq:thm_DWph_LTI}
			\gamma_\theta(P(j\omega))\bar{\sigma}(\tau C(j\omega))<1~\text{and}~\Psi(\tau C(j\omega))\subset(-\pi+\theta(\omega),\pi-\theta(\omega)),
		\end{align*}
		where $\tau\in(0,1]$. It then follows from Theorem~\ref{thm:DW_cutVertically} that $I+\tau P(j\omega)C(j\omega)$ is invertible all $\omega\in[0,\infty]$. By the conjugate symmetric property of real rational transfer matrices, namely $G(j\omega) = \overline{G(-j\omega)}$, we obtain that $I+\tau P(j\omega)C(j\omega)$ is invertible all $\omega\in[-\infty,\infty]$, whereby $(I+\tau PC)^{-1}\in \mathcal{L}_\infty$ for all $\tau\in[0,1]$ and so does $G_\tau:=(\tau P)\cls C$. The rest follows by the same argument in the proof for Theorem~\ref{thm:DW_pv_LTI}. 
	\end{proof}
	%-------------------------------------------
	\section{Bounded \& Sectored Real Lemma}\label{sec:bounded_sectored_lemma}
	%-------------------------------------------
	It is well known that the $\mathcal{H}_\infty$ norm of LTI systems can be easily computed by the bounded real lemma \cite[Chapter~12]{Zhou1998Essential}. Recently, the sectored real lemma, as a counterpart to the bounded real lemma, has been developed in \cite{wei2021phaseLTI} for the computation of system $\Phi_\infty$ phase. Actually, by a suitable application of the generalized KYP lemma \cite{Hara2005GeneralizedKYP}, we can obtain the following result that captures a mixture of gain and phase properties with LMIs. To that end, we start with some preliminary knowledge on the generalized KYP lemma.
	
	Define a curve in the complex plane via
	\begin{align}\label{eq:curve}\Lambda(\Phi,\Psi) = \left\{\lambda\in\mathbb{C}~\bigg|~\begin{bmatrix}
			\lambda \\ 1
		\end{bmatrix}^*\Phi\begin{bmatrix}
			\lambda \\ 1
		\end{bmatrix} = 0,~\begin{bmatrix}
			\lambda \\ 1
		\end{bmatrix}^*\Psi\begin{bmatrix}
			\lambda \\ 1
		\end{bmatrix}\geq 0,
		\right\}\end{align}
	with parameters $\Phi,\Psi$ being Hermitian matrices. 
	According to \cite[Section~IV]{Hara2005GeneralizedKYP}, we obtain the following specialization of curves. 
	\begin{lemma}\label{lem:curves}
		Let $$\omega_c\in(0,\infty),~\Phi=\begin{bmatrix}0 & 1 \\ 1 & 0\end{bmatrix},~\Psi_1=\begin{bmatrix}-2 & j\omega_c \\ -j\omega_c & 0\end{bmatrix},~\text{and}~\Psi_2=\begin{bmatrix}0 & j \\ -j & -2\omega_c\end{bmatrix}.$$
		It holds that
		$$\Lambda(\Phi,\Psi_1) = j[0,\omega_c]~~\text{and}~~\Lambda(\Phi,\Psi_2) = j[\omega_c,\infty).$$
	\end{lemma}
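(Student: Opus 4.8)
The plan is to apply the general theory of Section~IV of \cite{Hara2005GeneralizedKYP}, which characterizes sets of the form $\Lambda(\Phi,\Psi)$ when $\Phi$ encodes the imaginary axis and $\Psi$ is an auxiliary Hermitian matrix cutting out a frequency sub-band. Concretely, with $\Phi=\tbt{0}{1}{1}{0}$, the first constraint $\tbt{\lambda}{1}{}{}^{*}\Phi\tbt{\lambda}{1}{}{}=0$ reads $\bar\lambda+\lambda=0$, i.e. $2\Re\,\lambda=0$, so $\lambda=j\omega$ for some $\omega\in\mathbb{R}$ (together with the point at infinity, which we discard in the bounded cases and keep in the unbounded one). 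This reduces each identity to determining, for $\lambda=j\omega$, exactly which real $\omega$ satisfy the single scalar inequality $\tbt{\lambda}{1}{}{}^{*}\Psi_i\tbt{\lambda}{1}{}{}\geq 0$.

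The key computation is then just substitution. For $\Psi_1=\tbt{-2}{j\omega_c}{-j\omega_c}{0}$ and $\lambda=j\omega$ I would compute
\[
\begin{bmatrix}j\omega \\ 1\end{bmatrix}^{*}\Psi_1\begin{bmatrix}j\omega \\ 1\end{bmatrix}
= -2\omega^{2} + j\omega_c\overline{(j\omega)} + \overline{j\omega_c}\,(j\omega)\cdot 0 \cdots
\]
— more carefully, expanding $\begin{bmatrix}-j\omega & 1\end{bmatrix}\Psi_1\begin{bmatrix}j\omega \\ 1\end{bmatrix}$ gives $-2\omega^{2} + (j\omega_c)(1) (-j\omega)+ \ldots$; the cross terms combine to $2\omega_c\omega$ and the $(2,2)$ entry contributes $0$, so the quadratic form equals $-2\omega^{2}+2\omega_c\omega = 2\omega(\omega_c-\omega)$. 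This is $\geq 0$ precisely when $0\le\omega\le\omega_c$, giving $\Lambda(\Phi,\Psi_1)=j[0,\omega_c]$. Similarly, for $\Psi_2=\tbt{0}{j}{-j}{-2\omega_c}$ and $\lambda=j\omega$, the $(1,1)$ entry contributes nothing, the cross terms give $2\omega$, and the $(2,2)$ entry gives $-2\omega_c$, so the form equals $2\omega-2\omega_c=2(\omega-\omega_c)$, which is $\geq 0$ exactly when $\omega\ge\omega_c$; including the limiting behaviour as $|\lambda|\to\infty$ along the imaginary axis yields $\Lambda(\Phi,\Psi_2)=j[\omega_c,\infty)$.

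I expect no serious obstacle here: the only points requiring a little care are the sign/conjugation bookkeeping in the Hermitian form (getting the cross terms to combine to a real multiple of $\omega$) and the correct treatment of the point at infinity — for $\Psi_1$ the coefficient of $\omega^{2}$ is negative so infinity is automatically excluded, matching the compact interval $j[0,\omega_c]$, whereas for $\Psi_2$ the $\omega^{2}$-coefficient vanishes and the form stays nonnegative as $\omega\to+\infty$, consistent with the half-line $j[\omega_c,\infty)$. Both facts are exactly the specializations tabulated in \cite[Section~IV]{Hara2005GeneralizedKYP}, so the cleanest write-up simply verifies the two scalar identities above and invokes that reference for the framework.
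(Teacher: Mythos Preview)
Your proposal is correct. The paper does not actually supply a proof of this lemma: it simply states ``According to \cite[Section~IV]{Hara2005GeneralizedKYP}, we obtain the following specialization of curves'' and records the result. Your direct substitution argument is exactly the verification that the cited reference encodes, and your scalar computations are right: with $\lambda=j\omega$ one gets $2\omega(\omega_c-\omega)\geq 0$ for $\Psi_1$ and $2(\omega-\omega_c)\geq 0$ for $\Psi_2$, yielding the claimed intervals. So there is no discrepancy in approach --- you have merely written out what the paper leaves to the citation.
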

	The following generalized KYP lemma is tailored from \cite[Theorem~1]{Hara2005GeneralizedKYP}.
	\begin{lemma}\label{lem:generalizedKYP}
		Let $A\in\mathbb{C}^{n\times n}$, $B\in\mathbb{C}^{n\times m}$, $M=M^*\in\mathbb{C}^{(n+m)\times(n+m)}$, and $\Lambda(\Phi,\Psi)$ be introduced in \eqref{eq:curve}. Suppose $\left[\;\,\begin{matrix} A&\vline&B\end{matrix}\;\,\right]$ is controllable. Let $\Omega$ be the set of eigenvalues of $A$ in $\Lambda(\Phi,\Psi)$. Then 
		$$\begin{bmatrix}
			(\lambda I-A)^{-1}B \\ I
		\end{bmatrix}^*M\begin{bmatrix}
			(\lambda I-A)^{-1}B \\ I
		\end{bmatrix}\leq 0$$
		for all $\lambda\in\Lambda(\Phi,\Psi)\setminus\Omega$ if and only if there exist Hermitian matrices $P$ and $Q$ such that $Q\geq 0$ and
		\begin{align*}
			&\begin{bmatrix} A & B \\I & 0 \end{bmatrix}^*(\Phi\otimes P + \Psi\otimes Q) \begin{bmatrix} A & B \\I & 0 \end{bmatrix}
			+M\leq 0.
		\end{align*}
	\end{lemma}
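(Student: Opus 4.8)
The last displayed item in the excerpt is Lemma~\ref{lem:generalizedKYP}, the generalized KYP lemma, which the paper explicitly says is ``tailored from \cite[Theorem~1]{Hara2005GeneralizedKYP}.'' Since this is a specialization of a known result rather than a fresh claim, my plan is to derive it from the general frequency-domain inequality (FDI) characterization in Hara--Iwasaki--Shiokata rather than to reprove that theorem from scratch.

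\medskip
\textbf{Plan.} The strategy is a direct instantiation. First I would invoke the general result \cite[Theorem~1]{Hara2005GeneralizedKYP}: for a controllable pair $(A,B)$, a Hermitian multiplier $M$, and a curve $\Lambda(\Phi,\Psi)$ defined by a pair of Hermitian $2\times 2$ matrices $\Phi$, $\Psi$, the FDI
$$
\begin{bmatrix}(\lambda I - A)^{-1}B \\ I\end{bmatrix}^* M \begin{bmatrix}(\lambda I - A)^{-1}B \\ I\end{bmatrix} \leq 0
$$
holds for all $\lambda$ in the curve (excluding the eigenvalues $\Omega$ of $A$ that lie on it) if and only if there exist Hermitian $P,Q$ with $Q\geq 0$ and
$$
\begin{bmatrix} A & B \\ I & 0\end{bmatrix}^* (\Phi\otimes P + \Psi\otimes Q)\begin{bmatrix} A & B \\ I & 0\end{bmatrix} + M \leq 0.
$$
The bulk of the original proof (constructing $P,Q$ from a separating hyperplane / Finsler-type argument on the cone generated by the curve, and conversely verifying the FDI by substituting $\lambda$ and using the defining equalities/inequalities of $\Lambda$) is exactly what is being cited. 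So the remaining work is bookkeeping: checking that the hypotheses of our Lemma match the hypotheses of the cited theorem verbatim. In particular I would note that the non-degeneracy condition required in \cite{Hara2005GeneralizedKYP} (that $\Lambda(\Phi,\Psi)$ represents a curve rather than the whole plane or a single point, equivalently $\Phi\neq 0$ and the pencil is regular) is satisfied by the $\Phi=\tbt{0}{1}{1}{0}$ used throughout and the $\Psi$'s of Lemma~\ref{lem:curves}, so the equivalence transfers with no loss.

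\medskip
\textbf{Key steps, in order.} (1) State that we apply \cite[Theorem~1]{Hara2005GeneralizedKYP} with the given $A$, $B$, $M$, $\Phi$, $\Psi$. (2) Verify the standing assumptions of that theorem: controllability of $(A,B)$ is assumed in our statement; the regularity of the matrix pencil associated with $(\Phi,\Psi)$ holds for the admissible $(\Phi,\Psi)$ pairs of interest (and in the abstract statement is part of ``$\Lambda(\Phi,\Psi)$ be introduced in \eqref{eq:curve}''). (3) Identify $\Omega$ in our statement with the ``excluded eigenvalues'' set in the cited theorem --- these coincide by definition. (4) Read off the LMI: the cited theorem's certificate is precisely $\Phi\otimes P + \Psi\otimes Q$ sandwiched by $\left[\begin{smallmatrix}A & B\\ I & 0\end{smallmatrix}\right]$ plus $M$, negative semidefinite, with $Q\geq 0$; quote it. (5) Conclude the ``if and only if.'' Optionally, I would add one sentence sketching \emph{why} the Kronecker structure appears --- namely, for scalar curve parameters one uses the identity $\left[\begin{smallmatrix}\lambda\\ 1\end{smallmatrix}\right]^*\Phi\left[\begin{smallmatrix}\lambda\\ 1\end{smallmatrix}\right]\otimes P$ evaluated along trajectories $x=(\lambda I-A)^{-1}Bu$, so that the $\Phi\otimes P$ term telescopes/vanishes on the curve exactly as $P$ does in the classical KYP lemma, while $\Psi\otimes Q$ with $Q\geq 0$ handles the inequality constraint by an S-procedure argument.

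\medskip
\textbf{Main obstacle.} There is no deep obstacle, since the theorem is a quotation; the only care needed is to make sure the sign conventions and the direction of the inequality ($M\preceq 0$ on the curve, $Q\succeq 0$ in the LMI, and the overall LMI $\preceq 0$) are transcribed consistently with \cite{Hara2005GeneralizedKYP}, and that the ``controllability'' hypothesis we impose is the exact non-degeneracy condition the cited theorem needs (as opposed to the slightly weaker ``no uncontrollable mode on the curve'' variant that also appears in that paper). If one instead wanted a self-contained proof, the genuinely hard part would be the necessity direction: producing $P$ and $Q$, which requires a convex-duality / separating-hyperplane argument on the cone $\{(\left[\begin{smallmatrix}\lambda\\1\end{smallmatrix}\right]\left[\begin{smallmatrix}\lambda\\1\end{smallmatrix}\right]^*) : \lambda\in\Lambda\}$ together with a rank-one decomposition (matrix S-lemma) to handle the case $m>1$ --- but that is precisely the content of \cite{Hara2005GeneralizedKYP} and need not be reproduced here.
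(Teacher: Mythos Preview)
Your proposal is correct and matches the paper's treatment: the paper does not give a proof of this lemma at all but simply states it as ``tailored from \cite[Theorem~1]{Hara2005GeneralizedKYP},'' so invoking that theorem directly and checking that the hypotheses (controllability, the curve definition, the excluded eigenvalue set) line up is exactly the intended argument.
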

	\begin{theorem}\label{thm:boundedSectoredRealLemma}
		Let $\omega_c>0$ and $G$ be {semi-stable frequency-wise semi-sectorial} over $(-\omega_c,\omega_c)$ with $j\Omega_p$ being the set of poles on the imaginary axis satisfying $\max_{\omega\in\Omega_p} |\omega|<\omega_c$. A minimal realization of $G$ is given by $\left[\;\,\begin{matrix} A&\vline&B\\\hline\\[-4.7mm]C&\vline&D\end{matrix}\;\,\right]$, $\gamma\in(0,\infty)$, $\alpha,\beta\in(-\pi,\pi)$ with $\beta-\alpha\in(0,\pi]$,
		$$\Pi_1:=\begin{bmatrix}0  & e^{j(\alpha-\frac{\pi}{2})}\\ e^{-j(\alpha-\frac{\pi}{2})} & 0  \end{bmatrix},~\Pi_2:=\begin{bmatrix}0  & e^{j(\frac{\pi}{2}+\beta)}\\ e^{-j(\frac{\pi}{2}+\beta)} & 0  \end{bmatrix},
		~\text{and}~\Pi_3:= \begin{bmatrix}I & 0\\ 0 & -\gamma^2 I  \end{bmatrix}.$$
		Then $\Psi(G(j\omega))\subset[\alpha,\beta]$ for $\omega \in [0,\omega_c]\setminus\Omega_p$, and $\bar{\sigma}(G(j\omega))<\gamma$ for $\omega\in [\omega_c,\infty)$  if and only if there exist Hermitian matrices  $P_i,Q_i$, such that $Q_i \geq 0$, $i=1,2,3$, and 
		\begin{equation}\label{eq:thm4}
			\begin{aligned}
				&\begin{bmatrix} A & B \\I & 0 \end{bmatrix}^*\begin{bmatrix}-2Q_i & P_i+j\omega_c Q_i\\P_i-j\omega_c Q_i & 0 \end{bmatrix} \begin{bmatrix} A & B \\I & 0 \end{bmatrix}
				+\begin{bmatrix}C & D\\ 0 & I  \end{bmatrix}^*\Pi_i\begin{bmatrix}C & D\\ 0 & I  \end{bmatrix} \leq 0,~i=1,2,\\
				&\begin{bmatrix} A & B \\I & 0 \end{bmatrix}^*\begin{bmatrix}0 & P_3+j Q_3\\ P_3-j Q_3 & -2\omega_c Q_3 \end{bmatrix} \begin{bmatrix} A & B \\I & 0 \end{bmatrix}
				+\begin{bmatrix}C & D\\ 0 & I  \end{bmatrix}^*\Pi_3\begin{bmatrix}C & D\\ 0 & I  \end{bmatrix} \leq 0.
			\end{aligned}
		\end{equation}
		%$$\Theta_i:=\begin{bmatrix}C & D\\ 0 & I  \end{bmatrix}^*\Pi_i\begin{bmatrix}C & D\\ 0 & I  \end{bmatrix},~i=1,2,$$
		%$$\Phi=\begin{bmatrix}0 & 1\\1 &0  \end{bmatrix},~\Psi_1=\begin{bmatrix}-1 & 0\\ 0 & \omega_c^2  \end{bmatrix},~~\text{and}~\Psi_2=\begin{bmatrix}1 & 0\\ 0 & -\omega_c^2  \end{bmatrix}  $$
		
	\end{theorem}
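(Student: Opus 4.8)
The plan is to translate the two frequency-domain conditions on $G$ into three separate $\mathcal{L}_2$-gain/sector statements along the two segments of the imaginary axis, and then apply the generalized KYP lemma (Lemma~\ref{lem:generalizedKYP}) in the form specialized by Lemma~\ref{lem:curves} to each. Concretely, $\bar\sigma(G(j\omega))<\gamma$ on $[\omega_c,\infty)$ is equivalent to
$$\begin{bmatrix}(j\omega I-A)^{-1}B\\ I\end{bmatrix}^*\begin{bmatrix}C & D\\ 0 & I\end{bmatrix}^*\Pi_3\begin{bmatrix}C & D\\ 0 & I\end{bmatrix}\begin{bmatrix}(j\omega I-A)^{-1}B\\ I\end{bmatrix}\le 0$$
for $j\omega\in\Lambda(\Phi,\Psi_2)=j[\omega_c,\infty)$ (modulo the usual non-strict/strict and minimality caveats), so the generalized KYP lemma with $(\Phi,\Psi_2)$ gives the third LMI with multipliers $P_3,Q_3$, $Q_3\ge 0$. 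For the phase condition, I would observe that $\Psi(G(j\omega))\subset[\alpha,\beta]$ is equivalent to the pair of sector constraints $\bar\phi(G(j\omega))\le\beta$ and $\underline\phi(G(j\omega))\ge\alpha$, and that each of these is a quadratic inequality: $\bar\phi(G(j\omega))\le\beta$ says exactly that $G(j\omega)$ rotated by $e^{j(\pi/2-\beta)}$ (equivalently $-e^{j(\pi/2+\beta)}$) has nonnegative Hermitian part, i.e. the $(0,e^{j(\pi/2+\beta)};\,e^{-j(\pi/2+\beta)},0)$-weighted quadratic form in $(G(j\omega),I)$ is $\le 0$; symmetrically for $\underline\phi(G(j\omega))\ge\alpha$ with $\Pi_1$. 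Feeding these through the generalized KYP lemma with $(\Phi,\Psi_1)$, whose curve is $j[0,\omega_c]$, yields the $i=1,2$ LMIs with $(P_i,Q_i)$, $Q_i\ge 0$.

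The key steps, in order: (1) record that $\Pi_1$ encodes the lower-phase sector, $\Pi_2$ the upper-phase sector, $\Pi_3$ the gain bound — i.e. verify the elementary fact that for a sectorial matrix $X$, $\underline\phi(X)\ge\alpha \iff [X^*\ I][\,0\ e^{j(\alpha-\pi/2)};\,e^{-j(\alpha-\pi/2)}\ 0\,][X;I]\le 0$ and similarly for $\beta$ with a sign flip so the inequality points the same way, and $\bar\sigma(X)<\gamma \iff [X^*\ I]\Pi_3[X;I]<0$; (2) rewrite each of the three conditions as "$[\,(j\omega I-A)^{-1}B;\,I\,]^*M_i[\,(j\omega I-A)^{-1}B;\,I\,]\le 0$ on the relevant curve", where $M_i=\mathrm{diag}(C,D;0,I)^*\Pi_i\,\mathrm{diag}(C,D;0,I)$ after substituting the transfer-matrix identity $G(j\omega)=C(j\omega I-A)^{-1}B+D$; (3) apply Lemma~\ref{lem:curves} to identify $j[0,\omega_c]=\Lambda(\Phi,\Psi_1)$ and $j[\omega_c,\infty)=\Lambda(\Phi,\Psi_2)$, and note $\Phi\otimes P_i+\Psi_1\otimes Q_i=\bigl[\begin{smallmatrix}-2Q_i & P_i+j\omega_c Q_i\\ P_i-j\omega_c Q_i & 0\end{smallmatrix}\bigr]$ and $\Phi\otimes P_3+\Psi_2\otimes Q_3=\bigl[\begin{smallmatrix}0 & P_3+jQ_3\\ P_3-jQ_3 & -2\omega_c Q_3\end{smallmatrix}\bigr]$, so the LMIs in \eqref{eq:thm4} are exactly the conclusions of Lemma~\ref{lem:generalizedKYP}; (4) invoke controllability of $[A\mid B]$ from minimality of the realization, and handle the excluded eigenvalue set $\Omega$ (the imaginary-axis poles $j\Omega_p$, which by hypothesis all lie in $(-\omega_c,\omega_c)$ hence on the first curve) by the same contour/semicircle deformation argument already used in the proof of Theorem~\ref{thm:mixture}, so that the "$\lambda\in\Lambda\setminus\Omega$" version of the KYP lemma delivers the sector/gain bounds on the full segments $[0,\omega_c]\setminus\Omega_p$ and $[\omega_c,\infty)$.

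I expect the main obstacle to be the careful bookkeeping at the boundary point $\omega_c$ and at the imaginary-axis poles $j\Omega_p$. The generalized KYP lemma as stated gives a non-strict inequality over a closed curve minus the eigenvalues on it, whereas the theorem asserts a \emph{strict} gain bound $\bar\sigma(G(j\omega))<\gamma$ on $[\omega_c,\infty)$ (including the shared endpoint $\omega_c$) and phase containment in the \emph{closed} sector $[\alpha,\beta]$ on $[0,\omega_c]\setminus\Omega_p$; reconciling strict-versus-non-strict across the two curves that meet at $\omega_c$, and arguing that semi-sectoriality plus the constant-rank/contour hypothesis lets the sector bound persist through the poles (as in Theorem~\ref{thm:mixture}'s proof), is the delicate part. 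A secondary technical point is justifying that $G$ being only semi-stable — with poles on $j\Omega_p$ — is compatible with applying a lemma phrased for a fixed matrix $A$: this is exactly why the realization is taken minimal and why $\Omega$ in Lemma~\ref{lem:generalizedKYP} is allowed to be nonempty, but one must check the direction "LMIs $\Rightarrow$ frequency conditions" also respects these excluded frequencies. Once these endpoint/pole issues are dispatched, the remaining work is the routine Kronecker-product identities and sign checks already laid out in step (3).
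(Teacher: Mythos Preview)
Your proposal is correct and follows essentially the same approach as the paper: three applications of the generalized KYP lemma (Lemma~\ref{lem:generalizedKYP}) with the curve parameters of Lemma~\ref{lem:curves}, one for each $\Pi_i$, together with the identification $M_i=\begin{bmatrix}C & D\\ 0 & I\end{bmatrix}^*\Pi_i\begin{bmatrix}C & D\\ 0 & I\end{bmatrix}$ via $G(s)=C(sI-A)^{-1}B+D$. Your step-by-step Kronecker-product bookkeeping and the translation of the sector bounds $\underline{\phi}\ge\alpha$, $\bar{\phi}\le\beta$ into the $\Pi_1,\Pi_2$-weighted quadratic forms are exactly what the paper's (rather terse) proof leaves implicit; your caution about the strict/non-strict mismatch at $\omega_c$ and the handling of $j\Omega_p$ is well placed and in fact goes beyond what the paper's proof spells out.
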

	\begin{proof}
		Apply Lemma~\ref{lem:generalizedKYP} three times with 
		$$\Phi_1=\begin{bmatrix}0 & 1 \\ 1 & 0\end{bmatrix},~\Psi_1=\begin{bmatrix}-2 & j\omega_c \\ -j\omega_c & 0\end{bmatrix},$$
		$$\Phi_2=\begin{bmatrix}0 & 1 \\ 1 & 0\end{bmatrix},~\Psi_2=\begin{bmatrix}-2 & j\omega_c \\ -j\omega_c & 0\end{bmatrix}$$	
		and
		$$\Phi_3=\begin{bmatrix}0 & 1 \\ 1 & 0\end{bmatrix},~\Psi_3=\begin{bmatrix}0 & j \\ -j & -2\omega_c\end{bmatrix},$$
		respectively. Together with Lemma~\ref{lem:curves}, we then obtain that the inequalities in \eqref{eq:thm4} are equivalent to that for $\lambda\in\Lambda(\Phi,\Psi)\setminus\Omega_p$ 
		$$\begin{bmatrix}
			(\lambda I-A)^{-1}B \\ I
		\end{bmatrix}^*\begin{bmatrix}C & D\\ 0 & I  \end{bmatrix}^*\Pi_i\begin{bmatrix}C & D\\ 0 & I  \end{bmatrix}\begin{bmatrix}
			(\lambda I-A)^{-1}B \\ I
		\end{bmatrix}\leq 0,~i=1,2,3,$$
		which are further equivalent, via $G(s) = C(sI-A)^{-1}B+D$, to that $\Psi(G(j\omega))\subset[\alpha,\beta]$ for $\omega \in [0,\omega_c]$ and that $\bar{\sigma}(G(j\omega))\leq\gamma$ for $\omega\in [\omega_c,\infty)$. 
	\end{proof}

	The theorem reduces to the bounded real lemma \cite[Chapter~12]{Zhou1998Essential} if $\omega_c\to 0$, or to the sectored real lemma \cite{wei2021phaseLTI} if $\omega_c\to \infty$.  
	
	Theorem~\ref{thm:boundedSectoredRealLemma} gives a state-space characterization of mixed gain/phase properties of LTI systems in terms of a triple of LMIs. By appropriately choosing parameters required by the generalized KYP lemma, we can extend the current result to incorporate more complicated gain/phase restrictions.

	\bibliography{mixgp}

\begin{thebibliography}{10}
\expandafter\ifx\csname url\endcsname\relax
  \def\url#1{\texttt{#1}}\fi
\expandafter\ifx\csname urlprefix\endcsname\relax\def\urlprefix{URL }\fi
\expandafter\ifx\csname href\endcsname\relax
  \def\href#1#2{#2} \def\path#1{#1}\fi

\bibitem{Owens1984NumericalR}
D.~H. Owens, The numerical range: {A} tool for robust stability studies?,
  Syst.~Contr.~Lett. 5~(3) (1984) 153--158.

\bibitem{JieChen1998gainphase}
J.~Chen, Multivariable gain-phase and sensitivity integral relations and design
  trade-offs, IEEE Trans.~Automat.~Contr. 43~(3) (1998) 373--385.

\bibitem{Tits1999RobustPhase}
A.~L. {Tits}, V.~{Balakrishnan}, L.~{Lee}, Robustness under bounded uncertainty
  with phase information, IEEE Trans.~Automat.~Contr. 44~(1) (1999) 50--65.

\bibitem{wei2021phaseLTI}
W.~Chen, D.~Wang, S.~Z. Khong, L.~Qiu, A phase theory of {MIMO LTI} systems,
  arXiv preprint arXiv:2105.03630 (2021).

\bibitem{chao2020nonlinear}
C.~Chen, D.~Zhao, W.~Chen, S.~Z. Khong, L.~Qiu, Phase of nonlinear systems,
  arXiv preprint arXiv:2012.00692 (2020).

\bibitem{Johnson1974209}
C.~R. DePrima, C.~R. Johnson, The range of {$A^{-1}A^*$} in
  $\mathcal{GL}(n,c)$, Linear Algebra and its Applications 9 (1974) 209 -- 222.

\bibitem{WANG2020PhaseMath}
D.~Wang, W.~Chen, S.~Z. Khong, L.~Qiu, On the phases of a complex matrix,
  Linear Algebra and its Applications 593 (2020) 152--179.

\bibitem{Zames1966SmallGain}
G.~Zames, On the input-output stability of time-varying nonlinear feedback
  systems--{Part II}: Conditions involving circles in the frequency plane and
  sector nonlinearities, IEEE Trans.~Automat.~Contr. 11~(3) (1966) 465--476.

\bibitem{balas1982}
M.~{Balas}, Trends in large space structure control theory: Fondest hopes,
  wildest dreams, IEEE Trans.~Automat.~Contr. 27~(3) (1982) 522--535.

\bibitem{gardiner1992stabilizing}
J.~D. Gardiner, Stabilizing control for second-order models and positive real
  systems, Journal of Guidance Control and Dynamics 15~(1) (1992) 280--282.

\bibitem{Hara2007EasyControl}
M.~Kanno, M.~Onishi, S.~Hara, Characterization of easily controllable
  continuous-time plants based on finite frequency phase/gain property, Journal
  of the Society of Instrument and Control Engineers 43~(10) (2007) 855--862.

\bibitem{davis1968shell}
C.~Davis, The shell of a {Hilbert} space operator, Acta Scientiarum
  Mathematicarum 29 (1968) 69--86.

\bibitem{lestas2012large}
I.~Lestas, Large scale heterogeneous networks, the {Davis--Wielandt} shell, and
  graph separation, SIAM J.~Contr.~Optim. 50~(4) (2012) 1753--1774.

\bibitem{Megretski1997IQC}
A.~{Megretski}, A.~{Rantzer}, System analysis via integral quadratic
  constraints, IEEE Trans.~Automat.~Contr. 42~(6) (1997) 819--830.

\bibitem{Hara2005GeneralizedKYP}
T.~Iwasaki, S.~Hara, Generalized {KYP} lemma: Unified frequency domain
  inequalities with design applications, IEEE Trans.~Automat.~Contr. 50~(1)
  (2005) 41--59.

\bibitem{Willems1972dissipativity}
J.~C. Willems, Dissipative dynamical systems part {I}: General theory,
  Arch.~Ration.~Mech.~Anal. 45 (1972) 321--351.

\bibitem{brian2007mix}
W.~Griggs, B.~Anderson, A.~Lanzon, A ``mixed'' small gain and passivity theorem
  in the frequency domain, Syst.~Contr.~Lett. 56~(9) (2007) 596--602.

\bibitem{Horn1959Unitary}
A.~Horn, R.~Steinberg, Eigenvalues of the unitary part of a matrix., Pacific
  Journal of Mathematics 9~(2) (1959) 541--550.

\bibitem{Furtado2001LAA}
S.~Furtado, C.~R. Johnson, Spectral variation under congruence for a
  nonsingular matrix with $0$ on the boundary of its field of values, Linear
  Algebra and its Applications.

\bibitem{Zhou1998Essential}
K.~Zhou, J.~C. Doyle, Essentials of Robust Control, Upper Saddle River, NJ:
  Prentice Hall, 1998.

\bibitem{Khong2016TAC}
S.~Z. Khong, E.~Lovisari, A.~Rantzer, A unifying framework for robust
  synchronization of heterogeneous networks via integral quadratic constraints,
  IEEE Trans.~Automat.~Contr. 61~(5) (2016) 1297--1309.

\bibitem{Khong2018robust}
S.~Z. Khong, I.~R. Petersen, A.~Rantzer, Robust stability conditions for
  feedback interconnections of distributed-parameter negative imaginary
  systems, Automatica 90 (2018) 310--316.

\bibitem{Skelton1980AM}
P.~C. Hughes, R.~E. Skelton, {Controllability and Observability of Linear
  Matrix-Second-Order Systems}, Journal of Applied Mechanics 47~(2) (1980)
  415--420.

\bibitem{binding1985}
P.~Binding, Hermitian forms and the fibration of spheres,
  Proc.~Amer.~Math.~Soc. 94 (1985) 581–584.

\bibitem{yakubovich1971}
V.~A. Yakubovi{\v c}, S-Procedure in Nonlinear Control Theory, Vol.~1,
  Leningrad, Russia: Vestnick Leningrad Univ., 1971.

\end{thebibliography}
\end{document}